\theoremstyle{plain}
\newtheorem{theorem}{Theorem}[section]
\newtheorem{lemma}[theorem]{Lemma}
\newtheorem{corollary}[theorem]{Corollary}
\theoremstyle{definition}
\newtheorem{definition}[theorem]{Definition}
\newtheorem{assumption}{Assumption}
\theoremstyle{remark}
\newtheorem{remark}{Remark}
\def\paragraph#1{\noindent \textbf{#1}}
\numberwithin{equation}{section}
\def\dd{\mathrm{d}}
\def\a{\alpha}
\def\b{\beta}
\def\d{\delta}
\def\ve{\varepsilon}
\def\g{\gamma}
\def\l{\lambda}
\def\s{\sigma}
\def\t{\tau}
\def\O{\Omega}
\def\S{\Sigma}
\def\R{{\Bbb R}} 
\def\N{{\Bbb N}}
\def\P{{\Bbb P}}
\def\E{{\Bbb E}}
\def\T{{\Bbb T}}
\let\cal=\mathcal
\def\LL{{\cal L}}
\def\MM{{\cal M}}
 \def \h {{\eta}}
\def\eps{\varepsilon}
\def\v{\mathbf{v}}
\def\ee{\mathrm{e}}
\def\lalpha{{\lfloor\alpha\rfloor}}
\newcommand{\ifct}[1]{\mathbbm{1}_{ {#1} }}
\newcommand{\Exd}[1]{\mathbb{E}\left[ {#1}\right]}
\newcommand{\Prob}[1]{\mathbb{P}\left({#1}\right)}
\newcommand{\abs}[1]{\left\lvert#1\right\rvert}
\newcommand{\norm}[1]{\left\lVert#1\right\rVert}
\newcommand{\dset}[1]{\left\{ {#1} \right\}}
\newcommand{\gauss}[1]{{\left\lfloor{#1}\right\rfloor}}
\newcommand{\lK}{{\lambda_K}}
\newcommand{\ext}{{\mathrm{ext}}}
\newcommand{\av}{{\mathrm{av}}}
\newcommand{\dangle}[1]{{\left\langle {#1} \right\rangle}}
\begin{document}

\title{Effective growth rates in a periodically changing environment: From mutation to invasion}
\author[1]{Manuel Esser$^1$}
\author[2]{Anna Kraut$^{2,3}$}
\address[1]{Institut f\"ur Angewandte Mathematik, Rheinische Friedrich-Wilhelms-Universit\"at, Endenicher Allee 60, 53115 Bonn, Germany}
\email{manuel.esser@uni-bonn.de}
\address[2]{School of Mathematics, University of Minnesota - Twin Cities, 206 Church St SE, Minneapolis, MN 55455, USA}
\address[3]{Department of Mathematics, Statistics, and Computer Science, St.\ Olaf College, 1520 St.\ Olaf Avenue, Northfield, MN 55057, USA}
\email{kraut1@stolaf.edu}


\thanks{This work was partially supported by the Deutsche Forschungsgemeinschaft (DFG, German Research Foundation) under Germany's Excellence Strategy GZ 2047/1, Projekt-ID 390685813 and through Project-ID 211504053 - SFB 1060. The authors thank Anton Bovier for stimulating discussion, feedback on the manuscript, and for facilitating visits of A.\ Kraut to Bonn to work on this project. The authors thank the anonymous referees for their comments and questions that helped to improve the manuscript.}


\begin{abstract}
We consider a stochastic individual-based model of adaptive dynamics for an asexually reproducing population with mutation, with linear birth and death rates, as well as a density-dependent competition.  To depict repeating changes of the environment, all of these parameters vary over time as piecewise constant and periodic functions, on an intermediate time-scale between those of stabilization of the resident population (fast) and exponential growth of mutants (slow). Studying the growth of emergent mutants and their invasion of the resident population in the limit of small mutation rates for a simultaneously diverging population size, we are able to determine their effective growth rates. We describe this growth as a mesoscopic scaling-limit of the orders of population sizes, where we observe an averaging effect of the invasion fitness.  Moreover, we prove a limit result for the sequence of consecutive macroscopic resident traits that is similar to the so-called trait-substitution-sequence.
\end{abstract}

\maketitle

\section{Introduction}
	
	Mathematical approaches to understanding the long-term evolution of populations have a long history and can even be traced back to ideas of Malthus in 1798 \cite{Malt1798}. The study of heterogeneous populations is of particular interest as it allows to analyse the diversity and the interaction of species as they adapt over time.
	The driving mechanisms, ecology and evolution, which are addressed by models of adaptive dynamics, may strongly depend on the environment a population is living in. Since realistically this environment cannot be assumed to stay constant over time, we study the effects of periodic changes of the model parameters in this paper. From an application point of view, this is for example motivated by the seasonal changes of temperature, humidity, accessibility of nutrition and other resources, which may effect the fertility of individuals and thus directly have an impact on the population's growth \cite{Ewing16,LitKla01}. Another interesting example is pulsed drug-based therapy for infectious diseases or cancer. Dependent on the treatment protocol, the concentrations of drugs may vary over time. Assuming a regular supply of drug, this can be described by periodic changes, leading to varying reproduction rates of the pathogen.
	
	While in both of these cases the population's dynamics are directly affected by the environmental changes on a short time-scale, it is reasonable to expect some averaging and thus a macroscopic trend of growth or shrinking on a larger time-scale. This averaging effect is what we study rigorously in the present paper. Starting from a model that describes the individuals' dynamics on a microscopic level, i.e.\ taking into account interactions between single individuals, we derive results for the effective mesoscopic growth rates of subpopulations of intermediate size, i.e.\ that consist of a larger number of individuals but are still negligible with respect to the total population size. Moreover, we give an explicit description of the macroscopic evolution of the whole population process, tracing the evolution of the dominating trait, in the large population limit. A crucial aspect in this is to understand under which conditions we can observe the emergence and growth of new types or even the replacement of resident traits by fitter mutants.
	
	We consider a variation of the stochastic individual-based model of adaptive dynamics that has been introduced by Fournier and Méléard \cite{FoMe04} and since then was studied for a broad spectrum of scaling limits and model extensions (see e.g.\ \cite{Bov21} for an overview and \cite{BaBo18,Sma17,NeuBo17,BoCoNeu18,BlaTob20}). Its aim is to study the interplay of ecology and evolution, i.e.\ both the short-term effects of competitive interactions of different subpopulations and the long-term effects of occurrence and fixation of new mutant species. Since our interest lies in analysing the effects of time-dependent changes of ecological parameters on the long-term evolution of a population, these models of adaptive dynamics are naturally helpful.
	
	As one of the first results on the individual-based model, Champagnat was able to show that certain assumptions on the scaling of large populations and very rare mutations lead to a separation of the time-scales of ecology and evolution, which is a fundamental principle of adaptive dynamics. Under the aforementioned assumptions, Champagnat derived convergence to the trait-substitution-sequence (TSS) \cite{Cha06} and, together with M\'el\'eard, the polymorphic-evolution-sequence (PES) \cite{ChMe11}. On an accel\-erated time-scale, these sequences describe how the macroscopic population essentially jumps between (monomorphic or polymorphic) equilibria of different Lotka-Volterra systems.
	A broader spectrum of more frequent mutations was investigated by Bovier, Coquille, and Smadi for a simple trait space with a valley in the fitness landscape \cite{BoCoSm19}. This work laid the basis for the more general study of moderately rare mutations in \cite{CoKrSm21}, under collaboration of Kraut. The latter provides both the description of a macroscopic limit process, which consists of (deterministic) jumps between equilibrium states, as well as a mesoscopic limit result for the growth and decline of all subpopulations, observable on a logarithmic time-scale.
	
	Despite the variety of different scenarios that have already been analysed, all of these previous works ask for the parameters of the population process to be constant over time. In the present paper, we break with this assumption and allow for periodic parameter changes. As before, we study the limit of a diverging carrying capacity $K\nearrow\infty$, which scales the order of the total population's size, and choose moderate mutation probabilities $\mu_K\searrow0$. In addition, we introduce a finite number of parameter constellations, which repeat periodically on time intervals (phases) with fixed length of order $\lK$, to model a changing environment.  These parameter constellations vary the individual birth, death, and competition rates, which in particular determine the fitness, or growth rate, of the different subpopulations. Consequently, both the sign of the fitness, resulting in growth or shrinkage of the subpopulations, and the fitness relations between different types may change between phases.
	
	We choose an intermediate time-scale of $1\ll\lK\ll\ln K$ for these environmental changes. As a result, the environment stays stable enough for the macroscopic resident population to adapt to it in between parameter changes, but changes occur fast enough to influence the growth of the micro- and mesoscopic subpopulations in between invasions. Under these assumptions, we can observe an averaging effect on the level of mutant growth. Similarly to \cite{CoKrSm21}, we prove a mesoscopic convergence result for the orders of population sizes $K^\b$ of all subpopulations. In the limit of $K\to \infty$, on the time-scale $\ln K$ of exponential growth, these exponents $\b$ converge to deterministic, piecewise affine functions $\bar{\b}$ that can be described by a recursive algorithm. The slopes of these functions are determined by the effective (time-average) fitness of the subpopulations. Based on this mesoscopic characterisation, we further derive a substitution-sequence on the same time-scale, des\-cribing the macroscopic jumps of the population between successive resident traits.
	
	The fact that the environmental parameters now change on an intermediate time-scale at a first glance seems to be only a small variation of the former models. However, a couple of non-trivial difficulties arise in all parts of the established proof strategies:
	First of all, since time spans of order $\ln K$ consist of asymptotically infinitely many parameter phases that need to be concatenated, the way in which a large deviation principle is usually applied for these types of processes to ensure stability of the resident population in between invasions (see e.g.\ \cite{Cha06}) is not sufficient. To obtain a quantification of the speed at which the probability of exit from a domain within a $\l_K$-time span tends to zero, we instead use potential theoretic arguments similar to Baar, Bovier, and Champagnat \cite{BaBoCh17}.  Moreover, to take care of the short $O(1)$ times of re-equilibration after a parameter change, we study the speed of convergence in the standard convergence result of Ethier and Kurtz \cite{EtKu86}.
	
	Second, we need to extend the general growth results for branching processes (with immigration) of Champagnat, M\'el\'eard and Tran \cite{ChMeTr19} to periodically changing para\-meters. This in particular requires a careful consideration of small populations, i.e.\ in the case of extinction or a newly emerging mutant population. Here we study the distribution function of the extinction time, making use of estimates on the probability generating function of Galton-Watson processes.
	
	Finally, we need to carefully consider the event of a mutant population becoming macroscopic. Here we need to choose the stopping time, after which we start the comparison to the deterministic system, such that the invading mutant is guaranteed to be in a phase with positive invasion fitness and successfully fixate as the new resident trait within a time of order 1.
	
	Changing environmental parameters have been previously introduced to a number of mathematical population models. While we cannot give an extensive review here, we want to mention a few examples. One popular scenario is that of a shifting optimal trait (mostly in deterministic ODE or PDE models), where the fitness of all other traits depends only on the distance to the current optimum \cite{JaDe18,RoPaBoMa20,GaCo23}. 
	A common observation is the importance of the relation between the time-scales of environmental shifts and trait changes (mutation speed and step size), which determines whether the population can successfully adapt or not.
	A scenario similar to the one of the present paper is that of periodically changing environments, in both deterministic and stochastic models \cite{MuYo20,TaWeAsMo20,BuFr22}. Here, previous studies have focussed on the dynamics of a fixed (usually small) number of competing traits without mutation.
	 As we observe in this paper, time-scales again play a crucial role, where all of the above works find that sufficiently fast fluctuations lead to the population evolving according to time-averaged effective parameters. 
	 Other questions have been addressed by various authors, for example the dynamics of phenotypic switching and dormancy for non-competitive multi-type systems in more or less randomly fluctuating environment \cite{KuLei05,MaSm08,DoMaBa11,JoWa14,BlHeSl21}. To the best of our knowledge, the dynamics of stochastic models with periodically changing environment, general fitness landscapes, and newly emerging mutant types are still an open problem.
	 
	 Similar to some of these approaches it will be interesting to extend the adaptive dynamics model of this paper to more generally changing parameters. Modelling the environmental parameters as continuous functions or as a stochastic process itself, where jump times and transitions are random, can allow for a more realistic depiction of biological scenarios that either only change gradually or less regularly. Our results in this manuscript are meant as a first step to establish techniques of how to study this new class of models and lay the basis for future research that is already in progress. 
	
	The remainder of this paper is structured as follows. In Section \ref{sec:2.1_model}, the individual-based model for a population in a time-dependent environment is introduced rigorously. We point out some key quantities, like equilibrium states and invasion fitness, in Section \ref{sec:2.2_quantities}. Finally, we describe the behaviour of the limit process in terms of an inductive algorithm in Section \ref{sec:2.3_results} and state our main convergence results. Chapter \ref{sec:3_heuristics_discussion} provides a discussion of the general heuristics and the necessity of some of our assumptions. Moreover, we give an outlook on possible extensions of this approach. The proofs of the main results of this paper can be found in Chapter \ref{sec:4_proofs}. The technical results on birth death processes, which lay the basis for these proofs, are discussed in Appendices \ref{app:A_attraction_equil}, \ref{app:B_branching_processes}, and \ref{app:C}.


\section{Model and Main Results}

\subsection{Individual-based model in a time-dependent evironment}
\label{sec:2.1_model}
We consider a popu\-lation  that is composed of a finite number of asexual reproducing individuals. Each of them is characterized by a genotypic trait, taken from a finite trait space that is given by a (possibly directed) graph $G=(V,E)$. Here, the set of vertices $V$ represents the possible traits that individuals can carry. The set of edges $E$ marks the possibility of mutation between traits. We start out with a microscopic, individual-based model with logistic growth.

To extend the basic model to one with a periodically changing environment, we consider a finite number $\ell\in\N$ of phases. For each phase $i=1,\cdots,\ell$ and all traits $v,w\in V$, we  introduce the following biological parameters:
\begin{itemize}
	\item $b^i_v\in\R_+$, the \textit{birth rate} of an individual of trait $v$ during phase $i$,
	\item $d^i_v\in\R_+$, the \textit{(natural) death rate} of an individual of trait $v$ during phase $i$,
	\item $c^i_{v,w}\in\R_+$, the \textit{competition} imposed by an individual of trait $w$  onto an individual of trait $v$ during phase $i$,
	\item $K\in\N,$ the \textit{carrying capacity} that scales the environment's capacity to support life,
	\item $\mu_K\in[0,1]$, the \textit{probability of mutation} at a birth event (phase-independent),
	\item $m_{v,\cdot}\in\MM_p(V)$, the \textit{law of the trait of a mutant} offspring produced by an individual of trait $v$ (phase-independent).
\end{itemize}

To ensure logistic growth and ensure the possibility of mutation only along the edges of our trait graph, we make the following assumptions on our parameters.
\begin{assumption}\label{ass:selfcomp}
	\begin{enumerate}[(a)]
		\item For every $v\in V$ and $i=1,\cdots\ell$, $c^i_{v,v}>0$. 
		\item $m_{v,v}=0$, for all $v\in V$, and $m_{v,w}>0$ if and only if $(v,w)\in E$. 
	\end{enumerate}
\end{assumption}

Rescaling the competition by $K$ (cf.\ \eqref{eq:time_dep_generator} below) leads to a total population size of order $K$. We adapt the following terminology: As $K\to\infty$, subpopulations of certain traits are referred to as
\begin{itemize}
\item \textit{microscopic} if they are of order 1,
\item \textit{macroscopic} if they are of order $K$,
\item \textit{mesoscopic} if they are of order strictly between 1 and $K$.
\end{itemize}

For a new mutant, reaching a macroscopic population size through exponential growth takes a time of order $\ln K$.
For a resident population, it takes a time of order $1$ to reach a small neighbourhood of its new equilibrium after an environmental change. 
In order for environmental changes to happen slow enough such that the resident populations can adapt, but fast enough such that they influence the growth of mutants, we choose
\begin{align}
	1\ll\lK\ll\ln K
\end{align}
as an intermediate time-scale for the length of the $\ell$ phases. For each $i=1,\ldots,\ell$, we assume that the $i$-th phase has length $T_i\lK$, where $T_i>0$. To refer  to the endpoints of the phases, we define $T^\Sigma_j:=\sum_{i=1}^{j}T_i$.

Building on this, we define the time-dependent birth-, death-, and competition rates as the periodic extension of
\begin{align}
	b_v^K(t)=\sum_{i=1}^{\ell}\ifct{t\in [T^\Sigma_{i-1}\lK,T^\Sigma_i\lK)}b^i_v,
\end{align}
and analogously for $d_v^K(t)$ and $c_{v,w}^K(t)$. Note that $b^i_v$ and $b^K_v$ are very similar in notation. To make the distinction clear, we always use the upper index $i$ to refer to the constant parameter in phase $i$ and the index $K$ to refer to the time-dependent parameter function for carrying capacity $K$, and use the same convention in comparable cases.

For any $K$, the evolution of the population over time is described by a Markov process $N^K$ with values in $\mathbb{D}(\R_+,\N_0^V)$. $N^K_v(t)$ denotes the number of individuals of trait $v\in V$ that are alive at time $t\geq 0$. The process is characterised by its infinitesimal generator
\begin{align}\label{eq:time_dep_generator}
	\left(\LL_t^K\phi\right)(N)=&\sum_{v\in V}(\phi(N+\d_v)-\phi(N))\left(N_vb_v^K(t)(1-\mu_K)+\sum_{w\in V}N_wb_w^K(t)\mu_Km_{w,v}\right)\notag\\
	&+\sum_{v\in V}(\phi(N-\d_v)-\phi(N))N_v\left(d_v^K(t)+\sum_{w\in V}\frac{c_{v,w}^K(t)}{K}N_w\right),
\end{align}
where $\phi:\N_0^V\to\R$ is measurable and bounded and $\d_v$ denotes the unit vector at $v\in V$. The process can be constructed algorithmically following a Gillespie algorithm \cite{Gill76}. Alternatively, the process can be represented via Poisson measures (see \cite{FoMe04}), a representation that is used in the proofs of this paper.

\subsection{Important quantities}
\label{sec:2.2_quantities}
In this paper we study the typical behaviour of the processes $(N^K,K\in\N)$ for large populations, i.e.\ as $K\to\infty$. A classical law of large numbers \cite{EtKu86} states that the rescaled processes $N^K/K$ converge on finite time intervals to the solution of a system of Lotka-Volterra equations. The study of these equations is central to determine the short term evolution of the processes $N^K$.

\begin{definition}[Lotka-Volterra system]
	For a phase $i\in\dset{1,\dots\ell}$ and a subset $\v\subseteq V$, we refer to the following differential equations as the corresponding \textit{Lotka-Volterra system}:
	\begin{align}\label{LV_system}
		\dot{n}_v(t)= \left( b^i_v - d^i_v - \sum_{w \in \v} c^i_{v,w} n_w(t) \right) n_v(t), \quad v\in\v,\ t\geq0
	\end{align}
\end{definition}

In this work, we focus on the case of a sequence of monomorphic resident traits, meaning that, apart from the invasion phases, only one single (fit) subpopulation is of macroscopic size and fluctuates around its equilibrium size. This monomorphism is ensured by a termination criterion in the construction of the limiting process for our main Theorem \ref{thm:conv_beta} (criterion (d), see also Remark \ref{rem:terminationcrit}). Taking into account the phase-dependent parameters, we denote these \textit{monomorphic equilibria} by $\bar{n}^i_v:=(b^i_v-d^i_v)/c^i_{v,v}$.

Talking about evolution, the most important quantity is fitness. For the individual-based model of adaptive dynamics, the notion of invasion fitness has been shown to be useful. It describes the approximate growth rate of a small population of trait $w$ in a bulk population of trait $v$ in the mutation-free system. To adapt it to the present setting we have to include the phase-dependence.
\begin{definition}[Invasion fitness]
	For each phase $i\in\dset{1,\cdots,\ell}$ and for all traits $v,w\in V$ such that the equilibrium size of $\bar{n}^i_v$ is positive, we denote by
	\begin{align}
		f^i_{w,v}:=b^i_w-d^i_w-c^i_{w,v}\bar{n}^i_v
	\end{align}
	the \textit{invasion fitness} of trait $w$ with respect to the monomorphic resident $v$ in the $i$-th phase. Moreover, we define the time-dependent fitness and the average fitness by the periodic extension of
	\begin{align}
		f^K_{w,v}(t):=\sum_{i=1}^{\ell}\ifct{t\in [T^\Sigma_{i-1}\lK,T^\Sigma_i\lK)}f^i_{w,v} \quad\text{and}\quad
		f^{\text{av}}_{w,v}:=\frac{\sum_{i=1}^\ell T_i f^i_{w,v}}{T^\S_\ell}.
	\end{align}
\end{definition}

Let us now consider multi-step mutations arising along paths within the trait graph $G=(V,E)$. We introduce the graph distance between two vertices $v,w\in V$ as the length of the shortest (directed) connecting path
\begin{align}
	d(v,w):=\min_{\g:v\to w}\abs{\g},
\end{align}
where we use the convention that the minimum over an empty set is $\infty$. Note that $d(v,w)$ is not a distance in the classical sense, as it may not be symmetric in the case of a directed graph.

Since a single birth event causes a mutation with probability $\mu_K$, a macroscopic trait $v$ (size of order $K$) produces subpopulations of a size of order $K\mu_K$ of its neighbouring traits. These traits themselves produce subpopulations of a size of order $K\mu_K^2$ of second order neighbours of $v$. In general, $v$ induces mutant populations of trait $w$ of size of order $K\mu_K^{d(v,w)}$. We study mutation probabilities of the form 
\begin{align}
	\mu_K=K^{-1/\a},\qquad\a\in\R_{>0}\setminus\N.
\end{align}
As a consequence all traits at a distance of at most $\lalpha$ have a size that is non-vanishing for increasing $K$, which means that they can survive. For technical reasons, we exclude $\a\in\N$, see the discussion in Section \ref{sec:3_heuristics_discussion}.

\subsection{Results}
\label{sec:2.3_results}
	The main result of this paper gives a precise description of the orders of the different subpopulation sizes as $K$ tends to infinity. It is convenient to describe the population size of a certain trait $v\in V$ at time $t$ by its $K$-exponent, which is given by the following definition.
	
	\begin{definition}[Order of the population size]
		For all $v\in V$ and all $t\geq 0$, we set
		\begin{align}\label{eq:def_betaK}
			\b^K_v(s):=\frac{\ln(N_v^K(s\ln K)+1)}{\ln K}.
		\end{align}
	\end{definition}
	Note that adding one inside the logarithm is only done to ensure that $\b^K_v(s)=0$ is equivalent to $N^K_v(s\ln K)=0$. Before we state the result below, let us describe the limiting functions $(\bar{\b}_v,v\in V)$. We can define these trajectories up to a stopping time $T_0$ by the following inductive procedure:
	
	Let $v_0\in V$ be the initial macroscopic trait. For simplicity, we assume that the initial orders of population sizes converge in probability to $\bar{\b}_w(0)$ satisfying the constraints
		\begin{align}
			\bar{\b}_w(0)&=\max_{\substack{u\in V}}\left[\bar{\b}_u(0)-\frac{d(u,w)}{\alpha}\right]\lor 0,\label{eq:result_initialcondition}\\
			\bar{\b}_w(0)&=1\ \Leftrightarrow\ w=v_0\label{eq:result_initialcondition_mono}.
		\end{align}
	The increasing sequence of invasion times is denoted by $(s_k)_{k\geq0}$, where $s_0:=0$ and, for $k\geq1$,
	\begin{align}\label{eq:def_s_k}
		s_k:=\inf\{t>s_{k-1}:\exists\ w\in V\backslash v_{k-1}:\bar{\b}_w(t)=1\}.
	\end{align}

	Moreover, we set $v_k$ to be the trait $w\in V\backslash v_{k-1}$ that satisfies $\bar{\b}_w(s_k)=1$, which we assume to be unique in order to proceed (cf. termination criteria below).
	
	For $s_{k-1}\leq t\leq s_k$, for any $w\in V$, $\bar{\b}_w(t)$ is defined by
	\begin{align}\label{eq:def_betabar}
		\bar{\b}_w(t):=\max_{\substack{u\in V}}\left[\bar{\b}_u(s_{k-1})+(t-t_{u,k})_+f^\av_{u,v_{k-1}}-\frac{d(u,w)}{\alpha}\right]\lor 0,
	\end{align}
	where, for any $w \in V$,
	\begin{align} \label{eq:def_t_wk}
		t_{w,k}:=\begin{cases}\inf\{t\geq s_{k-1}:\exists\ u\in V: d(u,w)=1, \bar{\b}_u(t)=\frac{1}{\alpha}\}
			&\text{if }\bar{\b}_w(s_{k-1})=0\\s_{k-1}&\text{else}\end{cases}
	\end{align}
	is the first time in $[s_{k-1},s_k]$ when this trait arises.
	
\begin{remark}\label{rem:betabar}
The formula in \eqref{eq:def_betabar} can heuristically be explained as follows: From time $s_{k-1}$ on, on the $\ln K$-time-scale, every living trait $u$ (i.e.\ traits such that $t\geq t_{u,k}$) grows/shrinks at least at the rate of its own fitness $f^\av_{u,v_{k-1}}$, which would yield $\bar{\beta}_u(t)\approx \bar{\b}_u(s_{k-1})+(t-s_{k-1})f^\av_{u,v_k}$. On top of this, every living trait spreads a $\mu_K=K^{-1/\alpha}$ portion of its population size to its neighbouring traits through mutation. These then pass on a $\mu_K^2$ portion to the second order neighbours and so on. Overall, a trait $w\in V$ receives a $\mu_K^{d(u,w)}=K^{-d(u,w)/\alpha}$ portion of incoming mutants from all living traits $u$, and its actual population size can hence be determined by taking the leading order term, i.e.\ the maximum of all these exponents $\bar{\b}_u(s_{k-1})+(t-s_{k-1})f^\av_{u,v_k}-d(u,w)/\alpha$.
\end{remark}
	
	The stopping time $T_0$, that terminates the inductive construction of the limiting trajectories, is set to $s_k$ if
	\begin{enumerate}[(a)]
		\item there is more than one $w\in V\backslash v_{k-1}$ such that $\bar{\b}_w(s_k)=1$;
		\item there exists $w\in V\backslash v_{k-1}$ such that $\bar{\b}_w(s_k)=0$ and $\bar{\b}_w(s_k-\eps)>0$ for all $\eps>0$ small enough;
		\item there exists $w\in V\backslash v_{k-1}$ such that $s_k=t_{w,k}$;
		\item there exists an $i\in\dset{1,\ldots,\ell}$ such that either $f^i_{v_{k-1},v_k}\geq 0$ or $f^i_{v_k,v_{k-1}}=0$;
		\item there exists an $i\in\dset{1,\ldots,\ell}$ such that  $b^i_{v_k}-d^i_{v_k}\leq 0$.
	\end{enumerate}
	
	\begin{remark}\label{rem:terminationcrit}
		Note that conditions (a)-(c) are purely technical (cf.\ \cite{CoKrSm21}).
		The first part of condition (d) is a sufficient criterion to ensure the principle of \textit{invasion implies fixation}, i.e.\ any mutant trait that reaches a macroscopic populations size replaces the former resident trait and there is no coexistence. The criterion is not necessary and there are other possible scenarios where the invading mutant replaces the resident population (see discussion in Chapter \ref{sec:3_heuristics_discussion}). The second part is again technical and ensures that the $\ve K$-threshold (needed for the approximations by birth death processes) is reached at a time when invasion can take place in finite time, i.e.\ the comparision to the deterministic Lotka-Volterra system is possible (cf.\ the classical result in \cite{EtKu86}) The last condition (e) ensures that the new resident possesses a strictly positive monomorphic equilibrium $\bar{n}^i_{v_k}$ in all phases.
	\end{remark}
	
	\begin{theorem}[Convergence of $\b$]
		\label{thm:conv_beta}
		Let a finite graph $G=(V,E)$ and $\a\in\R_{>0}\setminus\N$ be given and consider the model defined by \eqref{eq:time_dep_generator}. Let $v_0\in V$ and
		assume that, for every $w\in V$, $\b^K_w(0)\to \bar{\b}_w(0)$ in probability, as $K\to\infty$, where the limits satisfy \eqref{eq:result_initialcondition} and \eqref{eq:result_initialcondition_mono}.
		Then, for all fixed $0\leq T\leq T_0$, the following convergence holds in probability, with respect to the $L^\infty([0,T],\R^V_{\geq 0})$ norm
		\begin{align}
			(\b_w^K(t), w\in V)_{t\in [0,T]}\overset{K\to\infty}{\longrightarrow}(\bar{\b}_w(t), w\in V)_{t\in [0,T]},
		\end{align}
		 where $\bar{\b}_w$ are the deterministic, piecewise affine, continuous functions defined in \eqref{eq:def_betabar}.
	\end{theorem}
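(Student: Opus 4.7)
The plan is to proceed by induction on the invasion epochs $k\geq 0$, simultaneously tracking (i) stability of the macroscopic resident $v_{k-1}$ near its phase-dependent monomorphic equilibria $\bar n^i_{v_{k-1}}$, and (ii) the mesoscopic exponents $\b^K_w$ of all other subpopulations. The base case $k=0$ is handled by assumption on the initial condition. For the inductive step on $[s_{k-1},s_k]$, one freezes the resident near $\bar n^{i}_{v_{k-1}}$ during phase $i$ and couples every other subpopulation $N^K_w$ from above and below with linear birth--death branching processes with piecewise constant rates and immigration fed by the neighbouring traits. A union bound over the finite set $V$ then transfers the convergence of each of these auxiliary processes to a joint statement.

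The resident stability requires bounds valid on the whole $\ln K$-window $[s_{k-1}\ln K,s_k\ln K]$, which is of order $O(\ln K/\lK)\to\infty$ phases long. On each individual phase one applies Ethier--Kurtz to absorb the $O(1)$ transient that follows a parameter jump. However, the classical large-deviation exit time $e^{cK}$ is not the right currency here; instead one needs a polynomial-in-$K$ quantification of the exit probability from a neighbourhood within a single $\lK$-block, strong enough to survive concatenation over $\Theta(\ln K/\lK)$ blocks. I would get this from the potential-theoretic capacity bounds of Baar--Bovier--Champagnat \cite{BaBoCh17} adapted to the phase-dependent drift.

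The mesoscopic exponents rest on extending the growth estimates of Champagnat--M\'el\'eard--Tran \cite{ChMeTr19} to periodically varying rates. For a living trait $u$ (meaning $t\ln K\geq t_{u,k}\ln K$), the upper and lower branching-process couplings each multiply the population by $\exp(f^{i}_{u,v_{k-1}}T_i\lK)$ during the $i$-th phase-block of length $T_i\lK$; since any macro-time increment $\Delta s>0$ contains $\Delta s\ln K/(T^\S_\ell\lK)\to\infty$ complete periods, the logarithm divided by $\ln K$ concentrates at $\Delta s\cdot f^\av_{u,v_{k-1}}$, which is precisely the slope appearing in \eqref{eq:def_betabar}. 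The immigration input from a neighbouring living trait $u$ feeds $w$ at rate $K\mu_K=K^{1-1/\a}$ times the normalized size, yielding the $-d(u,w)/\a$ shifts upon iteration along paths in $G$. Emergence times $t_{w,k}$ are identified through generating-function estimates for Galton--Watson processes with immigration, controlling when the first successful founding family of $w$ appears; the exclusion $\a\notin\N$ keeps all exponents away from the critical size $1$ and guarantees a clean separation of orders. A Gronwall-type comparison then propagates the pointwise max formula \eqref{eq:def_betabar} uniformly on $[s_{k-1},s_k]$.

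At $s_k$, the first index $v_k$ reaches order $K$, but potentially during a phase $i$ with $f^i_{v_k,v_{k-1}}\leq 0$. One cannot apply the Ethier--Kurtz comparison to the two-dimensional Lotka--Volterra system \eqref{LV_system} directly; instead one defers the analysis by at most one full period $T^\S_\ell\lK$ until a phase with $f^i_{v_k,v_{k-1}}>0$ and $f^i_{v_{k-1},v_k}<0$ occurs, both guaranteed by criterion (d). During that phase the deterministic system drives $N^K_{v_{k-1}}/K$ to $0$ and $N^K_{v_k}/K$ to $\bar n^{i}_{v_k}>0$ in $O(1)$ time, after which criterion (e) reinstates a strictly positive monomorphic equilibrium for $v_k$ in every phase, and the induction restarts. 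I expect the main obstacle to be precisely the joint handling of the two scales $1\ll\lK\ll\ln K$: obtaining stochastic bounds on the branching-process approximations that are simultaneously tight enough to survive $\Theta(\ln K/\lK)$ phase concatenations and loose enough to average cleanly to $f^\av_{u,v_{k-1}}$, producing only $o(\ln K)$ deviations in the log-exponents. This is the step where the constant-environment strategy of \cite{CoKrSm21} genuinely breaks down and the potential-theoretic stability bound must be fused with a periodically averaged version of the Champagnat--M\'el\'eard--Tran estimate.
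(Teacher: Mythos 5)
Your proposal mirrors the paper's own strategy very closely: induction on invasion epochs $k$, potential-theoretic resident stability concatenated over $\Theta(\ln K/\lK)$ phases with Ethier--Kurtz absorbing the $O(1)$ transients, birth--death couplings with immigration yielding the averaged fitness $f^\av$ as the effective slope, and Galton--Watson generating-function bounds for extinction and emergence. However, there is a genuine gap in your handling of the moment when $v_k$ becomes macroscopic during a possibly unfit phase.

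You propose to ``defer the analysis by at most one full period $T^\S_\ell\lK$'' until a phase $i$ with $f^i_{v_k,v_{k-1}}>0$ and $f^i_{v_{k-1},v_k}<0$ begins, and then apply the Ethier--Kurtz comparison for the two-type Lotka--Volterra system. The problem is what happens to $N^K_{v_k}$ during the deferral. Each unfit sub-interval has length of order $\lK$, during which the mutant population decays by a factor of order $\exp(-|f^j_{v_k,v_{k-1}}|T_j\lK)$, and this factor tends to $0$ as $K\to\infty$. Consequently, at the \emph{start} of the next fit phase the rescaled population $N^K_{v_k}/K$ may be $o(1)$, and the Lotka--Volterra comparison --- which requires an initial condition bounded away from $0$ when rescaled by $K$ to drive the system to its monomorphic equilibrium in $O(1)$ time --- no longer applies. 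The paper resolves this via three nested stopping times $\tilde{R}^K_{k+1}\leq\check{R}^K_{k+1}\leq R^K_{k+1}$: it records when $N^K_{v_{k+1}}$ first crosses $\eps_k^2K$, then waits until the cumulative integral $\int_{\tilde{R}^K_{k+1}}^{t}f^{(K,\eps_k,-)}_{v_{k+1},v_k}(s)\,\dd s$ becomes strictly positive. Because $f^\av_{v_{k+1},v_k}>0$, this happens within one period; because the integrand is piecewise constant, positivity of the integral is attained at a time when the instantaneous fitness is itself positive; and the positivity of the integral is exactly what guarantees, via Corollary \ref{cor:growth_in_lambda_time}, that the population has regrown to order $\eps_k^2K$ at that time. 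Your ``wait until a fit phase'' criterion must be replaced by this ``wait until the fitness integral is positive'' criterion; the two differ, and the difference matters precisely because $\lK\to\infty$ makes the intermediate dip macroscopic on the exponential scale.

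A smaller imprecision: you characterise the exit bound needed for resident stability as ``polynomial-in-$K$.'' What is actually required (and what Theorem \ref{stability_lnK} supplies) is a \emph{quantified} bound that is $o(\lK/\ln K)$, so that $\Theta(\ln K/\lK)$ concatenations sum to $o(1)$; the classical large-deviation statement fails not because its rate is exponential but because it does not quantify the exit probability on the $\lK$ time horizon at all.
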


	\begin{remark}
		Note that we only assume \eqref{eq:result_initialcondition} to ensure convergence at $t=0$. If the $\b^K_w(0)$ converge to initial conditions $\hat{\b}_w(0)$ that do not satisfy this constraint, the orders of the population sizes stabilize in a time of order 1 at $\bar\b_w(0):=\max_{\substack{u\in V}}\left[\hat{\b}_u(0)-\frac{d(u,w)}{\alpha}\right]\lor 0$. These new orders satisfy \eqref{eq:result_initialcondition}. Because the $\beta^K$ describe the population on a $\ln K$-time-scale, this means that we still get convergence on the half-open interval $(0,T]$.
	\end{remark}

	Building upon this detailed description of growth of all living traits, it is natural to ask for the ``visible'' evolution of the population process, i.e.\ the progression of macroscopic traits that dominate the whole system.
	
	\begin{corollary}[Sequence of resident traits]
		\label{cor:seq_residents}
		Let
		\begin{align}
			\nu^K_\ve(t):=\sum_{w\in V: N^K_w(t)>\ve K}\d_w\quad\text{ and }\quad\nu(s):=\sum_{k\in\N_0} \ifct{s_k\leq s<s_{k+1}}\d_{v_k}	
		\end{align}
		denote point measures having support on the macroscopic traits. Then, under the assumptions of Theorem \ref{thm:conv_beta}, there exists an $\ve_0>0$ such that, for all $0<\ve<\ve_0$ and all $p\in[1,\infty)$, the following convergence holds in probability, with respect to the $L^p([0,T],\mathcal{M}(V))$ norm
		\begin{align}
			\left(\nu^K_\ve(s\ln K)\right)_{s\in [0,T]}\overset{K\to\infty}{\longrightarrow} 	\left(\nu(s)\right)_{s\in [0,T]},
		\end{align}
		where $\mathcal{M}(V)$ denotes the set of finite, non-negative point measures on $V$ equipped with the weak topology.
	\end{corollary}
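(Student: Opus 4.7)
The plan is to derive the corollary as a direct consequence of Theorem~\ref{thm:conv_beta}, combined with the resident-stability estimates that appear inside its proof. By the inductive construction of the limiting trajectories and the termination criteria (a), (c), for every $k \ge 0$ and every $s \in (s_k, s_{k+1})$ one has $\bar{\beta}_{v_k}(s)=1$ while $\bar{\beta}_w(s)<1$ for all $w \neq v_k$. Moreover, only finitely many invasion times $s_k$ lie in $[0,T]$, so the set on which two coordinates of $\bar{\beta}$ simultaneously equal $1$ (i.e.\ where $\nu(s)$ has a jump) has Lebesgue measure zero, and pointwise convergence in probability almost everywhere in $s$ will suffice.

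First I would set $\ve_0 := \tfrac{1}{2}\min\{\bar{n}^i_{v_k} : k\ge 0,\ i=1,\ldots,\ell\}$, which is strictly positive by termination criterion (e), and fix an arbitrary $0<\ve<\ve_0$. For a fixed $s\in(s_k,s_{k+1})$ and any $w\neq v_k$, Theorem~\ref{thm:conv_beta} yields $\beta^K_w(s)\to \bar{\beta}_w(s)<1$ in probability, so with probability tending to $1$ we have
\[
N^K_w(s\ln K) \le K^{\bar{\beta}_w(s)+o(1)} < \ve K.
\]
For $w=v_k$, the bare statement $\beta^K_{v_k}(s)\to 1$ is not enough (it allows $N^K_{v_k}\sim K/\ln K$); instead I would invoke the sharper stability statement established inside the proof of Theorem~\ref{thm:conv_beta}, namely that the rescaled resident $N^K_{v_k}(t)/K$ is trapped in a neighbourhood of the phase-dependent equilibrium $\bar{n}^i_{v_k}$ throughout $t\in[s_k\ln K,s_{k+1}\ln K]$, up to the $O(1)$ re-equilibration windows after each of the environmental switches (which are negligible on the $\ln K$-scale). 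Consequently $N^K_{v_k}(s\ln K) > 2\ve K>\ve K$ with probability tending to $1$, and combining the two estimates gives $\nu^K_\ve(s\ln K)=\delta_{v_k}=\nu(s)$ in probability for each such $s$.

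To upgrade this to $L^p([0,T],\mathcal{M}(V))$ convergence in probability, I would observe that, since $V$ is finite, $\|\nu^K_\ve(s\ln K)-\nu(s)\|_{\mathcal{M}(V)}\le |V|+1$ is a deterministic uniform bound. Pointwise convergence in probability on the co-null set $[0,T]\setminus\{s_0,\ldots,s_{k(T)},T_0\}$ together with this bound allows dominated convergence to be applied to the deterministic $L^p$ integrand, yielding
\[
\int_0^T \bigl\|\nu^K_\ve(s\ln K)-\nu(s)\bigr\|_{\mathcal{M}(V)}^p\, ds \xrightarrow{K\to\infty} 0
\]
in probability.

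The main obstacle is the control of $N^K_{v_k}$ from below by $\ve K$ uniformly on an interval of length $\Theta(\ln K)$ during which the environment switches order $\ln K /\lambda_K$ times. This is precisely what the potential-theoretic exit-time argument inside the proof of Theorem~\ref{thm:conv_beta} is designed to provide, so the corollary reduces to a bookkeeping argument once that estimate is in hand; no further stochastic analysis is needed beyond invoking Theorem~\ref{thm:conv_beta} and the resident-stability by-product of its proof.
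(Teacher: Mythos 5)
Your proposal is correct, but it takes a genuinely different route from the paper's proof.

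The paper's argument works with the internal stopping times $\s^K_k$ and $\theta^K_{k+1}$ from the proof of Theorem~\ref{thm:conv_beta}: on $[\s^K_k/\ln K,\theta^K_{k+1}/\ln K)$ the resident $v_k$ sits in the $\phi$-tube while the total mutant mass is below $\eps_k K$, so $\nu^K_\ve$ coincides with $\d_{v_k}$ there, and on the remaining intervals $[\theta^K_k/\ln K,\s^K_k/\ln K)$ one just crudely bounds $\abs{\dangle{\nu^K_\ve-\nu,h}}$ by $2\bar h+\bar h$. The $L^p$ distance thus reduces to the sum $\sum_{k:s_k<T}\bigl(|\s^K_k/\ln K-s_k|+|\theta^K_{k+1}/\ln K-s_{k+1}|\bigr)$, which vanishes in probability because the two sequences of stopping times converge on the $\ln K$-scale and the sum is finite. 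You instead prove pointwise convergence in probability of $\nu^K_\ve(s\ln K)$ to $\nu(s)$ for a.e.\ $s$ (separating off the finitely many jump times) and then upgrade to $L^p$ via the uniform bound and a Fubini/bounded-convergence step. Both approaches ultimately rest on the same ingredients — the stopping-time convergence and the resident-stability tube from Theorem~\ref{Thm_phi} — but the paper's is more constructive (yielding an explicit rate in terms of the stopping-time gaps), while yours is a softer, shorter argument. One small caveat in your write-up: the dominated-convergence step should be phrased as applying Markov's inequality to $\E\bigl[\int_0^T\norm{\nu^K_\ve(s\ln K)-\nu(s)}^p\,\dd s\bigr]$, using Fubini and bounded convergence of the integrand's expectation, rather than ``dominated convergence to the deterministic integrand''. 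Also, to justify the lower bound $N^K_{v_k}(s\ln K)>\ve K$ you genuinely need the event $\{\s^K_k<s\ln K<\theta^K_{k+1}\}$, whose probability tends to $1$ for fixed $s\in(s_k,s_{k+1})$, not merely the conditional statement of Theorem~\ref{Thm_phi}; this is what you implicitly invoke by referring to the ``by-product'' inside the proof of Theorem~\ref{thm:conv_beta}, and it is worth making explicit.
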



\section{Heuristics and discussion}
\label{sec:3_heuristics_discussion}

In this chapter, we give a heuristic idea of the proof strategy for the main Theorem \ref{thm:conv_beta} and discuss the necessity of some of the assumptions that are made. Moreover, we present some specific examples for fitness landscapes that do not satisfy these assumptions, which can still be treated with similar techniques. 

\subsection{Heuristics of the proof of Theorem \ref{thm:conv_beta}}
	As it is usually the case for adaptive dynamics models, the analysis of the limiting dynamics is split into approximations for the resident and the mutant populations.

	First, in Section \ref{sec:4.1_proof_stability}, we prove that - as long as the mutant populations stay below a certain small $\eps K$-threshold - the resident population also only deviates from its equilibrium state by an amount of order $\eps K$. In previous papers, this is often done by applying large deviation results that guarantee for the stochastic process to stay close to an attractive equilibrium for an exponential time in $K$. In our case, to bound the probability of failure (i.e.\ deviating too far from the equilibrium), we need to concatenate these results for an order of $\ln K/\l_K$ many $\lambda_K$-phases that are necessary to observe mutant growth on the $\ln K$-time-scale.
	By conditioning on not deviating too much during the previous phases, we can write the overall probability of failure as the sum of the probabilities to deviate during specific phases. We hence need the latter probabilities to converge to 0 faster than $\l_K/\ln K$. 
	In previous works, this probability of exit from a domain was bounded through a large deviation principle that ensures a vanishing probability of deviating within an exponential time as $K\to\infty$, but does not specify the exact speed of convergence (see e.g.\ \cite{Cha06}). In the present paper, we instead apply a potential theoretic approach similar to \cite{BaBoCh17} to study the embedded discrete time Markov chain and bound the probability of deviation during a $\lambda_K$-phase in $o(\lambda_K/\ln K)$. We combine this with a revised version of the standard convergence result to the deterministic system of \cite[Ch.\ 11.2]{EtKu86} to address the short time spans of order 1 at the beginning of each phase, where the resident population attains its new equilibrium. We prove convergence in probability instead of almost surely but can again quantify the convergence speed and bound the probability of failure in $o(\lambda_K/\ln K)$ in return. Overall, concatenating these two results, which are derived in Appendix \ref{app:A_attraction_equil}, for $O(\ln K/\lambda_K)$ many phases yields a vanishing probability for the resident population to stray from its respective equilibria.
	
	With these bounds on the resident population, in Section \ref{sec:4.2_convergence_beta} we can couple the mutant populations to simpler birth death processes (with immigration) to estimate their growth. In \cite{ChMeTr19} we find a collection of general results on the growth of birth death processes (with immigration) which were formerly used to study similar coupling processes. These results however only cover processes with constant parameters. In Appendix \ref{app:B_branching_processes}, we argue that we can essentially work with the time-average fitness $f^\av_{w,v}$ as the mutants growth rate since $\lambda_K\ll\ln K$, i.e.\ the parameter fluctuations occur on a faster time-scale than the growth of the mutants. This requires a careful rerun of the proofs in \cite{ChMeTr19} to keep track of the error stemming from this averaging approximation.
	
	Finally, based on the results on the coupling processes, we can derive the piecewise affine growth of the orders of population sizes $\beta^K_w$ of the mutant populations as in \eqref{eq:def_betabar}. The equations for $\bar{\b}_w$ combine the growth of a mutant at the rate of its own fitness $f^\av_{w,v}$ with the growth due to incoming mutants from other traits $u$ (which themselves grow at least at rate $f^\av_{u,v}$).

\subsection{Discussion of assumptions}
In the following, we address the necessity of some of our assumptions and discuss possible extensions to more general cases.

\subsubsection{Mutation kernel and probability}
	In Section \ref{sec:2.1_model} we choose the mutation kernel (or law of the trait of a mutant) $(m_{v,w})_{v,w\in V}$ to be independent of the phases $i=1,...,\ell$ and the carrying capacity $K$. Moreover, in Section \ref{sec:2.2_quantities} we choose the probability of mutation at birth as
\begin{align}
	\mu_K=K^{-1/\a},\qquad\a\in\R_{>0}\setminus\N,
\end{align}
which is independent of the phases and traits and depends on $K$ in a very specific way.  Both of these assumptions are not necessary and purely made to simplify notation. 

	The important part is to ensure that, for each $v,w\in V$, $m^{(i,K)}_{v,w}>0$ either during all or during none of the phases $i$ (i.e.\ the mutation graph $G$ does not depend on the phase), that $\mu^{(i,v)}_K>0$ for all phases and traits and that the (additional) dependence on $K$ does not influence the order of the population sizes. Overall, we can allow for dependences of the form
\begin{align}
	m^{(i,K)}_{v,w}\mu^{(i,v)}_K = h(v,w,i,K)K^{-1/\a},
\end{align}
where, for each $(v,w)\in E$,
\begin{align}
	h(v,w,i,K)>0 \text{ and } \abs{\frac{\ln h(v,w,i,K)}{\ln K}}=o(1)
\end{align}
and, for each $(v,w)\notin E$, $h(v,w,i,K)\equiv 0$.

	Under these assumptions, the mutation kernel and probability only contribute a (varying) multiplicative lower order constant to the mutant population sizes (beyond the $K^{-1/\a}$) and do not affect the traits' fitnesses. As a consequence, neither the asymptotic growth of the order of the population size $\beta^K$, which determines the next invading trait, nor the outcome of the invasion according to the Lotka-Volterra dynamics are affected. Therefore, the limiting processes $\bar{\beta}$ and $\nu$ would remain unchanged.
	
The technical assumption of $\alpha\notin\N$ allows for the dichotomy that either $K\mu_K^{d(v,w)}\gg 1$, if $d(v,w)<\a$, or $K\mu_K^{d(v,w)}\ll 1$, if $d(v,w)>\a$. Hence, one can decide whether a $w$-population gets founded by mutation or not. However, we think that the critical case where $w$-mutants arrive at a rate of order $K^0=1$ and may go extinct due to stochastic fluctuations can be handled as well. On any diverging time-scale ($\ll\lK$) we see an infinite number of mutations and corresponding attempts to survive and fixate in the population, where survival is also decided on the same short time-scale. As a result, one can heuristically expect fixation with probability converging to one. This type of argument is part of ongoing research. Its details are quite involved and diverge too far from the core of this paper. Therefore, we exclude these cases in our present results.

\subsubsection{Monomorphic resident population}
	Through termination criterion (d), we ensure that $f^i_{v_{k-1},v_k}<0$, for all phases $i=1,...,\ell$. This is a sufficient criterion to imply a monomorphic equilibrium of trait $v_k$ as the outcome of the Lotka-Volterra dynamics involving the former resident trait $v_{k-1}$ and the newly macroscopic mutant $v_k$. While this criterion is not necessary (as shown in the first example below), we do want to guarantee a monomorphic resident population at all times.

	The reason for this is that our potential theoretic approach to proving good bounds on the resident's population size in Appendix \ref{app:A_attraction_equil} (which is made use of in Section \ref{sec:4.1_proof_stability} to derive the bounding functions $\phi^{(K,\eps,\pm)}$),  relies on estimating the influence of variations in the absolute value of the population size. In the case of a monomorphic population, larger variations can be attributed to either a severe over- or undershoot of the equilibrium population size.  In the case of polymorphy however, variations could stem from either of the resident subpopulations or even a mixture of those, which makes the same estimates no longer useful.

	We expect that these problems are more of a technical nature and an extension to polymorphic resident populations is part of our ongoing research.

\subsection{Examples}
In this section, we present two examples that provide some more insight into the assumptions made to ensure a monomorphic resident population.

\subsubsection{Ensured monomorphism despite temporarily fit resident trait}\label{sec:3_ex1}
	We consider the example of an invasion step (i.e.\ the last Lotka-Volterra like step of an invasion, where an already macroscopic mutant population takes over a former resident population) for $\ell=3$ phases, with resident trait $v$ and mutant trait $w$, where termination criterion (d) is triggered, i.e.\ $f^i_{v,w}\geq0$ for one of the phases $i\in\{1,2,3\}$. We impose the following fitness landscape:
\begin{align}
	f^\av_{w,v}>0,\ f^\av_{v,w}<0,\label{ex1:fitness1}\\
	f^1_{w,v}>0, \ f^1_{v,w}<0;\quad f^2_{w,v}<0,\ f^2_{v,w}<0;\quad f^3_{w,v}<0,\ f^3_{v,w}>0\label{ex1:fitness2}\\
	T_2 f^2_{v,w}+T_3 f^3_{v,w}<0\label{ex1:fitness3}
\end{align}

	The first part of \eqref{ex1:fitness1} ensures that the mutant $w$ population reaches a macroscopic population size of order $K$ in the presence of a resident $v$ population in the first place. Next, the conditions on $f^i_{w,v}$ in \eqref{ex1:fitness2} imply that trait $w$ can only invade the resident population in phase 1 and $f^1_{v,w}<0$ guarantees that a monomorphic equilibrium of trait $w$ is obtained. Moreover, the population of trait $v$ drops to a mesoscopic size (strictly smaller order than $K$) by the end of phase 1. In phase 2, due to the respective negative fitnesses, trait $w$ stays in its monomorphic equilibrium while the population size of $v$ shrinks further with rate $\ f^2_{v,w}<0$. In phase 3, trait $v$ is indeed fit and can grow again (triggering termination criterion (d)). However, \eqref{ex1:fitness3} together with the precise approximations in Appendix \ref{app:C} ensures that this growth does not make up for the decrease in population size during phase 2 and hence $v$ will not reach the $\eps K$-threshold again. Finally, the second part of \eqref{ex1:fitness1} implies that the $v$ population shrinks overall and becomes microscopic on the $\ln K$-time-scale.
	
	To summarise, we have shown that termination criterion (d) is not necessary to guarantee a monomorphic resident population. However, formulating a sharp criterion is much more complicated.

\subsubsection{Different possible outcomes for two-phase cycles}
	As a toy example, to motivate our assumptions/termination criteria for the fitness landscape, we consider all possible behaviours during an invasion step for $\ell=2$ phases with resident trait $v$ and macroscopic mutant $w$. For the latter to be able to reach a macroscopic size, we need $f^\av_{w,v}>0$, which implies that there is at least one phase during which $f^i_{w,v}>0$. Hence, excluding cases of fitness 0, there are seven possible scenarios (up to exchangeability of phases):
\begin{center}
	\begin{tabular}{l|c|c|c|c|c|c|c}
		scenario & 1 & 2 & 3 & 4 & 5 & 6 & 7\\
		\hline
		$f^1_{v,w}$ and $f^2_{v,w}$ & -/- & -/- & +/+ & +/+ & +/- & +/- & +/-\\
		\hline
		$f^1_{w,v}$ and $f^2_{w,v}$ & +/+ & +/- & +/+ & +/- & +/+ & +/- & -/+
	\end{tabular}
\end{center}
The analysis of the different scenarios again makes use of the estimates in Appendix \ref{app:C} and we only present the heuristics here.

	Scenarios 1 and 2 are covered by our results and lead to a new monomorphic resident population of trait $w$.

	Scenario 3 yields a polymorphic resident population of coexisting traits $v$ and $w$. This is because in both phases the respective positive invasion fitnesses imply that there is a unique stable equilibrium point of the two-dimensional Lotka-Volterra system with both components being strictly positive (see e.g.\ \cite[Ch.\ 2.4.3]{Istas05} for a discussion of stable equilibria for two-dimensional Lotka-Volterra systems). 
	
	Scenario 4 does not lead to a fixed resident population. During phase 1, traits $v$ and $w$ obtain a polymorphic coexistence equilibrium as in scenario 3. In phase 2 however, a monomorphic $v$ population is the only stable equilibrium state and the $w$ population starts to shrink again. Since $f^\av_{w,v}>0$ is assumed, we have $T_1f^1_{w,v}+T_2f^2_{w,v}>0$, which implies that $w$ recovers from this decline in the next phase 1 and hence the system keeps switching between a coexistence equilibrium of both traits and a monomorphic equilibrium of trait $v$. (Note that in the beginning of each phase 1, there is a time of order $\lambda_K$ during which $v$ is still the monomorphic resident trait, before $w$ reaches a critical size to trigger the Lotka-Volterra dynamics again).
	
	Scenario 5 is in some sense the flipped scenario 4. However, we do not have information about the sign of $f^\av_{v,w}$. If $f^\av_{v,w}>0$, this is indeed the opposite version and the resident population switches back and forth between a coexistence equilibrium of $v$ and $w$ and a monomorphic equilibrium of trait $w$ only. If $f^\av_{v,w}<0$, then $T_1f^1_{v,w}+T_2f^2_{v,w}<0$. Hence, once the pure $w$ equilibrium is obtained during a phase 1, at the beginning of the next phase 2, $v$ starts to decrease in size. This decline can not be made up by its growth in phase 1 and hence the $v$ population becomes mesoscopic and shrinks on the $\ln K$-time-scale, making $w$ the new monomorphic resident trait. Note that this is an even smaller example for the phenomenon described in Section \ref{sec:3_ex1}, i.e.\ an ensured monomorphic new resident population of the mutant trait despite the former resident trait being fit during some phase. However, it is a little more complicated to describe the exact population sizes here (they depend on whether $w$ first becomes macroscopic during phase 1 or 2, where the previous example guarantees invasion during phase 1). Hence we present both examples and treat this one with less detail.
	
	Scenario 6 gets more complicated. In phase 1, the only stable equilibrium is the polymorphic state involving both $v$ and $w$. In phase 2 however, both monomorphic equilibria of $v$ or $w$ are stable. Hence the dynamics depend on the relation between the coexistence equilibrium state and the regions of attraction for the two monomorphic states. If the former is attracted to the equilibrium of $v$, the $w$ population shrinks during phase 2 but can recover to re-attain the coexistence state in phase 1 (i.e.\ the resident population switches between coexistence and trait $v$ alone). If the coexistence state is attracted to the equilibrium of $w$ in phase 2, the outcome again depends on the fitness $f^\av_{v,w}$ and whether trait $v$ can make up for its decrease in phase 2 by its growth in phase 1, similarly to scenario 5.
	
	Scenario 7 again has two possible outcomes. The Lotka-Volterra dynamics would lead to monomorphic equilibria of $v$ in phase 1 and of $w$ in phase 2.  It depends on the average fitnesses though whether the respective invading traits reach the critical threshold size to trigger these dynamics within these phases. In case of trait $w$, this is guaranteed by $f^\av_{w,v}>0$.  If also  $f^\av_{v,w}>0$, both traits grow faster during their respective fit phases than they shrink during their unfit phases. As a consequence, we observe a switching back and forth between the monomorphic equilibria (not necessarily synced up with the phase changes but delayed by a $\lambda_K$ time, as above). If $f^\av_{v,w}<0$, trait $v$ cannot recover during phase 1 and the system stays in the monomorphic $w$ equilibrium.
	
	Overall, already in this minimal example of two phases, we can observe a variety of different behaviours, ranging from monomorphic equilibrium states to coexistence and switching between those. Which of these is the case not only depends on the invasion fitnesses of the single traits but also on the precise relation between them and the timing of the different phases.


\section{Proofs}
\label{sec:4_proofs}

In this chapter, we conduct the proofs of the main results of this paper, i.e.\ Theorem \ref{thm:conv_beta} and Corollary \ref{cor:seq_residents}. We utilise a number of technical results on birth death processes with self-competition or immigration. To maintain a better readability of the main proofs, these technical results are stated and proved in the appendices.

This chapter is divided into several sections. In Section \ref{sec:4.1_proof_stability}, we discuss the stability of the resident trait during the mutants' growth phase. In Section \ref{sec:4.2_convergence_beta} we prove Theorem \ref{thm:conv_beta}, i.e.\ the convergence of the exponents $\beta^K$. Finally, in Section \ref{sec:4.3_seq_residents}, we conclude the result on the sequence of resident traits of Corollary \ref{cor:seq_residents}.

\subsection{Stability of the resident trait}
\label{sec:4.1_proof_stability}

Since we are working in a regime of periodically changing parameters, we cannot expect the resident population's size to stay close to one fixed value. Instead, the population size is attracted to the respective equilibrium sizes of the different phases. Since the population needs a short time to adapt to the new equilibrium after a change in parameters, we define two functions $\phi_v^{(K,\eps,+)}$ and $\phi_v^{(K,\eps,-)}$ that bound the population size and take into account these short transition phases of length $T_\eps$. We can then prove that, as long as the mutant populations stay small and as $K\to\infty$, the resident's population size stays between these bounding functions for a time of order $\ln K$ with high probability.

We define
\begin{align}
	\begin{aligned}
	\phi_v^{(K,\eps,+)}(t)=\begin{cases}
	\max\{\bar{n}^{i-1}_v,\bar{n}^i_v\}+M\eps&\text{, if }t\in(T^\Sigma_{i-1}\lambda_K,T^\Sigma_{i-1}\lambda_K+T_\eps)\\
	\bar{n}^i_v+M\eps&\text{, if }t\in[T^\Sigma_{i-1}\lambda_K+T_\eps,T^\Sigma_i\lambda_K]
	\end{cases}\\
	\phi_v^{(K,\eps,-)}(t)=\begin{cases}
	\min\{\bar{n}^{i-1}_v,\bar{n}^i_v\}-M\eps&\text{, if }t\in(T^\Sigma_{i-1}\lambda_K,T^\Sigma_{i-1}\lambda_K+T_\eps)\\
	\bar{n}^i_v-M\eps&\text{, if }t\in[T^\Sigma_{i-1}\lambda_K+T_\eps,T^\Sigma_i\lambda_K]
	\end{cases}
	\end{aligned}
\end{align}
with periodic extension, where $\bar{n}^{i-1}_v:=\bar{n}_v^\ell$ for $i=1$, and $\phi_v^{(K,\eps,\pm)}(0)=\bar{n}^\ell_v\pm M\eps$. Note that these functions also depend on the choices of $M$ and $T_\eps$. To simplify notation, we however do not include those parameters in the functions' names.
To mark the time at which the mutant populations become too large and start to significantly perturb the system, we introduce the stopping time
\begin{align}
S^{(K,\eps)}_v:=\inf\left\{t\geq0:\sum_{w\neq v}N^{K}_w(t)\geq \eps K\right\}.
\end{align}

With this notation, the resident's stability result can be stated as follows.

\begin{theorem}\label{Thm_phi}
There exists a uniform $M<\infty$ and, for all $\eps>0$ small enough, there exists a deterministic $T_\eps<\infty$ such that, for all traits $v\in V$ such that $b^i_v>d^i_v$, $1\leq i\leq \ell$, and for all $T<\infty$,
\begin{align}
\lim_{K\to\infty}\P\Bigg(\exists\ t\in[0,T\ln K\land S^{(K,\eps)}_v]:\frac{N_v^{K}(t)}{K}\notin[\phi_v^{(K,\eps,-)}(t),\phi_v^{(K,\eps,+)}(t)]\bigg|&\nonumber\\
\frac{N_v^{K}(0)}{K}\in[\phi_v^{(K,\eps,-)}(0)+\eps,\phi_v^{(K,\eps,+)}(0)-\eps]&\Bigg)=0.
\end{align}
\end{theorem}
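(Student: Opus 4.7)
The plan is to decompose the horizon $[0,T\ln K]$ into the $J_K = O(\ln K/\lambda_K)$ parameter phases, and then within each phase into a short transition sub-interval of length $T_\eps$ (during which the deterministic Lotka--Volterra flow drives the population from the old equilibrium to the new one) and a long quasi-equilibrium sub-interval of length at most $T_i\lambda_K$. In each sub-interval we aim for an exit probability of order $o(\lambda_K/\ln K)$, so that a union bound over the $J_K$ phases yields total probability $O(\ln K/\lambda_K)\cdot o(\lambda_K/\ln K) = o(1)$. Write $\tau_K := \inf\{t\geq0 : N_v^K(t)/K \notin [\phi_v^{(K,\eps,-)}(t),\phi_v^{(K,\eps,+)}(t)]\}$ and decompose $\{\tau_K < T\ln K \wedge S_v^{(K,\eps)}\}$ according to the phase in which $\tau_K$ falls; conditioning on $N_v^K(T_{j-1}^\Sigma\lambda_K)/K$ lying in the $\eps$-shrunken window at the start of the $j$-th phase, it suffices to bound the per-phase exit probability by $o(\lambda_K/\ln K)$, because the chained conditional probabilities then multiply down to a negligible correction.

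On the transition sub-interval $[T_{j-1}^\Sigma\lambda_K,T_{j-1}^\Sigma\lambda_K+T_\eps]$, the one-dimensional logistic equation $\dot n = (b^i_v - d^i_v - c^i_{v,v}n)n$ starting from $\bar n_v^{i-1}$ relaxes exponentially and monotonically to $\bar n_v^i$, staying in $[\min\{\bar n_v^{i-1},\bar n_v^i\},\max\{\bar n_v^{i-1},\bar n_v^i\}]$. Choosing $T_\eps$ large enough (depending only on $\eps$ and on the finitely many phase parameters) guarantees that this deterministic trajectory enters the $\eps$-neighbourhood of $\bar n_v^i$ by time $T_\eps$, even when the birth/death rates are perturbed by the mutant contribution, which is of order $O(\eps)$ on the event $\{t<S_v^{(K,\eps)}\}$. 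A quantified version of the Ethier--Kurtz approximation, stated in Appendix A, then bounds the probability that the rescaled stochastic path $N_v^K(\cdot)/K$ deviates from this deterministic trajectory by more than $M\eps/2$ on $[0,T_\eps]$; the key point, inherited from the refined analysis of \cite{EtKu86}, is that this deviation probability can be made $o(\lambda_K/\ln K)$ by choosing $M$ large enough (uniformly in $i$, $v$ and in the starting point).

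On the quasi-equilibrium sub-interval $[T_{j-1}^\Sigma\lambda_K+T_\eps,T_j^\Sigma\lambda_K]$, the process starts in the $\eps$-neighbourhood of $\bar n_v^i$ and must be prevented from leaving the $M\eps$-neighbourhood over an interval of length at most $T_i\lambda_K$. Here I invoke the potential-theoretic exit-time estimate of Appendix A, modelled on \cite{BaBoCh17}: the embedded discrete time chain of the logistic birth--death process restricted to a suitable interval around $K\bar n_v^i$ has a mean exit time from $[\bar n_v^i-M\eps,\bar n_v^i+M\eps]\cdot K$ that grows exponentially in $K$, so the exit probability in a $\lambda_K$-window is bounded by $C\lambda_K \ee^{-c(\eps)K}$, which is $o(\lambda_K/\ln K)$. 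Since $\eps$-sized mutant populations perturb the attracting drift only by $O(\eps)$, for $M$ chosen large enough (uniformly) the potentials used in the capacity estimates still dominate the perturbation, so this estimate survives on $\{t<S_v^{(K,\eps)}\}$.

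The main obstacle is precisely producing the quantified $o(\lambda_K/\ln K)$ rate in both ingredients and ensuring it is uniform in the starting point and robust against the $O(\eps)$ mutant perturbation; classical LDP statements (as in \cite{Cha06}) and classical Ethier--Kurtz LLN give qualitative convergence but not a rate. This is why the argument forces the explicit potential-theoretic exit bound and the explicit speed in the LLN that are deferred to Appendix A: only the combination of these two quantitative ingredients, together with the decomposition into transition and equilibrium sub-intervals, is strong enough to survive the $J_K=O(\ln K/\lambda_K)$-fold union bound.
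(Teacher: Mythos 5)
Your proposal follows essentially the same route as the paper: split each $\lambda_K$-phase into a short $T_\eps$-length re-equilibration window (handled via a quantified Ethier--Kurtz LLN, Theorem \ref{EthKu_variation}) and a long quasi-equilibrium window (handled via a potential-theoretic exit estimate à la \cite{BaBoCh17}, Theorem \ref{stability_lnK}), bound the exit probability of each by $o(\lambda_K/\ln K)$ uniformly in the phase index and in the (coupled, $O(\eps)$-perturbed) starting point, and then union-bound over the $O(\ln K/\lambda_K)$ phases using the Markov property. The only cosmetic discrepancy is your quoted per-phase bound $C\lambda_K\ee^{-c(\eps)K}$ for the equilibrium window: the bound from Theorem \ref{stability_general} with the paper's choice $m=K^2$, $\theta_K=\ln K$ actually decays only like $O(\ln K/K)$ plus an exponentially small term, but this is still $o(\lambda_K/\ln K)$ since $(\ln K)^2/K \ll \lambda_K$, so the argument goes through unchanged.
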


\begin{proof}
The proof is based on couplings with bounding single-trait birth death processes with self-competition.  We proceed in several steps:
\begin{itemize}
\item[1)] For a fixed $i$ phase, prove that $N^{K}_v/K$ gets $\eps$-close to the new equilibrium $\bar{n}^i_v$ within a finite time $T_\eps$ and stays bounded until then.
\item[2)] For a fixed $i$ phase, prove that $N^{K}_v/K$ stays $\eps$-close to its equilibrium $\bar{n}^i_v$ after $T_\eps$ until the end of the phase.
\item[3)] Use the strong Markov property to concatenate multiple phases to obtain a result for $\ln K$ times.
\end{itemize}
Note that in the following we conduct the proof for a fixed resident trait $v\in V$.  Uniform values for $M$ and $T_\eps$ can be obtained by taking the maximum over all such traits since we work with a finite trait space. In steps 1 and 2, we prove that the desired bounds fail with a probability in $o(\lambda_K/\ln K)$. This allows us to concatenate $O(\ln K/\lambda_K)$ phases for an overall time horizon of order $\ln K$ in step 3.

\textbf{Step 1 (attaining the equilibrium):} We fix a phase $1\leq i\leq\ell$ and, without loss of generality, assume that an $i$ phase starts at time $t=0$ and lasts until $t=T_i\lambda_K$ (we will ``reset'' time with the help of the Markov property in step 3). We start by showing that there are constants $\underline{C}^i,\overline{C}^i<\infty$ such that, for any $\eps>0$ and interval $I=[a_1,a_2]\subset (0,\infty)$, there is a deterministic time $T^{I,i}_\eps<\infty$ such that
\begin{align}\label{eq:resident_reequil}
\P\Bigg(&\exists\ t\in[0,T^{I,i}_\eps\land S^{(K,\eps)}_v]: \frac{N^{K}_v(t)}{K}\notin[((\bar{n}^i_v-\eps \underline{C}^i)\land a_1)-\eps,((\bar{n}^i_v+\eps\overline{C}^i)\lor a_2)+\eps] \nonumber\\
&\text{or }T^{I,i}_\eps\leq S^{(K,\eps)}_v\ \&\ \frac{N^{K}_v(T^{I,i}_\eps)}{K}\notin[\bar{n}^i_v-\eps \underline{C}^i-2\eps,\bar{n}^i_v+\eps\overline{C}^i+2\eps]\bigg|\frac{N^{K}_v(0)}{K}\in[a_1,a_2]\Bigg)\nonumber\\
&=o\left(\frac{\lambda_K}{\ln K}\right)\text{ as }K\to\infty.
\end{align}

	While these bounds seem quite unintuitive, they come up naturally by first comparing the actual process $N^{K}_v$ to two branching processes and then comparing these to their deterministic equivalent. These approximations are discussed in detail below and are visualised in Figure \ref{fig:approxLV}. The first line of \ref{eq:resident_reequil} corresponds to a worst case bound of the population up to the time $T^{I,i}_\eps$ when the new equilibrium is (almost) obtained. The second line corresponds to (almost) reaching the new equilibrium at time $T^{I,i}_\eps$ itself.

\begin{figure}
	\includegraphics[width=\textwidth]{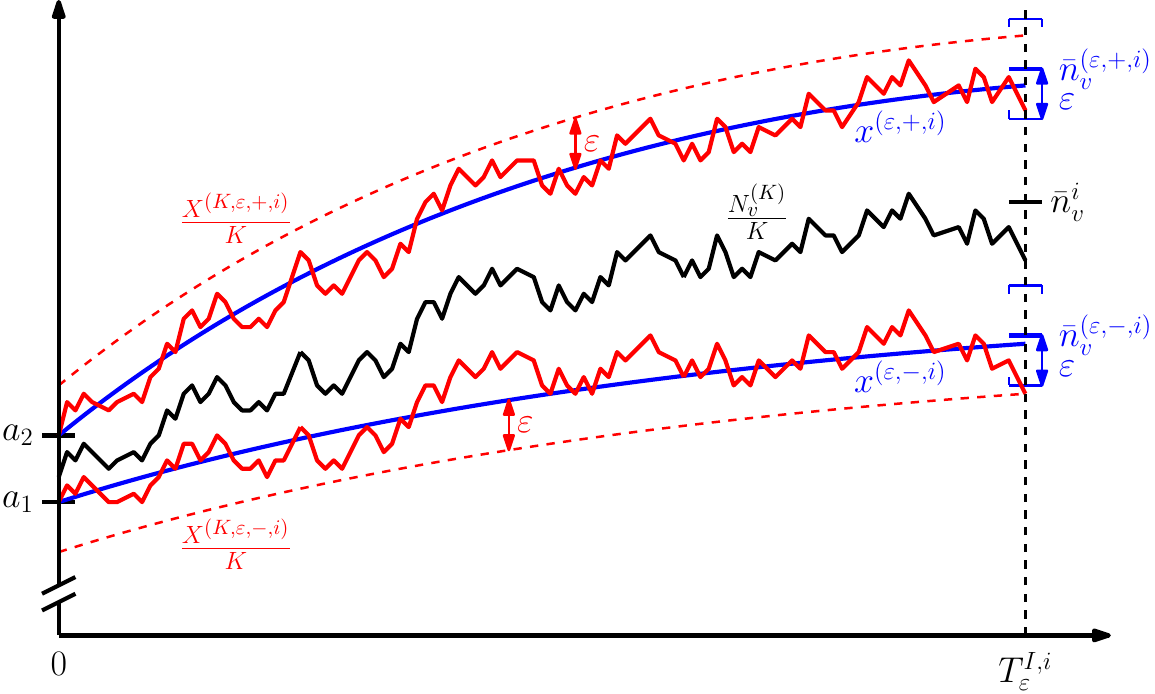}
	\caption{Two steps of approximation: Original process $N^{K}_v/K$ and new equilibrium $\bar{n}^i_v$ in black. Bounding birth death processes with self-competition $X^{(K,\eps,-,i)}/K$ and $X^{(K,\eps,+,i)}/K$ in red. Limiting deterministic solutions $x^{(\eps,-,i)}$ and $x^{(\eps,+,i)}$ with respective perturbed equilibrium sizes $\bar{n}^{(\eps,-,i)}_v=\bar{n}^i_v-\eps\underline{C}^i$ and $\bar{n}^{(\eps,+,i)}_v=\bar{n}^i_v+\eps\overline{C}^i$ in blue.}\label{fig:approxLV}
\end{figure}

We want to apply the results from Appendix \ref{app:A_attraction_equil}. To do so, we couple the process $N^{K}_v$ to two single-trait birth death processes with self-competition $X^{(K,\eps,-,i)}$ and $X^{(K,\eps,+,i)}$ such that
\begin{align}
X^{(K,\eps,-,i)}(t)\leq N^{K}_v(t)\leq X^{(K,\eps,+,i)}(t),\quad \forall\ t\in[0,S^{(K,\eps)}_v\land \tau^{(K,\eps)}_v \land T_i\lambda_K],
\end{align}
where
\begin{align}
\tau^{(K,\eps)}_v:=\inf\left\{t\geq0:N^{K}_v(t)< \eps K\right\}
\end{align}
ensures that the resident does not become too small and incoming mutants can hence be approximated by additional clonal births. We let $K$ be large enough such that $\mu_K<\eps$.

$X^{(K,\eps,-,i)}$ assumes the lowest starting value, maximal competition from other traits, maximal loss due to outgoing mutants and no incoming mutants. It hence has
\begin{itemize}
\item initial condition $X^{(K,\eps,-,i)}(0)=\lfloor a_1K\rfloor$,
\item birth rate $b^i_v(1-\eps)$, 
\item death rate $d^i_v+\eps\max_{w\neq v}c^i_{vw}$ and
\item self-competition rate $c^i_{vv}/K$.
\end{itemize}

$X^{(K,\eps,+,i)}$ assumes the highest starting value, no competition from other traits, no outgoing mutation and maximal incoming mutation. It hence has
\begin{itemize}
\item initial condition $X^{(K,\eps,+,i)}(0)=\lceil a_2K\rceil$,
\item birth rate $b^i_v+\eps\max_{w\neq v}b^i_w$,
\item death rate $d^i_v$ and
\item self-competition rate $c^i_{vv}/K$.
\end{itemize}

These couplings can be explicitly constructed via Poisson process representations, see e.g.\ \cite[Ch.\ 7.2]{BoCoSm19}.

By Theorem \ref{EthKu_variation}, on finite time intervals, the rescaled processes $X^{(K,\eps,-,i)}/K$ and $X^{(K,\eps,+,i)}/K$ converge uniformly on finite time intervals to the solutions $(x^{(\eps,-,i)}(t))_{t\geq 0}$ and  $(x^{(\eps,+,i)}(t))_{t\geq 0}$ of the ordinary differential equations
\begin{align}
	\begin{aligned}
		\dot{x}^{(\eps,-,i)}(t)&=x^{(\eps,-,i)}(t)\left(b^i_v(1-\eps)-(d^i_v+\eps\max_{w\neq v}c^i_{vw})-c^i_{vv}x^{(\eps,-,i)}(t)\right)&,\ x^{(\eps,-,i)}(0)=a_1,\\
		\dot{x}^{(\eps,+,i)}(t)&=x^{(\eps,+,i)}(t)\left(b^i_v+\eps\max_{w\neq v}b^i_w-d^i_v-c^i_{vv}x^{(\eps,+,i)}(t)\right)&,\ x^{(\eps,+,i)}(0)=a_2.
	\end{aligned}
\end{align}
These equations have unique attractive equilibrium points
\begin{align}
	\begin{aligned}
		\bar{n}^{(\eps,-,i)}_v&:=\frac{b^i_v(1-\eps)-(d^i_v+\eps\max_{w\neq v}c^i_{vw})}{c^i_{vv}}=\bar{n}^i_v-\eps\frac{b^i_v+\max_{w\neq v}c^i_{vw}}{c^i_{vv}}=:\bar{n}^i_v-\eps \underline{C}^i,\\
		\bar{n}^{(\eps,+,i)}_v&:=\frac{b^i_v+\eps\max_{w\neq v}b^i_w-d^i_v}{c^i_{vv}}=\bar{n}^i_v+\eps\frac{\max_{w\neq v}b^i_w}{c^i_{vv}}=:\bar{n}^i_v+\eps\overline{C}^i,
	\end{aligned}
\end{align}
which their solutions will attain up to an $\eps$ within a finite time $T^{I,i}_\eps<\infty$, i.e.
\begin{align}
\left|x^{(\eps,-,i)}(T^{I,i}_\eps)-\bar{n}^{(\eps,-,i)}\right|\leq\eps\text{ and } \left|x^{(\eps,+,i)}(T^{I,i}_\eps)-\bar{n}^{(\eps,+,i)}\right|\leq\eps.
\end{align}
Moreover, due to the monotonicity of the solutions, for all $t\in[0,T^{I,i}_\eps]$, 
\begin{align}
x^{(\eps,-,i)}(t)\geq a_1\land \bar{n}^{(\eps,-,i)}\text{ and } x^{(\eps,+,i)}(t)\leq a_2\lor \bar{n}^{(\eps,+,i)}.
\end{align}

We can hence bound
\begin{align}
&\ \P\Bigg(\exists\ t\in[0,T^{I,i}_\eps\land S^{(K,\eps)}_v\land \tau^{(K,\eps)}_v]:\frac{N^{K}_v(t)}{K}\notin[((\bar{n}^i_v-\eps \underline{C}^i)\land a_1)-\eps,((\bar{n}^i_v+\eps\overline{C}^i)\lor a_2)+\eps]\nonumber\\
&\qquad\ \text{or }T^{I,i}_\eps\leq S^{(K,\eps)}_v\land \tau^{(K,\eps)}_v\ \&\  \frac{N^{K}_v(T^{I,i}_\eps)}{K}\notin[\bar{n}^i_v-\eps \underline{C}^i-2\eps,\bar{n}^i_v+\eps\overline{C}^i+2\eps]\bigg|\frac{N^{K}_v(0)}{K}\in[a_1,a_2]\Bigg)\nonumber\\
\leq&\ \P\Bigg(\exists\ t\in[0,T^{I,i}_\eps]:\frac{X^{(K,\eps,-,i)}(t)}{K}<((\bar{n}^i_v-\eps \underline{C}^i)\land a_1)-\eps\text{ or }\frac{X^{(K,\eps,+,i)}(t)}{K}>((\bar{n}^i_v+\eps\overline{C}^i)\lor a_2)+\eps\nonumber\\
&\qquad\ \text{or }\frac{X^{(K,\eps,-,i)}(T^{I,i}_\eps)}{K}<\bar{n}^i_v-\eps \underline{C}^i-2\eps\text{ or }\frac{X^{(K,\eps,+,i)}(T^{I,i}_\eps)}{K}>\bar{n}^i_v+\eps\overline{C}^i+2\eps\Bigg)\nonumber\\
\leq&\ \P\Bigg(\exists\ t\in[0,T^{I,i}_\eps]:\left|\frac{X^{(K,\eps,-,i)}(t)}{K}-x^{(\eps,-,i)}(t)\right|>\eps\text{ or }\left|\frac{X^{(K,\eps,+,i)}(t)}{K}-x^{(\eps,+,i)}(t)\right|>\eps\Bigg)\nonumber\\
\leq&\ \P\left(\sup_{t\leq T^{I,i}_\eps}\left|\frac{X^{(K,\eps,-,i)}(t)}{K}-x^{(\eps,-,i)}(t)\right|>\eps\right)+ \P\left(\sup_{t\leq T^{I,i}_\eps}\left|\frac{X^{(K,\eps,+,i)}(t)}{K}-x^{(\eps,+,i)}(t)\right|>\eps\right)\nonumber\\
=&\ o\left(\frac{\lambda_K}{\ln K}\right),
\end{align}
where we apply Theorem \ref{EthKu_variation} in the last step.

Finally, we note that, if we choose $\eps$ small enough such that $\eps<((\bar{n}^i_v-\eps \underline{C}^i)\land a_1)-2\eps$, then $\tau^{(K,\eps)}_v<T^{I,i}_\eps\land S^{(K,\eps)}_v$ implies that $N^{K}_v(t)/K$ must have left the above intervals prior to this time and hence we can drop the stopping time $\tau^{(K,\eps)}_v$ from the probability on the left hand side.

\textbf{Step 2 (stability of the equilibrium):} We still study a specific $i$ phase from time $t=0$ up to $t=T_i\lambda_K$ and now consider the time span $[T^{I,i}_\eps,T_i\lambda_K]$. Since this is a time span of divergent length, we can no longer apply Theorem \ref{EthKu_variation} and the convergence to the deterministic system. Instead, we apply Theorem \ref{stability_lnK} on the stability of equilibrium points for $\ln K$ times to derive
\begin{align}
\lim_{K\to\infty}\frac{\ln K}{\lambda_K}\cdot\P\Bigg(\exists\ t\in[T^{I,i}_\eps,T_i\lambda_K\land S^{(K,\eps)}_v]:\frac{N^{K}_v(t)}{K}\notin[\bar{n}^i_v-\eps \underline{C}^i-16\eps,\bar{n}^i_v+\eps\overline{C}^i+16\eps]\bigg|\nonumber\\
\frac{N^{K}_v(T^{I,i}_\eps)}{K}\in[\bar{n}^i_v-\eps \underline{C}^i-2\eps,\bar{n}^i_v+\eps\overline{C}^i+2\eps]\Bigg)=0
\end{align}

We utilise the coupling processes $X^{(K,\eps,-,i)}$ and $X^{(K,\eps,+,i)}$ as in step 1, with the same birth, death and self-competition rates but this time with initial conditions
\begin{align}
X^{(K,\eps,-,i)}(T^{I,i}_\eps)=\lfloor(\bar{n}^i_v-\eps(\underline{C}^i+2)) K\rfloor\quad\text{and}\quad X^{(K,\eps,+,i)}(T^{I,i}_\eps)=\lceil(\bar{n}^i_v+\eps(\overline{C}^i+2)) K\rceil.
\end{align}
Then, if $N^{K}_v(T^{I,i}_\eps)/K\in[\bar{n}^i_v-\eps (\underline{C}^i+2),\bar{n}^i_v+\eps(\overline{C}^i+2)]$,
\begin{align}
X^{(K,\eps,-,i)}(t)\leq N^{K}_v(t)\leq X^{(K,\eps,+,i)}(t),\quad \forall\ t\in[T^{I,i}_\eps,S^{(K,\eps)}_v\land \tau^{(K,\eps)}_v \land T_i\lambda_K].
\end{align}

We apply Theorem \ref{stability_lnK} with $\eps'=16\eps$, and hence
\begin{align}
\left|X^{(K,\eps,\pm,i)}_v(T^{I,i}_\eps)-\bar{n}_v^{(\eps,\pm,i)}K\right|\leq2\eps K=\frac{\eps'K}{8}<\frac{1}{2}\left\lfloor\frac{\eps'K}{2}\right\rfloor,
\end{align}
to obtain
\begin{align}
&\ \P\Bigg(\exists\ t\in[T^{I,i}_\eps,T_i\lambda_K\land S^{(K,\eps)}_v\land\tau^{(K,\eps)}_v]:\frac{N^{K}_v(t)}{K}\notin[\bar{n}^i_v-\eps \underline{C}^i-16\eps,\bar{n}^i_v+\eps\overline{C}^i+16\eps]\bigg|\nonumber\\
&\qquad\ \frac{N^{K}_v(T^{I,i}_\eps)}{K}\in[\bar{n}^i_v-\eps \underline{C}^i-2\eps,\bar{n}^i_v+\eps\overline{C}^i+2\eps]\Bigg)\nonumber\\
\leq &\ \P\Bigg(\exists\ t\in[T^{I,i}_\eps,T_i\lambda_K]:\frac{X^{(K,\eps,-,i)}(t)}{K}<\bar{n}^i_v-\eps \underline{C}^i-16\eps\quad\text{or}\quad\frac{X^{(K,\eps,+,i)}(t)}{K}>\bar{n}^i_v+\eps\overline{C}^i+16\eps\Bigg)\nonumber\\
\leq&\ \P\Bigg(\exists\ t\in[T^{I,i}_\eps,T_i\lambda_K]:\left|\frac{X^{(K,\eps,-,i)}(t)}{K}-\bar{n}^{(\eps,-,i)}\right|>\eps'\quad\text{or}\quad\left|\frac{X^{(K,\eps,+,i)}(t)}{K}-\bar{n}^{(\eps,+,i)}_\eps\right|>\eps'\Bigg)\nonumber\\
\leq&\ \P\left(\sup_{t\in[T^{I,i}_\eps,T_i\lambda_K]}\left|X^{(K,\eps,-,i)}(t)-\bar{n}^{(\eps,-,i)}K\right|>\eps' K\right)+ \P\left(\sup_{t\in[T^{I,i}_\eps,T_i\lambda_K]}\left|X^{(K,\eps,+,i)}(t)-\bar{n}^{(\eps,+,i)}K\right|>\eps' K\right)\nonumber\\
=&\ o\left(\frac{\lambda_K}{\ln K}\right).
\end{align}

As in step 1, for sufficiently small $\eps$, we can again drop the stopping time $\tau^{(K,\eps)}_v$ from the probability on the left hand side.

\textbf{Step 3 (concatenating multiple phases):} We first piece together steps 1 and 2 to obtain a result for an entire $i$ phase and then concatenate multiple phases to prove the final result of the Theorem.

Applying the Markov property (at $T^{I,i}_\eps$) in the first step, we obtain
\begin{align}\label{iphase}
&\ \P\Bigg(\exists\ t\in[0,T^{I,i}_\eps\land S^{(K,\eps)}_v]:\frac{N^{K}_v(t)}{K}\notin[((\bar{n}^i_v-\eps \underline{C}^i)\land a_1)-\eps,((\bar{n}^i_v+\eps\overline{C}^i)\lor a_2)+\eps]\notag\\
&\qquad \text{ or }\exists\ t\in[T^{I,i}_\eps\land S^{(K,\eps)}_v,T_i\lambda_K\land S^{(K,\eps)}_v]:\frac{N^{K}_v(t)}{K}\notin[\bar{n}^i_v-\eps \underline{C}^i-16\eps,\bar{n}^i_v+\eps\overline{C}^i+16\eps]\bigg|\notag\\
&\qquad \frac{N^{K}_v(0)}{K}\in[a_1,a_2]\Bigg)\notag\\
\leq&\ \P\Bigg(\exists\ t\in[0,T^{I,i}_\eps\land S^{(K,\eps)}_v]:\frac{N^{K}_v(t)}{K}\notin[((\bar{n}^i_v-\eps \underline{C}^i)\land a_1)-\eps,((\bar{n}^i_v+\eps\overline{C}^i)\lor a_2)+\eps]\notag\\
&\qquad \text{or }T^{I,i}_\eps\leq S^{(K,\eps)}_v\ \&\ \frac{N^{K}_v(T^{I,i}_\eps)}{K}\notin[\bar{n}^i_v-\eps \underline{C}^i-2\eps,\bar{n}^i_v+\eps\overline{C}^i+2\eps]\bigg|\frac{N^{K}_v(0)}{K}\in[a_1,a_2]\Bigg)\notag\\
&\ + \P\Bigg(\exists\ t\in[T^{I,i}_\eps,T_i\lambda_K\land S^{(K,\eps)}_v]:\frac{N^{K}_v(t)}{K}\notin[\bar{n}^i_v-\eps \underline{C}^i-16\eps,\bar{n}^i_v+\eps\overline{C}^i+16\eps]\bigg|\notag\\
&\qquad \frac{N^{K}_v(T^{I,i}_\eps)}{K}\in[\bar{n}^i_v-\eps \underline{C}^i-2\eps,\bar{n}^i_v+\eps\overline{C}^i+2\eps]\Bigg)\notag\\
=&\ o\left(\frac{\lambda_K}{\ln K}\right).
\end{align}

Here we impose stronger bounds in the first time period up to $T^{(I,i)}_\eps$ to ensure good initial conditions for the remaining diverging time.

Note that these probabilities are in $o(\lambda_K/\ln K)$ uniformly in $1\leq i\leq\ell$.

Now we can finally link together multiple phases. For ease of notation, we index the phases by $i\in\N$ instead of $1\leq i\leq\ell$, where every $(k\ell+i)^\textbf{th}$ phase, $k\in\N$, is of type $i$ and length $T_i\lambda_K$. Similarly, we extend the definitions of $T^\Sigma_i$, $\bar{n}^i_v$ and $T^{I,i}_\eps$.

Choosing $T_\eps=\max_{1\leq i\leq\ell}T^{I,i}_\eps$ and $M=\max_{1\leq i\leq\ell}(\underline{C}^i\lor\overline{C}^i)+17$ in the definition of $\phi_v^{(K,\eps,\pm)}$, and the intervals $I=[a^1_1,a^1_2]=[\bar{n}^{\ell}_v-\eps(M-1),\bar{n}^{\ell}_v+\eps(M-1)]$ as well as $I=[a^i_1,a^i_2]=[\bar{n}^{i-1}_v-\eps (\underline{C}^{i-1}+16),\bar{n}^{i-1}_v+\eps(\overline{C}^{i-1}+16)]$, $i\geq2$, in \eqref{iphase}, we deduce the convergence for any $T<\infty$. See Figure \ref{fig:concatenate} for a visualisation of the concatenation of two phases.

\begin{align}
&\ \P\Bigg(\exists\ t\in[0,T\ln K\land S^{(K,\eps)}_v]:\frac{N_v^K(t)}{K}\notin[\phi_v^{(K,\eps,-)}(t),\phi_v^{(K,\eps,+)}(t)]\bigg|\nonumber\\
&\ \frac{N_v^{K}(0)}{K}\in[\phi_v^{(K,\eps,-)}(0)+\eps,\phi_v^{(K,\eps,+)}(0)-\eps]\Bigg)\nonumber\\
= &\ \P\Bigg(\exists\ i\in\N:T^\Sigma_{i-1}\lambda_K\leq T\ln K\land S^{(K,\eps)}_v \text{ and}\nonumber\\
&\ \exists\ t\in(T^\Sigma_{i-1}\lambda_K,(T^\Sigma_{i-1}\lambda_K+T^{I,i}_\eps)\land S^{(K,\eps)}_v): \frac{N^{K}_v(t)}{K}\notin[(\bar{n}^{i-1}_v\land\bar{n}^i_v)-M\eps,(\bar{n}^{i-1}_v\lor\bar{n}^i_v)+M\eps]\nonumber\\
&\ \text{ or }\exists\ t\in[(T^\Sigma_{i-1}\lambda_K+T^{I,i}_\eps)\land S^{(K,\eps)}_v,T^\Sigma_i\lambda_K\land S^{(K,\eps)}_v]:\frac{N^{K}_v(t)}{K}\notin[\bar{n}^i_v-M\eps,\bar{n}^i_v+M\eps]\bigg|\nonumber\\
&\ \frac{N_v^{K}(0)}{K}\in[\phi^{(K,\eps,-)}(0)+\eps,\phi^{(K,\eps,+)}(0)-\eps]\Bigg)\nonumber \\
\leq &\ \P\Bigg(\exists\ i\in\N:T^\Sigma_{i-1}\lambda_K\leq T\ln K\land S^{(K,\eps)}_v \text{ and}\nonumber\\
&\ \exists\, t \! \in[T^\Sigma_{i-1}\lambda_K,(T^\Sigma_{i-1}\lambda_K+T^{I,i}_\eps)\land S^{(K,\eps)}_v]:\! \frac{N^{K}_v(t)}{K}\! \notin\! [((\bar{n}^i_v-\eps \underline{C}^i)\land a^i_1)-\eps,((\bar{n}^i_v+\eps\overline{C}^i)\lor a^i_2)+\eps]\nonumber\\
&\ \text{ or }\exists\, t\! \in\! [(T^\Sigma_{i-1}\lambda_K+T^{I,i}_\eps)\land S^{(K,\eps)}_v,T^\Sigma_i\lambda_K\land S^{(K,\eps)}_v]:\! \frac{N^{K}_v(t)}{K}\!\notin[\bar{n}^i_v-\eps \underline{C}^i-16\eps,\bar{n}^i_v+\eps\overline{C}^i+16\eps]\bigg|\nonumber\\
&\ \frac{N_v^{K}(0)}{K}\in[\phi^{(K,\eps,-)}(0)+\eps,\phi^{(K,\eps,+)}(0)-\eps]\Bigg)\nonumber\\
\leq &\ \sum_{\substack{i\in\N:\nonumber\\T^\Sigma_{i-1}\lambda_K<T\ln K}}
\P\Bigg(\exists\ t\in[T^\Sigma_{i-1}\lambda_K,(T^\Sigma_{i-1}\lambda_K+T^{I,i}_\eps)\land S^{(K,\eps)}_v]:\nonumber\\
&\ \frac{N^{K}_v(t)}{K}\notin[((\bar{n}^i_v-\eps \underline{C}^i)\land a^i_1)-\eps,((\bar{n}^i_v+\eps\overline{C}^i)\lor a^i_2)+\eps]\nonumber\\
&\ \text{ or }\exists\ t\in[(T^\Sigma_{i-1}\lambda_K+T^{I,i}_\eps)\land S^{(K,\eps)}_v,T^\Sigma_i\lambda_K\land S^{(K,\eps)}_v]:\frac{N^{K}_v(t)}{K}\notin[\bar{n}^i_v-\eps \underline{C}^i-16\eps,\bar{n}^i_v+\eps\overline{C}^i+16\eps]\bigg|\nonumber\\
&\ \frac{N_v^{K}(T^\Sigma_{i-1}\lK)}{K}\in[a^i_1,a^i_2]\Bigg)\nonumber\\
=&\ o(1),
\end{align}
where we utilise that we have $O(\ln K/\lambda_K)$ summands that are (uniformly) of order $o(\lambda_K/\ln K)$ to conclude.

Note that, in contrast to the second to last expression, in \eqref{iphase} the initial time of the phase is set to 0. This however does not change the probability due to the Markov property and the periodic time-homogeneity of the Markov process.
Letting $K$ tend to infinity, this yields the proof of Theorem \ref{Thm_phi}.
\end{proof}

\begin{figure}[h]
	\includegraphics[width=.9\textwidth]{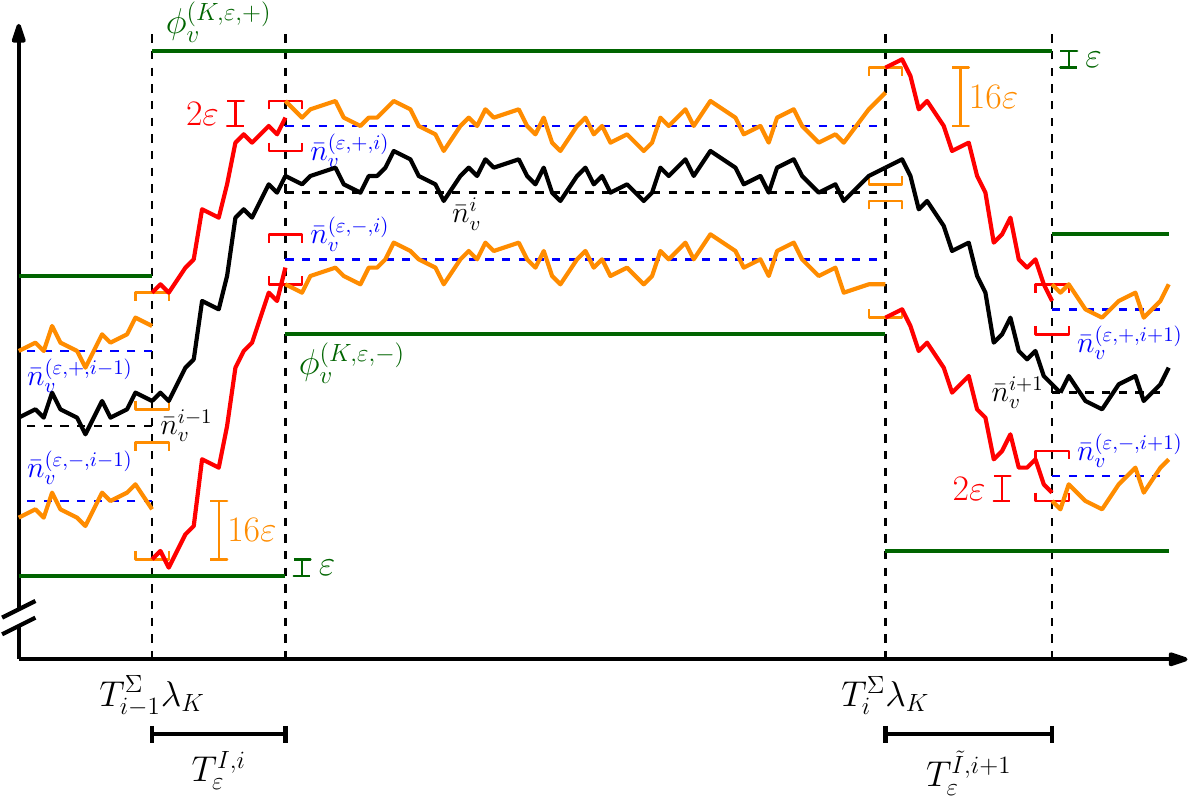}
	\caption{Concatenation of phases $i-1$, $i$ and $i+1$. Original process $N^{K}_v/K$ and corresponding equilibria in black. Bounding birth death processes with self-competition $X^{(K,\eps,-,i)}/K$ and $X^{(K,\eps,+,i)}/K$ in red (fast re-equilibration from step 1) and orange (long stability from step 2). Equilibrium sizes $\bar{n}^{(\eps,-,i)}_v=\bar{n}^i_v-\eps\underline{C}^i$ and $\bar{n}^{(\eps,+,i)}_v=\bar{n}^i_v+\eps\overline{C}^i$ of the corresponding (perturbed) deterministic system in blue. Bounding functions $\phi^{(K,\eps,-)}_v$ and $\phi^{(K,\eps,+)}_v$ in green.}\label{fig:concatenate}
\end{figure}

\subsection{Convergence of the orders of population sizes}
\label{sec:4.2_convergence_beta}
	The proof of Theorem \ref{thm:conv_beta} is based on an induction principle and similar to the proof of the main theorem of \cite{CoKrSm21}. We therefore do not repeat every single detail but point out how to deal with the important difficulties arising from our extended model with time-dependent growth parameters. This is done in five steps:
	\begin{enumerate}[1)]
		\item Define the main stopping times and set up the induction.
		\item Use the convergence of $\b_w^K(0)$ for the base case of the induction.
		\item Couple the process with non-interacting birth death processes to control the growth of the mutant populations.
		\item Ensure that mutants become macroscopic only in a fit phase $i$.
		\item Finish the induction step by comparison to the deterministic Lotka-Volterra system.
	\end{enumerate}
	
	\textbf{Step 1 (preparation):}
	The induction is set up in such a way that each step corresponds to the invasion of a new mutant. We divide these steps into two alternating substeps. During the first one, the resident population is stable in a certain sense and we approximate the growth of the mutant populations on the $\ln K$-time-scale. The second one is started when one of the mutant populations becomes macroscopic and we therefore observe a Lotka-Volterra interaction between the mutant and the former resident population.

	In order to make this distinction into substeps rigorous, we introduce, for $k\in\N_0$, the pair of stopping times (visualised in Figure \ref{fig:StoppingTimes})
	\begin{align}
	\label{eq:pf_stoppingtimes}
		\s^{K}_{k}&:=\inf\dset{t\geq\theta^{K}_{k}: \frac{N^K_{v_{k}}(t)}{K}\in [\phi^{(K,\eps_k,-)}_{v_{k}}(t)+\eps_{k},\phi^{(K,\eps_k,+)}_{v_{k}}(t)-\eps_k] \text{ and } \sum_{w\neq v_{k}}N^K_w(t)< \ve^2_k K},\nonumber\\
		\theta^{K}_{k+1}&:=\inf \dset{t\geq\s^K_{k}: \frac{N^K_{v_{k}}(t)}{K}\notin \left[\phi^{(K,\eps_k,-)}_{v_{k}}(t), \phi^{(K,\eps_k,+)}_{v_{k}}(t)\right] \text{ or } \sum_{w\neq v_{k}}N^K_w(t)\geq \ve_k K},
	\end{align}
	where the $\ve_k>0$ are chosen at the very end, in reverse order. More precisely, to ensure good estimates until the end of our time horizon $[0,T\ln K]$, one has to keep the accumulating error low from the very beginning and choose each $\eps_k$ small enough to provide good initial bounds for the next invasion step.
	
	This means that at time $\s^K_k$ the process has reached the monomorphic Lotka-Volterra-equilibrium of trait $v_k\in V$, and $v_k$ remains the only macroscopic trait until time $\theta^K_{k+1}$. Moreover, its population size lies inside the $\ve$-tunnel described by $\phi^{(K,\eps_k,-)}$ and $\phi^{(K,\eps_k,+)}$ during $[\s^K_k,\theta^K_{k+1}]$.
	
	In step 3, we introduce the stopping times $s^K_{k+1}$, when the first non-resident population becomes \textit{almost macroscopic}, i.e.\ attains a population size of order $K^{1-\eps_{k+1}}$ for some small $\eps_{k+1}>0$ (see Figure \ref{fig:StoppingTimes}), as well as the appearance times $t^K_{w,k+1}$ of new mutants. The first one is necessary for technical reasons and gives good bounds for $\theta^K_{k+1}$. The second one is needed to keep track of these new populations. As in \cite{CoKrSm21}, let $(\t^K_h)_{h\geq 0}$ be the collection of both $(s^K_k)_{k\geq 0}$ and  $(t^K_{w,k})_{k\geq 0}$. The main part of the proof then consists of approximating the growth dynamics in the intervals $[\t^K_h\land\theta^K_{k+1}\land T,\t^K_{h+1}\land\theta^K_{k+1}\land T]$. Subsequently, we estimate the time between $\theta^K_{k+1}$ and $\s^K_{k+1}$, which completes the induction step.
	
	\begin{figure}[h]
		\centering
		\includegraphics[scale=0.6]{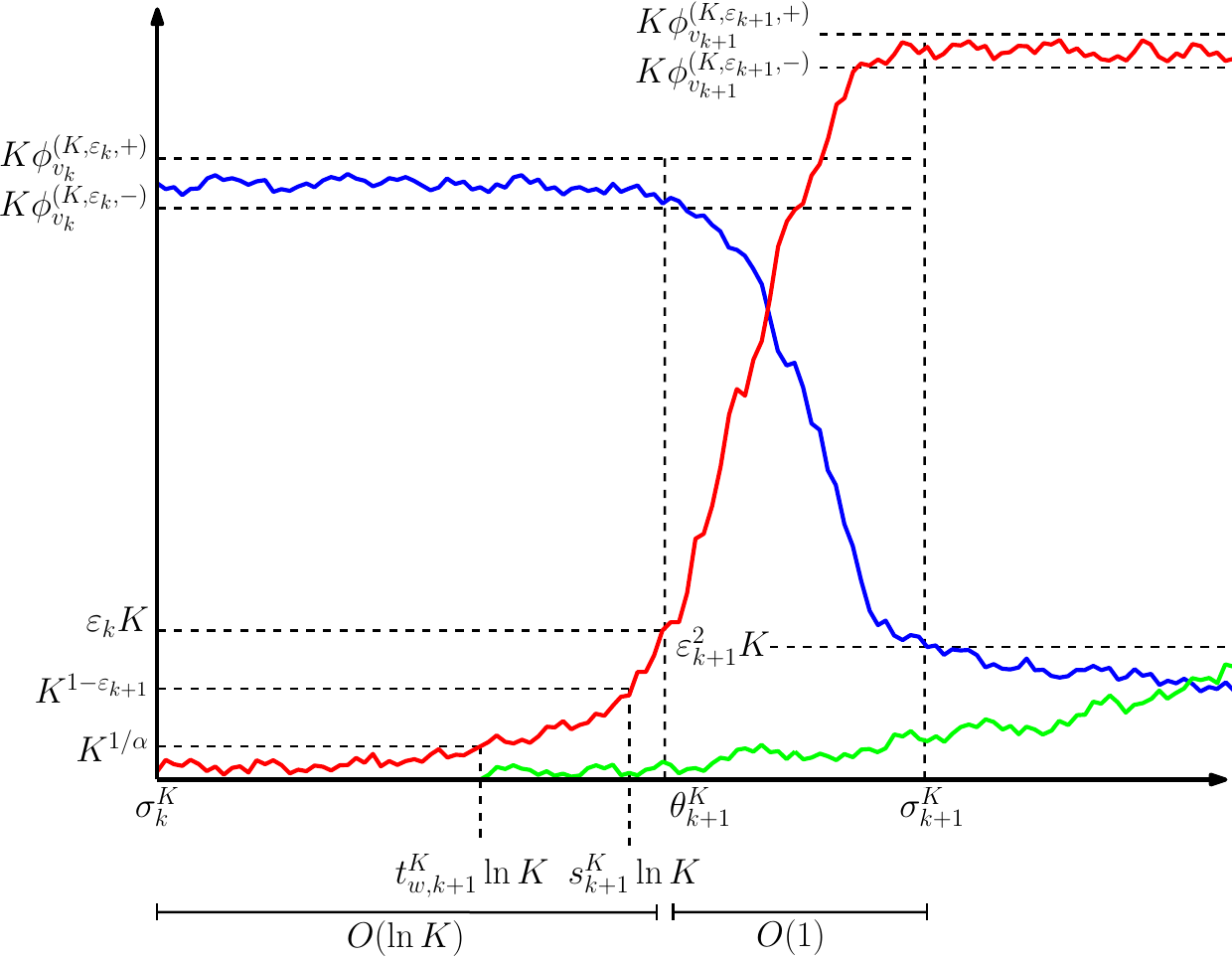}
		\caption{Substeps of the $(k+1)^\text{st}$ invasion: Resident population $N_{v_k}^K(t)$ (blue), invading mutant $N_{v_{k+1}}^K(t)$ (red) and new emerging subpopulation $N_w^K(t)$ (green), together with the triggered stopping times and the corresponding bounds and thresholds.}
		\label{fig:StoppingTimes}
	\end{figure}	
	
	\textbf{Step 2 (base case):}
	We set $\theta^K_0$=0. Then the base case is reminiscent of \cite{CoKrSm21} since, within a finite time horizon $[0,T'_\ve]$, the parameter functions $b^K,d^K,c^K$ stay constant, for $K$ large enough. Hence, we can apply Lemma A.6(ii) of \cite{CoKrSm21} to get, for every $\ve_0>0$, a $T'_{\ve_0}<\infty$ such that
	\begin{align}
		\lim_{K\to\infty}\Prob{\norm{\frac{N^K_{v_0}(T'_{\ve_0})}{K}-\bar{n}^1_{v_0}}_\infty<\ve_0}=1.
	\end{align}
	
	Here, our $\eps_0$ corresponds to $\eps_2$ in \cite{CoKrSm21} and a lower bound of order $K$ for the initial population size of trait $v_0$ gives their $\eps_1 K$.
	We use that our assumption on the initial condition \eqref{eq:result_initialcondition_mono} guarantees that $\lim_{K\to\infty}\beta_w(0)<1$, for all $w\neq v_0$. Hence $\lim_{K\to\infty}\beta_w(t)<1$ for all $w\neq v_0$ and $t\in[0,T'_{\ve_0}]$, and in particular $\sum_{w\neq v_0}N^K_w(t)<\eps_0^2K$ for such $t$. Therefore, Assumption A.5 in \cite{CoKrSm21} is satisfied for standard couplings to birth death processes with immigration, where the birth, death and (self)competition rates do not deviate from $b^1_{v_0}$, $d^1_{v_0}$ and $c^1_{v_0,v_0}$ by more than $\eps_0^2\hat{C}$, and immigration is bounded by $\eps_0^2K\mu_K\hat{C}$, for some $\hat{C}<\infty$.
	Overall, we obtain that $\s^K_0\leq T'_{\ve_0}<\infty$, for $K$ large enough.

	\textbf{Step 3 (growth of mutants):} 
	To show the induction step, let us assume that at time $\s^K_k$ the process has reached the monomorphic Lotka-Volterra-equilibrium of trait $v_k\in V$. Our first goal is to estimate the competitive interaction between the subpopulations in the interval $[\s^K_k,\theta^{K}_{k+1}]$. Recalling the rates of the different events for the population of trait $w\in V$, which has size $N^K_w(t)$ at time $t$, we have

	\begin{itemize}
		\item Reproduction without mutation:
		\begin{align}
			\mathfrak{b}^K_w(t)=b^K_w(t)(1-\mu_K)N^K_w(t),
		\end{align}
		\item Death (natural and by competition):
		\begin{align}
			\mathfrak{d}^K_w(t)=\left(d^K_w(t)+\sum_{u\in V}\frac{c^K_{w,u}(t)}{K}N^K_u(t)\right)N^K_w(t),
		\end{align}
		\item Reproduction from mutation:
		\begin{align}
			\mathfrak{bm}^K_w(t)=\mu_K\sum_{u\in V}b^K_u(t)m_{u,w}N^K_u(t).
		\end{align}
	\end{itemize}
	
	Using the shorthand notation $v:=v_{k}$ and $\hat{c}_w:=\max_{u\neq v, 1\leq i\leq\ell}c^i_{w,u}$, we can introduce the approximating parameter functions, for $w\in V$,
	\begin{align}
		\begin{aligned}
			b^{(K,\ve,+)}_w(t)&:=b^K_w(t),\\
			b^{(K,\ve,-)}_w(t)&:=(1-\ve)b^K_w(t),\\
			d^{(K,\ve,+)}_w(t)&:=d^K_w(t)+c^K_{w,v}(t)\phi^{(K,\ve,-)}_v(t),\\
			d^{(K,\ve,-)}_w(t)&:=d^K_w(t)+c^K_{w,v}(t)\phi^{(K,\ve,+)}_v(t)+\hat{c}_w\ve.
		\end{aligned}
	\end{align}
	If $K$ is taken large enough such that $\mu_K<\ve$, we therefore have, for $t\in[\s^K_{k},\theta^{K}_{k+1}]$,
	\begin{align}
		\begin{aligned}
			b^{(K,\ve,-)}_w(t)N^K_w(t)&\leq\mathfrak{b}^K_w(t)\leq b^{(K,\ve,+)}_w(t)N^K_w(t),\\
			d^{(K,\ve,+)}_w(t)N^K_w(t)&\leq\mathfrak{d}^K_w(t)\leq d^{(K,\ve,-)}_w(t)N^K_w(t).
		\end{aligned}
	\end{align}
	Moreover, defining $f^{(K,\ve,\pm)}_{w,v}(t):=b^{(K,\ve,\pm)}_w(t)-d^{(K,\ve,\pm)}_w(t)$,
	we have	
	\begin{align}
		\begin{aligned}
			f^{(K,\ve,+)}_{w,v}(t)&=\left\{\begin{array}{l}
			f^i_{w,v}+c^i_{w,v}M\ve+c^i_{w,v}\left(\bar{n}^{i}_v-\bar{n}^{i-1}_v\right)_+\\
			f^i_{w,v}+c^i_{w,v}M\ve
			\end{array}\right.
			&\hspace{-1.5em}
			\begin{array}{l}
				:t\in[T^\S_{i-1}\lK,T^\S_{i-1}\lK+T_\ve),\\
				:t\in[T^\S_{i-1}\lK+T_\ve,T^\S_{i}\lK),
			\end{array}\\
			f^{(K,\ve,-)}_{w,v}(t)&=\left\{\begin{array}{l}
				f^i_{w,v}-\left(c^i_{w,v}M+b^i_w+\hat{c}_w\right)\ve-c^i_{w,v}\left(\bar{n}^{i-1}_v-\bar{n}^{i}_v\right)_+\\
				f^i_{w,v}-\left(c^i_{w,v}M+b^i_w+\hat{c}_w\right)\ve
			\end{array}\right.
			&\hspace{-1.5em}
			\begin{array}{l}
				:t\in[T^\S_{i-1}\lK,T^\S_{i-1}\lK+T_\ve),\\
				:t\in[T^\S_{i-1}\lK+T_\ve,T^\S_{i}\lK).
			\end{array}	
		\end{aligned}
	\end{align}
	We can use these new parameter and fitness functions to define suitable couplings to simpler bounding branching processes to approximate the original processes $N^K_w$.
	
	In contrast to the estimates of \cite{CoKrSm21}, we have to work with periodic functions instead of constants. Another 
	peculiarity is that we only have good estimates in the intervals\linebreak \mbox{$[T^\S_{i-1}\lK+T_\ve,T^\S_{i}\lK)$}. On the intervals $[T^\S_{i-1}\lK,T^\S_{i-1}\lK+T_\ve)$, we have to deal with deviations staying macroscopic (i.e.\ not scaling with $\eps$). Fortunately, these bad estimates are only given for a finite time $T_\ve$ (not increasing with $\lK$) and for the remaining time,  which scales with $\lK$, we have the estimates that are arbitrarily accurate, proportional to $\ve>0$. In Appendix \ref{app:B_branching_processes}, we work out how to capture both of these characteristics.

	In order to make these results applicable, let us first define the stopping time when the first non-resident population becomes almost macroscopic
	\begin{align}
		s^K_{k+1}:=\inf\dset{t\geq \s^K_{k}/\ln K: \exists\ w\neq v_{k}: \b^K_w(t)>1-\ve_{k+1}},
	\end{align}	
	as well as the time of appearance of a mutant
	\begin{align}
		t^K_{w,k+1}:=\left\{\begin{array}{ll}
			\inf\dset{t\geq \s^K_{k}/\ln K:\exists\ u\in V: d(u,w)=1, \b^K_u(t)=\frac{1}{\a}}
			&\text{if\ }\b^K_w(\s^K_{k}/\ln K)=0,\\
			\s^K_{k}/\ln K
			&\text{else}.
		\end{array}\right.
	\end{align}
	Building on this, we can define the sequence of important events $(\t_h)_{h\in\N_0}$ via $\t^K_0=\s^K_0/\ln K$ and, for $\s^K_{k}/\ln K\leq\t^K_{h-1}<s^K_{k+1}$,
	\begin{align}
		\t^K_h:=s^K_{k+1}\wedge\min\dset{t_{w,k+1}:w\in V,t_{w,k+1}>\t_{h-1}}.
	\end{align}
	Moreover, we can then define the sequence of sets of living traits $(M^K_h)_{h\in\N_0}$ via
	\begin{align}
		M^K_h:=& \dset{w\in V: \b^K_w(\t^K_h)>0 \text{ or } \t^K_h=t^K_{w,k+1}}\nonumber\\
		=&\left(M^K_{h-1}\setminus\dset{w\in V: \b^K_w(\t^K_h)=0}\right)
		\cup\dset{w\in V: \t^K_h=t^K_{w,k+1}}.
	\end{align}
	
	After establishing these stopping times and estimates on the rate functions, we are in a similar framework as in \cite{CoKrSm21} but now adapted to the time-dependence of the driving parameters.  We can couple the mutant populations $N^K_w$ to time-dependent birth death processes (with immigration) with parameters $b^{(K,\ve_k,\pm)}_w(t)$ and $d^{(K,\ve_k,\pm)}_w(t)$. Together with the results presented in Appendix \ref{app:B_branching_processes}, one can now follow the arguments of Section 4.2 in \cite{CoKrSm21}. To be precise, we just have to replace their Lemma A.1 by our Theorem \ref{thm:bd_main} and their Corollary A.4 by our Theorem \ref{thm:bdi_main} and use similar inductive arguments to show that, for $w\in M^K_{h-1}$ and $t\in[\t^K_{h-1}\ln K\wedge\theta^{K}_{k+1}\wedge T\ln K,\t^K_{h}\ln K\wedge\theta^{K}_{k+1}\wedge T\ln K]$, we obtain the bounds
	\begin{multline}
	\label{eq:pf_beta_bounds}
		\max_{u\in M^K_{h-1}} \left[\b^K_u(\t^K_{h-1})-\frac{d(u,w)}{\a}+(t-\t^K_{h-1})(f^\av_{u,v_k}-C_h\ve_k)\right]_+\\
		\leq\b^K_w(t)\leq\\
		\max_{u\in M^K_{h-1}} \left[\b^K_u(\t^K_{h-1})-\frac{d(u,w)}{\a}+C_h\ve_k+(t-\t^K_{h-1})(f^\av_{u,v_k}+C_h\ve_k)\right]_+.
	\end{multline}
	A heuristic for these bounds has already been given in Remark \ref{rem:betabar}. As a brief reminder, without mutation, every living trait $u\in M^K_{h-1}$ would grow/shrink at the rate of its own fitness $f^\av_{u,v_k}$ on the $\ln K$-time-scale, yielding $\beta^K_u(t)\approx \b^K_u(\t^K_{h-1})+(t-\t^K_{h-1})f^\av_{u,v_k}$. Through mutation however, a trait $w\in V$ receives a $\mu_K^{d(u,w)}=K^{-d(u,w)/\alpha}$ portion of incoming mutants from all living traits $u\in M^K_{h-1}$, and its actual population size corresponds the leading order term, i.e.\ the maximum of all these exponents $\b^K_u(\t^K_{h-1})+(t-\t^K_{h-1})f^\av_{u,v_k}-d(u,w)/\alpha$.
	
	Note that these estimates on $\b^K_w$ are also where the errors accumulate. Namely, knowing the initial value of $\b^K_u$ at time $\t_{h-1}\ln K$ allows for approximations until $\t_h\ln K$ but at a cost of an additional error term of order $\ve_{k}$. To eventually ensure the convergence of Theorem \ref{thm:conv_beta}, which means having good estimates until time $T\ln K$, one has to choose the $\ve_{k}$ in reverse order, such that every approximation step gives good enough bounds to the initial values of the next one.
	
	By analogous arguments to Section 4.3 and 4.4 of \cite{CoKrSm21}, we can deduce the formulas for and the convergence of $(\t^K_h-\t^K_{h-1})$ and $M^K_h$, as well as for $s^K_{k+1}$ and $\theta^{K}_{k+1}$. 
	Note that we need to introduce the stopping times $s^K_{k+1}$ for the following technical reason: At time $\theta^K_{k+1}$ we only know that the overall mutant population (summing over all non-resident traits) has reached the threshold $\eps_k K$. The above bounds on $\beta^K_w$ only allow us to estimate a single mutant's population size up to a multiplicative factor of $K^{\pm C \eps_k}$, which is not sufficient to imply that $v_{k+1}$ has a non-vanishing population size (when rescaled by $K$) at this time. This is however necessary to have a good initial condition for the deterministic Lotka-Volterra approximation. Hence, $s^K_{k+1}$ is chosen to guarantee the existence of one large mutant population, where we choose a threshold slightly smaller than $\beta^K_w=1$ to ensure that the Lotka-Volterra dynamics are not triggered before this time either.
	
	 We can show that, in the limit of $K\nearrow\infty$ and for small $\ve_{k+1}$, the times $\theta^K_{k+1}/\ln K$ and $s^K_{k+1}$ are arbitrarily close.
	Namely, following again the argument in \cite{CoKrSm21}, we can show by contradiction that
	\begin{align}\label{eq:compareST}
		s^K_{k+1}\ln K<\theta^{K}_{k+1}<(s^K_{k+1}+\ve_{k+1}C)\ln K.
	\end{align}
	To be precise, let $w^K_{k+1}$ be the mutant trait that triggers $s^K_{k+1}$ and take
	\begin{align}
		\eta^K_{k+1}:=2\ve_{k+1}/(f^\av_{w_{k+1}^K,\v_k}-C_h\ve_{k}).
	\end{align}
	If one assumes that $(s^K_{k+1}+\eta^K_{k+1})\ln K\leq\theta^{K}_{k+1}$, then \eqref{eq:pf_beta_bounds} would still be applicable and directly lead to
	\begin{align}
		\b^K_{w^K_{k+1}}(s^K_{k+1}+\eta^K_{k+1})\geq 1+\ve_{k+1}.
	\end{align}
	This however is a contradiction since $\lim_{K\to\infty}\b^K_w(s)\leq 1$ for all $w\in V$ and $s\geq0$, and hence the upper bound in \ref{eq:compareST} is satisfied {for $C=2/(f^\av_{w_{k+1}^K,\v_k}-C_h\ve_{k})$. The lower bound is satisfied by definition of the stopping times.

	\textbf{Step 4 (time of invasion):}	
	Now the last difference to \cite{CoKrSm21} that we have to address is that the trait reaching a macroscopic size at time $\theta^{K}_{k+1}$, which is with high probability $v_{k+1}$, might be unfit at that time. In the following, we show that this only happens with vanishing probability. In order to track the sizes of the different subpopulations more carefully, let us introduce the additional stopping times
	\begin{align}
		R^K_{k+1}&:=\inf\dset{t\geq\s^K_k:N^K_{v_{k+1}}(t)\geq\ve^2_{k} K \text{ and } f^{(K,\eps,-)}_{v_{k+1},v_k}(t)>0}, \\
		\tilde{R}^K_{k+1}&:=\inf\dset{t\geq\s^K_k:N^K_{v_{k+1}}(t)\geq\ve^2_{k} K}, \\
		\check{R}^K_{k+1}&:=\inf\dset{t\geq\tilde{R}^K_{k+1}:\int_{\tilde{R}^K_{k+1}}^{t} f^{(K,\eps_k,-)}_{v_{k+1},v_k}(s)\dd s>0}.
	\end{align}

	The first time $R^K_{k+1}$ is the time we are ultimately looking for, namely the starting point for the deterministic Lotka-Volterra system involving the resident trait $v_k$ and the (at that time fit) mutant $v_{k+1}$ (see Step 5 below). The second time $\tilde{R}^K_{k+1}$ gives us a first warning before reaching $\theta^K_{k+1}$, with $v_{k+1}$ possibly being unfit. The last time $\check{R}^K_{k+1}$ helps to estimate the first one and can be computed deterministically in relation to the second one.
	
	Our goal in this step is to prove that $R^K_{k+1}<\theta^K_{k+1}$, such that all branching process approximations apply up to this point. While in the next step we deduce from the Lotka-Volterra system $\sigma^K_{k+1}<R^K_{k+1}+O(1)$.
	
	We know that, for $\ve_k>0$ small and $K$ large enough,
		\begin{align}
			\int_{\tilde{R}^K_{k+1}}^{\tilde{R}^K_{k+1}+\lK T^\S} f^{(K,\ve_{k},-)}_{v_{k+1},v_k}(s)\dd s
			&\geq \lK T^\S_\ell \left(f^{\text{av}}_{v_{k+1},v_k}-\tilde{M}\ve_{k}\right) -\ell T_{\ve_k} C_{v_{k+1},v_k}>0,
	\end{align}
	since $f^{\text{av}}_{v_{k+1},v_k}>0$. Which implies directly
	\begin{align}
		\check{R}^K_{k+1}\leq \tilde{R}^K_{k+1}+\lK T^\S.
	\end{align}
	Moreover, since $f^{(K,\eps,-)}_{v_{k+1},v_k}(s)$ is piecewise constant and the defining inequality of $\check{R}^K_{k+1}$ is strict, there is a small $\d>0$ (not scaling with $K$) such that, with probability 1,
	\begin{align}
	\label{eq:pf_f>0}
		f^{(K,\eps,-)}_{v_{k+1},v_k}(\check{R}^K_{k+1}+t)>0,\qquad \forall t\in (0,\d).
	\end{align}
	As argued above, the interval $[\tilde{R}^K_{k+1},\check{R}^K_{k+1}+\d]$ is of length $O(\lK)$. Hence from Corollary \ref{cor:growth_in_lambda_time} and an application of the Markov property at time $\tilde{R}^K_{k+1}$, we can deduce that, for $\d$ small enough,
	\begin{align}
	\label{eq:pf_still_bdd}
		\Prob{\sup_{t\in [\tilde{R}^K_{k+1},\check{R}^K_{k+1}+\d]} \sum_{w\neq v_k}N^K_w(t)<\ve_k K}\overset{K\to\infty}{\longrightarrow}1,\\
	\label{eq:pf_threshold_crossed}
		\Prob{N^K_{v_{k+1}}(\check{R}^K_{k+1}+\d)>\ve_k^2 K}\overset{K\to\infty}{\longrightarrow}1.
	\end{align}
	
	The statement of \eqref{eq:pf_still_bdd} tells us that the mutant population is still bounded from above and thus the assumptions for Theorem \ref{Thm_phi} are still satisfied up to time $\check{R}^K_{k+1}+\d$. Hence we know that the resident population only fluctuates inside the $\phi$-tube. This implies
	\begin{align}
		\check{R}^K_{k+1}+\d\leq\theta^K_{k+1}.
	\end{align}
	Finally \eqref{eq:pf_threshold_crossed} together with \eqref{eq:pf_f>0} leads to $R^K_{k+1}\leq\check{R}^K_{k+1}+\d$.
	This eventuelly gives 
	\begin{align}
		R^K_{k+1}\leq\theta^K_{k+1},
	\end{align}
	i.e.\ we are still allowed to use the couplings with birth death processes to approximate the mutant population up to $R^K_{k+1}$.

	\textbf{Step 5 (Lotka-Volterra):}
	At time $R^K_{k+1}$, we are in position to use the convergence  result for the fast Lotka-Volterra phase. By definition of this stopping time, we know that the invading trait $v_{k+1}$ is fit with respect to the resident $v_{k}$ and of a size that does not vanish as $K\to\infty$ when rescaled by $K$. Moreover, termination criterion (d) of the algorithm in Theorem \ref{thm:conv_beta} ensures that the resident trait is unfit with respect to the invading mutant. By standard arguments, the corresponding deterministic system gets close to its equilibrium in finite time and we have convergence of the stochastic process towards the deterministic system on finite time intervals \cite{EtKu86}. This implies the existence of a finite and deterministic time $T'_{\ve_{k+1}}<\infty$ such that
	\begin{align}
		\s^K_{k+1}\leq R^K_{k+1}+T'_{\ve_{k+1}}.
	\end{align}
	
	Moreover, the condition $f^i_{v_k,v_{k+1}}<0$, for all $i=1\ldots\ell$, guarantees that (with probability converging to 1 as $K\to\infty$) the former resident population cannot reach the threshold $\ve_{k+1}K$ any more after time $\s^K_{k+1}$.

	Overall, we have proved that, with probability converging to 1 as $K\to\infty$,
	\begin{align}
		R^K_{k+1}
		\leq \theta^K_{k+1}
		\leq \s^K_{k+1}
		\leq R^K_{k+1}+T'_{\ve_{k+1}},
	\end{align}
	which means that on the logarithmic time-scale there is no difference between $R^K_{k,v_{k+1}},\theta^K_{k+1}$ and $\s^K_{k+1}$ and dividing by $\ln K$ they all converge to $s_{k+1}$ as claimed.
	
This finishes the proof of Theorem \ref{thm:conv_beta}.

\subsection{Sequence of resident traits}
\label{sec:4.3_seq_residents}
	We now turn to the proof of Corollary \ref{cor:seq_residents}. To prove the convergence with respect to $\mathcal{M}(V)$, equipped with the weak topology, we have to study the integrals $\dangle{\nu,h}=\int h\dd \nu$ of all bounded and continuous functions $h:V\mapsto\R$ with respect to the measures $\nu^K_\eps(s\ln K)$. Since $V$ is discrete and finite, all finite functions satisfy these conditions. For later purpose we denote $\bar{h}:=\max_{v\in V}\abs{h(v)}$. Under use of \eqref{eq:pf_stoppingtimes}, we have
	\begin{align}
		\sum_{k\in\N_0}\left(-\ifct{\theta^K_k\leq s<\s^K_{k}}2\bar{h}+\ifct{\s^K_k\leq s<\theta^K_{k+1}}h(v_k) \right)
		&\leq\dangle{\nu^K_\ve(s\ln K),h}
		\leq\sum_{k\in\N_0}\left(\ifct{\theta^K_k\leq s<\s^K_{k}}2\bar{h}+\ifct{\s^K_k\leq s<\theta^K_{k+1}}h(v_k) \right),\\
		&\dangle{\nu(s),h}
		=\sum_{k\in\N_0}\ifct{s_k\leq s<s_{k+1}}h(v_k).
	\end{align}
	Since we want to show convergence in $L^p([0,T],\mathcal{M}(V))$, for $p\in [1,\infty)$, we have to compute the distance between the two integrals in the $\norm{\cdot}_{L^p([0,T])}$-norm, which can be estimated as follows
	\begin{align}
		&\norm{\dangle{\nu^K_\ve(\cdot\ln K),h}-\dangle{\nu(\cdot),h}}^p_{L^p([0,T])} \nonumber\\
		\leq& \sum_{k\in\N_0:s_k<T} \left(
		(3\bar{h})^p\abs{\frac{\theta^K_k}{\ln K}-\frac{\s^K_k}{\ln K}}
		+(2\bar{h})^p\abs{\frac{\s^K_k}{\ln K}-s_k}
		+(2\bar{h})^p\abs{s_{k+1}-\frac{\theta^K_{k+1}}{\ln K}}\right) \nonumber\\
		\leq& (5\bar{h})^p \sum_{k\in\N_0:s_k<T}
		\left(\abs{\frac{\s^K_k}{\ln K}-s_k}
		+\abs{\frac{\theta^K_{k+1}}{\ln K}-s_{k+1}} \right).
	\end{align}
	Here the last step consists of an application of triangle inequality at $s_k$ to estimate the first term, followed by a reordering of the sum.
	Since for fixed $T>0$ the sum in fact only consists of finitely many summands and moreover $\s^K_k/\ln K\to s_k$ and $\theta^K_{k+1}/\ln K\to s_{k+1}$ in probability, for $K\to\infty$, we deduce, for all $\d>0$,
	\begin{align}
			\Prob{\norm{\dangle{\nu^K_\ve(\cdot\ln K),h}-\dangle{\nu(\cdot),h}}_{L^p([0,T])}>\d}\overset{K\to\infty}{\longrightarrow}0,
	\end{align}
	which is the claimed convergence.


\appendix
\section{Birth death processes with self-competition}
\label{app:A_attraction_equil}

In this chapter, we prove some general results on birth death processes with self-competition that are used to obtain bounds on the resident's population size in Section \ref{sec:4.1_proof_stability}. In the first section, we quantify the asymptotic probability of such processes to stay close to their equilibrium state for a long time as $K$ tends to infinity. In the second section,  we derive asymptotics for the probability of these processes to stay close to the corresponding deterministic system for a finite time.

Both results apply to processes with constant parameters. More precisely, we study stochastic processes $(X^{K}_t)_{t\geq0}$ with birth rate $b$, natural death rate $d$ and self-competition rate $c/K$, i.e.\ with infinitesimal generators
\begin{align}
\left(\mathcal{L}^Kf\right)(n)=nb(f(n+1)-f(n))+n\left(d+\frac{c}{K}n\right)(f(n-1)-f(n)),
\end{align}
for bounded functions $f:\N_0\to\R$.

\subsection{Attraction to the equilibrium}

We study the probability of birth death processes with self-competition to stay close to their equilibrium $(b-d)K/c$ for a long time. In order to be able to concatenate this result for infinitely many phases in Section \ref{sec:4.1_proof_stability}, we need to bound the probability of diverging from the equilibrium by a sequence that tends to zero fast enough as $K$ tends to infinity. We start by proving a general result for time horizons $\theta_K$. The proof uses a potential theoretic approach, similar to the proof of \cite[Lem.\ 6.3]{BaBoCh17}.

\begin{theorem}\label{stability_general}
Let $X^{K}$ be a birth death process with self-competition and parameters\linebreak $0<d<b$ and $0<c/K$. Define $\bar{n}:=(b-d)/c$. Then there are constants $0<C_1,C_2,C_3<\infty$ such that, for any $\eps$ small and any $K$ large enough, any initial value $0\leq |x-\lceil\bar{n}K\rceil|\leq\frac{1}{2}\left\lfloor\frac{\eps K}{2}\right\rfloor$, any $m\geq0$, and any non-negative sequence $(\theta_K)_{K\in\N}$,
\begin{align}
\P_x(\exists\ t\in[0,\theta_K]:|X^{K}(t)-\lceil \bar{n}K\rceil|>\eps K)\leq mC_1e^{-C_2\eps^2K}+\sum_{l=m}^\infty\left(4\left(1-e^{-C_3K\theta_K/l}\right)^{1/2}\right)^l.
\end{align}
\end{theorem}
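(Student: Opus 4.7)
The plan is to adapt the potential-theoretic strategy of Baar, Bovier and Champagnat to our birth-death process with self-competition, and then use the strong Markov property to concatenate many excursions. Set $A_K := \lceil \bar{n}K\rceil$ and fix two nested windows around $A_K$: an inner window $I^{in}_K := [A_K-\lfloor \eps K/2\rfloor,\,A_K+\lfloor \eps K/2\rfloor]$, which strictly contains the starting point $x$ by the hypothesis $|x-A_K|\leq \tfrac12\lfloor\eps K/2\rfloor$, and an outer window $I^{out}_K := [A_K-\lfloor\eps K\rfloor,\,A_K+\lfloor\eps K\rfloor]$. Let $T_1 < T_2 < \cdots$ denote the start times of successive excursions of $X^{K}$ from $A_K$ up to $\partial I^{in}_K$, and let $N_{\theta_K}$ count those excursions beginning in $[0,\theta_K]$. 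By the strong Markov property at the excursion starts,
\begin{align*}
\P_x(\exists\ t\in[0,\theta_K]:\ X^{K}(t)\notin I^{out}_K) \leq m\cdot P^{(1)} + \P(N_{\theta_K}\geq m),
\end{align*}
where $P^{(1)}$ is the probability that a single excursion reaches $\partial I^{out}_K$ before returning to $A_K$.

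To bound $P^{(1)}$, I would work with the embedded jump chain $Y_n := X^{K}(\tau_n)$, a nearest-neighbor chain with $q_n/p_n = (d+cn/K)/b$. For $n = A_K+j$ this ratio equals $1+cj/(bK) + O(K^{-1})$, so
\begin{align*}
\sum_{k=0}^{j}\log(q_{A_K+k}/p_{A_K+k}) \;=\; \frac{c}{2bK}j^2 + O(j/K).
\end{align*}
Plugging this into the classical birth-death hitting formula (the ratio of partial sums of cumulative products $\prod \rho_k$), the denominator contains terms of order $\exp(c\eps^2K/(2b))$ while the numerator only sums products up to $\partial I^{in}_K$, yielding $P^{(1)}\leq C_1 e^{-C_2\eps^2 K}$ uniformly in the excursion's starting point on $\partial I^{in}_K$ and in the choice of inner starting point $x$.

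For the tail of $N_{\theta_K}$, note that the total jump rate at any state of $I^{out}_K$ is at most $C_3'K$, so each inter-excursion time $T_{i+1}-T_i$ is stochastically minorized by an independent exponential random variable of parameter $C_3'K$. If $N_{\theta_K}\geq l$, then $\sum_{i=1}^l (T_i - T_{i-1}) \leq \theta_K$, so by pigeonhole at least $\lceil l/2\rceil$ of these increments are bounded by $2\theta_K/l$. A union bound over the $\binom{l}{\lceil l/2\rceil}\leq 2^l$ choices of which increments are short, combined with the independent exponential minorization, gives
\begin{align*}
\P(N_{\theta_K}\geq l) \;\leq\; 2^l\bigl(1-e^{-2C_3'K\theta_K/l}\bigr)^{l/2} \;\leq\; \bigl(4\,(1-e^{-C_3K\theta_K/l})^{1/2}\bigr)^l,
\end{align*}
after relabeling $C_3:=2C_3'$ and loosening the prefactor. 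Summing over $l\geq m$ produces the second term of the claim.

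The main obstacle is the potential-theoretic calculation: one must handle the $O(K^{-1})$ corrections to $\log(q_n/p_n)$ carefully so that their accumulation stays at most $O(j/K)=O(\eps)$ on the relevant range and does not swallow the leading quadratic exponent $cj^2/(2bK)$; this is what produces the clean $e^{-C_2\eps^2K}$ factor with a constant $C_2$ independent of both $\eps$ (small) and $x$. A secondary but necessary technical point is to make the excursion construction robust to the possible one-step overshoot of $\partial I^{in}_K$ and to the fact that $x$ lies strictly inside the inner window; both are handled by very slightly enlarging the inner window before invoking the potential-theoretic formula.
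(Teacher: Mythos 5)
Your proposal follows essentially the same strategy as the paper's proof: a potential-theoretic estimate showing a single excursion escapes to distance $\eps K$ with probability $O(e^{-C\eps^2 K})$, an exponential minorization of inter-excursion times, and a pigeonhole plus union-bound argument to control the number of excursions fitting into $[0,\theta_K]$; these are exactly steps 1--5 of the paper's argument. The only genuine packaging difference is that the paper first folds the process to $V^K(t)=|X^K(t)-\lceil\bar n K\rceil|$ and dominates its (non-Markovian) jump chain by an explicit reflected Markov chain $Z^K$ before invoking the hitting formula, and defines excursions as returns of $Z^K$ to $0$ (i.e.\ to $\lceil\bar n K\rceil$), whereas you work directly with $X^K$ and count excursions that reach an intermediate shell at radius $\eps K/2$; both yield the same quadratic exponent and the same inter-arrival-time bound, and both give a bound of the claimed form, so these are cosmetic rather than substantive distinctions.
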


\begin{proof}
We start by defining a couple of new processes based on $X^{K}$. Let
\begin{align}
	V^{K}(t):=|X^{K}(t)-\lceil \bar{n}K\rceil|
\end{align}
be the distance of $X^{K}$ from its equilibrium state $\bar{n}K$ at time $t$. Note that this is no longer a Markov process. For $V^{K}$, let $(Y^{K}_n)_{n\in\N_0}$ be its discrete jump chain (taking values in $\N$, not Markovian) and $(S^K_n)_{n\in\N}$ its jump times.

The proof is divided into multiple steps:
\begin{itemize}
\item[1)] Bound the transition probabilities of $Y^{K}$.
\item[2)] Define a discrete time Markov chain $(Z^{K}_n)_{n\in\N_0}$ such that $Z^{K}_n\geq Y^{K}_n$, for all $n\in\N$.
\item[3)] Derive an upper bound for the probability of $Z^{K}$ hitting $\lfloor\eps K\rfloor$ before 0.
\item[4)] Derive an upper bound for the probability of $Z^{K}$ returning to 0 at most $m$ times before hitting $\lfloor\eps K\rfloor$.
\item[5)] Consider a continuous time version $\tilde{Z}^{K}$ of $Z^{K}$ to deduce the final result.
\end{itemize}

\textbf{Step 1:} The discrete-time process $Y^{K}$ changes its state due to either a birth or a death event in the original process $X^{K}$ and hence moves by increments of $\pm1$ in each step. It is therefore a random walk on $\N_0$ that is reflected in 0. For the boundary case, we obtain
\begin{align}
	\P(Y^{K}_{n+1}=1|Y^{K}_n=0)=1.
\end{align}
For any other $1\leq i\leq\eps K$, using that $c\lceil\bar{n}K\rceil/K\in[b-d,b-d+c/K]$, we can bound
\begin{align}
\P&(Y^{K}_{n+1}=i+1|Y^{K}_n=i)\nonumber\\
&=\P(\text{birth event if }X^{K}=\lceil\bar{n}K\rceil+i \text{ or death event if }X^{K}=\lceil\bar{n}K\rceil-i)\nonumber\\
&\leq \frac{b}{b+d+\frac{c}{K}(\lceil\bar{n}K\rceil+i)}\lor\frac{d+\frac{c}{K}(\lceil\bar{n}K\rceil-i)}{b+d+\frac{c}{K}(\lceil\bar{n}K\rceil-i)}\nonumber\\
&\leq \frac{b}{2b+\frac{c}{K}i}\lor\frac{b-\frac{c}{K}(i-1)}{2b-\frac{c}{K}i}=\left(\frac{1}{2}-\frac{\frac{c}{2K}i}{2b+\frac{c}{K}i}\right)\lor\left(\frac{1}{2}-\frac{\frac{c}{2K}(i-2)}{2b-\frac{c}{K}i}\right)\nonumber\\
&\leq\frac{1}{2}-C\frac{i}{K}=:p^{K}_+(i),
\end{align}
for some constant $C>0$, as long as $\eps\leq \bar{n}$ and hence $ci/K\leq c\eps\leq b-d$.

\textbf{Step 2:} Define a discrete-time process $(Z^{K}_n)_{n\in\N_0}$ that is coupled to $(Y^{K}_n)_{n\in\N_0}$ by
\begin{itemize}
\item $Z^{K}_0=Y^{K}_0$
\item Whenever $Z^{K}_n=Y^{K}_n=i$ and $Y^{K}_{n+1}=i+1$, we set $Z^{K}_{n+1}=i+1$.
\item Whenever $Z^{K}_n=Y^{K}_n=i$ and $Y^{K}_{n+1}=i-1$,  we set $Z^{K}_{n+1}=i+1$ with probability $(p^{K}_+(i)-\P(Y^{K}_{n+1}=i+1|Y^{K}_n=i))/\P(Y^{K}_{n+1}=i-1|Y^{K}_n=i)$ and $Z^{K}_{n+1}=i-1$ else.
\item Whenever $Z^{K}_n=i>Y^{K}_n$, we set $Z^{K}_{n+1}=i+1$ with probability $p^{K}_+(i)$ and $Z^{K}_{n+1}=i-1$ else.
\end{itemize}
Then $Z^{K}$ is a discrete-time Markov chain such that $Z^{K}_n\geq Y^{K}_n$, for all $n\in\N_0$, and
\begin{align}
p^{K}(i,j):=\P(Z^{K}_{n+1}=j|Z^{K}_n=i)=\begin{cases}1&i=0,\ j=1,\\
p^{K}_+(i)&i\geq1,\ j=i+1,\\
1-p^{K}_+(i)&i\geq1,\ j=i-1,\\
0&\text{else.}\end{cases}
\end{align}

\textbf{Step 3:} For the Markov chain $Z^{K}$, we define the stopping times
\begin{align}
	\tau^{(Z,K)}_j:=\inf\{n\in\N_0:Z^{K}_n=j\}.
\end{align}
By standard potential theoretic arguments (see \cite[Ch.\ 7.1.4]{BovHol15}), we obtain, for initial values $0\leq z\leq\lfloor\eps K\rfloor$,
\begin{align}
\P_z\left(\tau^{(Z,K)}_{\lfloor\eps K\rfloor}<\tau^{(Z,K)}_0\right)
=\frac{\sum_{i=1}^z\prod_{j=1}^{i-1}\frac{p(j,j-1)}{p(j,j+1)}}{\sum_{i=1}^{\lfloor\eps K\rfloor}\prod_{j=1}^{i-1}\frac{p(j,j-1)}{p(j,j+1)}}
=\frac{\sum_{i=1}^z\exp\left(\sum_{j=1}^{i-1}\ln\left(\frac{1+2C\frac{j}{K}}{1-2C\frac{j}{K}}\right)\right)}{\sum_{i=1}^{\lfloor\eps K\rfloor}\exp\left(\sum_{j=1}^{i-1}\ln\left(\frac{1+2C\frac{j}{K}}{1-2C\frac{j}{K}}\right)\right)}.
\end{align}

Using that $\ln(1+\xi)=\xi+O(\xi^2)$, as $\xi\to0$, and $j\leq\lfloor\eps K\rfloor$, we can approximate, as $\eps\to0$,
\begin{align}
\ln\left(\frac{1+2C\frac{j}{K}}{1-2C\frac{j}{K}}\right)&=\ln\left(1+\frac{4C\frac{j}{K}}{1-2C\frac{j}{K}}\right)=\frac{4C\frac{j}{K}}{1-2C\frac{j}{K}}+O\left(\left(\frac{4C\frac{j}{K}}{1-2C\frac{j}{K}}\right)^2\right)\nonumber\\
&=4C\frac{j}{K}\left(1+\frac{2C\frac{j}{K}}{1-2C\frac{j}{K}}\right)+O\left(\left(\frac{j}{K}\right)^2\right)\nonumber\\
&=4C\frac{j}{K}+O\left(\left(\frac{j}{K}\right)^2\right)=4C\frac{j}{K}+O(\eps^2).
\end{align}

Plugging these asymptotics back into the above expression yields
\begin{align}
\P_z\left(\tau^{(Z,K)}_{\lfloor\eps K\rfloor}<\tau^{(Z,K)}_0\right)
&=\frac{\sum_{i=1}^z\exp\left(\sum_{j=1}^{i-1}4C\frac{j}{K}+O(\eps^2)\right)}{\sum_{i=1}^{\lfloor\eps K\rfloor}\exp\left(\sum_{j=1}^{i-1}4C\frac{j}{K}+O(\eps^2)\right)}\nonumber\\
&\leq\frac{\sum_{i=1}^z\exp\left(4C\frac{i(i-1)}{2K}+O((i-1)\eps^2)\right)}{\sum_{i=\left\lfloor\frac{\eps K}{2}\right\rfloor}^{\lfloor\eps K\rfloor}\exp\left(4C\frac{i(i-1)}{2K}+O((i-1)\eps^2)\right)}\nonumber\\
&\leq\frac{z\exp\left(2C\frac{z^2}{K}+O(z\eps^2)\right)}{\left\lfloor\frac{\eps K}{2}\right\rfloor\exp\left(\frac{2C}{K}\left(\left\lfloor\frac{\eps K}{2}\right\rfloor^2-\left\lfloor\frac{\eps K}{2}\right\rfloor\right)+O(K\eps^3)\right)}\nonumber\\
&=\frac{z}{\left\lfloor\frac{\eps K}{2}\right\rfloor}\exp\left(\frac{2C}{K}\left(z^2-\left\lfloor\frac{\eps K}{2}\right\rfloor^2+\left\lfloor\frac{\eps K}{2}\right\rfloor\right)+O(z\eps^2)+O(K\eps^3)\right)\nonumber\\
&\leq\frac{z}{\left\lfloor\frac{\eps K}{2}\right\rfloor}\exp\left(\frac{2C}{K}\left(z^2-\left\lfloor\frac{\eps K}{2}\right\rfloor^2+\left\lfloor\frac{\eps K}{2}\right\rfloor+O(K^2\eps^3)\right)\right)\nonumber\\
&\leq\frac{1}{2}\exp\left(-\frac{2C}{K}\frac{1}{4}\left\lfloor\frac{\eps K}{2}\right\rfloor^2\right)\nonumber\\
&\leq C_1 e^{-C_2\eps^2K},
\end{align}
for some uniform constants $C_1,C_2>0$, as long as $0\leq z\leq\frac{1}{2}\left\lfloor\frac{\eps K}{2}\right\rfloor$ and $\eps$ small enough such that $\left\lfloor\frac{\eps K}{2}\right\rfloor+O(K^2\eps^3)\leq \frac{1}{2}\left\lfloor\frac{\eps K}{2}\right\rfloor^2$ for large $K$.

\textbf{Step 4:} Let $B^{K}$ be the random variable that describes the number of visits to 0 of $Z^{K}$ before first hitting $\lfloor\eps K\rfloor$ (not counting the first visit/start in case $Z^{K}_0=0$).  First consider $1\leq z\leq\frac{1}{2}\left\lfloor\frac{\eps K}{2}\right\rfloor$. Then, for $\eps$ small and $K$ large enough,
\begin{align}
\P_z\left(B^{K}=0\right)=\P_z\left(\tau^{(Z,K)}_{\lfloor\eps K\rfloor}<\tau^{(Z,K)}_0\right)\leq C_1e^{-C_2\eps^2K}
\end{align}
and, for all $l\geq1$, due to the strong Markov property,
\begin{align}
\P_z\left(B^{K}=l\right)&=\P_z\left(\tau^{(Z,K)}_0<\tau^{(Z,K)}_{\lfloor\eps K\rfloor}\right)\left(1-\P_1\left(\tau^{(Z,K)}_{\lfloor\eps K\rfloor}<\tau^{(Z,K)}_0\right)\right)^{l-1}\P_1\left(\tau^{(Z,K)}_{\lfloor\eps K\rfloor}<\tau^{(Z,K)}_0\right)\nonumber\\
&\leq C_1e^{-C_2\eps^2K}.
\end{align}
For $z=0$ and $l\geq0$,
\begin{align}
	\P_0\left(B^{K}=l\right)=\P_1\left(B^{K}=l\right)\leq C_1e^{-C_2\eps^2K}.
\end{align}
Hence, for $0\leq z\leq\frac{1}{2}\left\lfloor\frac{\eps K}{2}\right\rfloor$, $\eps$ small and $K$ large enough, and any $m\geq0$,
\begin{align}
	\P_z\left(B^{K}\leq m\right)\leq(m+1)C_1e^{-C_2\eps^2K}.
\end{align}

\textbf{Step 5:} Finally, let $\tilde{Z}^{K}$ be the continuous time process that has $(Z^{K}_n)_{n\in\N_0}$ as a jump chain and the same jump times $(S^K_n)_{n\in\N}$ as the original process $X^{K}$. By the above construction, for all $t\geq0$, we have
\begin{align}
\tilde{Z}^{K}(t)\geq V^{K}(t)=|X^{K}(t)-\lceil\bar{n}K\rceil|.
\end{align}
We can therefore deduce that, for $\eps$ small and $K$ large enough, initial value $x=\lceil\bar{n}K\rceil\pm z$ such that $0\leq z\leq\frac{1}{2}\left\lfloor\frac{\eps K}{2}\right\rfloor$, and any $m\geq0$,
\begin{align}
\P_{x}&(\exists\ t\in[0,\theta_K]:|X^{K}(t)-\lceil \bar{n}K\rceil|>\eps K)\nonumber\\
&\leq\P_z(\exists\ t\in[0,\theta_K]:\tilde{Z}^{K}(t)>\eps K)\nonumber\\
&=\sum_{l=0}^\infty\P_z(\exists\ t\in[0,\theta_K]:\tilde{Z}^{K}(t)>\eps K,B^{K}=l)\nonumber\\
&\leq \P_z(B^{K}\leq m-1)+\sum_{l=m}^\infty\P_z(\exists\ t\in[0,\theta_K]:\tilde{Z}^{K}(t)>\eps K,B^{K}=l)
\end{align}

Now let $(I^{K}_j)_{j\geq1}$ be the times in between visits to 0 of $\tilde{Z}^{K}$, i.e., for $j\geq1$,
\begin{align}
	I^{K}_j:=\inf\left\{t>0: \exists\ 0<s<t: \tilde{Z}^{K}_{s+\sum_{i=1}^{j-1}I^{K}_i}\neq0, \tilde{Z}^{K}_{t+\sum_{i=1}^{j-1}I^{K}_i}=0\right\}.
\end{align}
Then, since each return takes at least the time of a single jump in the original Markov chain $X^{K}$, as long as $\tilde{Z}^{K}$ does not surpass $2\bar{n}K>\lceil\bar{n}K\rceil+\eps K$, there are independent identically distributed exponential random variables $(E^{K}_j)_{j\geq1}$ with parameter $4\bar{n}K(b+d+c\bar{n})=:\bar{C}K$ such that, for each $a\in\R$,
\begin{align}
	\P(I^{K}_j<a)\leq\P(E^{K}_j<a)=(1-e^{-a\bar{C}K})_+.
\end{align}

To bound the probabilities $\P_z(\exists\ t\in[0,\theta_K]:\tilde{Z}^{K}(t)>\eps K,B^{K}=l)$, we argue as follows: If there were at least $l/2$ occurrences of $I^{K}_j\geq 2\theta_K/l$ (out of the $l$ times between visits to 0), the time until $\tilde{Z}^{K}$ first surpasses $\eps K$ could be bounded from below by
\begin{align}
	\sum_{j=1}^lI^{K}_j\geq\frac{l}{2}\frac{2\theta_K}{l}=\theta_K.
\end{align}
Hence, by contradiction we can bound
\begin{align}
\P_z&(\exists\ t\in[0,\theta_K]:\tilde{Z}^{K}(t)>\eps K,B^{K}=l)
\leq \P_z\left(\sum_{j=1}^l\ifct{I^{K}_j<2\theta_K/l}>\frac{l}{2},B^{K}=l\right)\nonumber\\
&\leq \P_z\Bigg(\underbrace{\sum_{j=1}^l\ifct{E^{K}_j<2\theta_K/l}}_{\sim\text{Bin}\left(l,1-e^{-\bar{C}K2\theta_K/l}\right)}>\frac{l}{2}\Bigg)
= \sum_{j=\lceil l/2\rceil}^l {l\choose j}\left(1-e^{-\bar{C}K2\theta_K/l}\right)^j\left(e^{-\bar{C}K2\theta_K/l}\right)^{l-j}\nonumber\\
&\leq \frac{l}{2}2^l\left(1-e^{-\bar{C}K2\theta_K/l}\right)^{l/2}
\leq \left(4\left(1-e^{-\bar{C}K2\theta_K/l}\right)^{1/2}\right)^l,
\end{align}
where we used that ${l\choose j}\leq2^l$ and $l/2\leq 2^l$. 

Combining this with step 4 yields
\begin{align}
\P_{x}&(\exists\ t\in[0,\theta_K]:|X^{K}(t)-\lceil \bar{n}K\rceil|>\eps K)\nonumber\\
&\leq \P_z(B^{K}\leq m-1)+\sum_{l=m}^\infty\P_z(\exists\ t\in[0,\theta_K]:\tilde{Z}^{K}(t)>\eps K,B^{K}=l)\nonumber\\
&\leq mC_1e^{-C_2\eps^2K}+\sum_{l=m}^\infty\left(4\left(1-e^{-\bar{C}K2\theta_K/l}\right)^{1/2}\right)^l,
\end{align}
This concludes the proof with $C_3=2\bar{C}$.
\end{proof}

From this general theorem, we can now derive the result necessary for the proof in Section \ref{sec:4.1_proof_stability}, considering time horizons of size $\theta_K=\ln K$ as an upper bound for phases with length of order $\lambda_K$ and bounding the probability of diverging from the equilibrium in $o(\lambda_K/\ln K)$ to concatenate $O(\ln K/\lambda_K)$ many phases.

\begin{theorem}\label{stability_lnK}
	Let $X^{K}$ be a birth death process with self-competition and parameters\linebreak$0<d<b$ and $0<c/K$. Define $\bar{n}:=(b-d)/c$ and let $1\ll\lambda_K\ll\ln K$ as $K\to\infty$. Then, for $\eps$ small enough and any sequence of initial values $0\leq |x^{K}-\lceil\bar{n}K\rceil|\leq\frac{1}{2}\left\lfloor\frac{\eps K}{2}\right\rfloor$,
	\begin{align}
		\lim_{K\to\infty}\frac{\ln K}{\lambda_K}\cdot\P_{x^{K}}(\exists\ t\in[0,\ln K]:|X^{K}(t)-\lceil \bar{n}K\rceil|>\eps K)=0.
	\end{align}
\end{theorem}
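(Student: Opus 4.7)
The plan is to apply Theorem~\ref{stability_general} with $\theta_K=\ln K$ and to tune the free parameter $m=m_K$ so that the two terms in the resulting bound
\begin{align}
\P_{x^K}\bigl(\exists\, t\in[0,\ln K]\colon |X^K(t)-\lceil \bar{n} K\rceil|>\eps K\bigr)\leq m_K C_1 e^{-C_2 \eps^2 K} + \sum_{l=m_K}^\infty \Bigl(4\bigl(1-e^{-C_3 K \ln K/l}\bigr)^{1/2}\Bigr)^l
\end{align}
both vanish faster than $\lambda_K/\ln K$. The initial-value hypothesis of Theorem~\ref{stability_general} coincides exactly with the one imposed here on $x^K$, so the general bound applies directly with no extra work.

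The critical step is to pick $m_K=\lceil C K\ln K\rceil$ for a constant $C$ to be chosen large. For every $l\geq m_K$ the elementary estimate $1-e^{-x}\leq x$ together with $K\ln K/l\leq 1/C$ gives
\begin{align}
\Bigl(4\bigl(1-e^{-C_3 K \ln K/l}\bigr)^{1/2}\Bigr)^l\leq \bigl(4\sqrt{C_3/C}\bigr)^l=:q_C^l,
\end{align}
and taking $C>16\,C_3$ guarantees $q_C<1$. The tail sum is then bounded by the convergent geometric series $q_C^{m_K}/(1-q_C)$. Since $m_K$ is of order $K\ln K$, $q_C^{m_K}=\exp(-|\ln q_C|\,m_K)$ decays super-exponentially in $K$, so multiplying by $\ln K/\lambda_K$ (which grows at most polynomially in $K$, since $\lambda_K\geq 1$ eventually) still produces a sequence tending to zero.

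The first term is handled for free: $m_K C_1 e^{-C_2 \eps^2 K}=O(K\ln K)\cdot e^{-C_2 \eps^2 K}$ decays exponentially in $K$ for every fixed $\eps>0$, and the factor $\ln K/\lambda_K$ is negligible against this exponential decay. Adding the two bounds and multiplying by $\ln K/\lambda_K$ yields the claimed limit. I do not foresee a real obstacle: one only has to verify that the constants $C_1,C_2,C_3$ produced by Theorem~\ref{stability_general} are independent of $\eps$ and of the particular sequence $x^K$ (both are explicit from its proof), so that the choice $C>16\,C_3$ can be made before fixing $\eps$ and $x^K$. Everything else is bookkeeping.
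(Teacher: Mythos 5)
Your proof is correct and takes essentially the same approach as the paper: both apply Theorem~\ref{stability_general} with $\theta_K=\ln K$ and a $K$-dependent choice of $m$. The paper takes $m=K^2$ (leading to a tail of order $\ln K/K$ and invoking $1\ll\lambda_K$ to conclude), whereas your $m_K\asymp K\ln K$ makes the summand a fixed geometric ratio $q_C<1$ and hence the tail super-exponentially small in $K$ -- a marginally cleaner piece of bookkeeping, but the same argument.
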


\begin{proof}
We apply Theorem \ref{stability_general} with $\theta_K=\ln K$ and $m^{K}=K^2$ to obtain that, for $\eps$ small, $K$ large enough and $0\leq |x^{K}-\lceil\bar{n}K\rceil|\leq\frac{1}{2}\left\lfloor\frac{\eps K}{2}\right\rfloor$, 
\begin{align}
\P_{x^{K}}&(\exists\ t\in[0,\ln K]:|X^{K}(t)-\lceil \bar{n}K\rceil|>\eps K)\nonumber\\
&\leq K^2C_1e^{-C_2\eps^2K}+\sum_{l=K^2}^\infty\left(4\left(1-e^{-C_3K\ln K/l}\right)^{1/2}\right)^l\nonumber\\
&\leq K^2C_1e^{-C_2\eps^2K}+\sum_{l=K^2}^\infty\bigg(\underbrace{4\left(1-e^{-C_3K\ln K/K^2}\right)^{1/2}}_{<1\text{ for $K$ large enough}}\bigg)^l\nonumber\\
&\leq K^2C_1e^{-C_2\eps^2K}+\left(4^2\left(1-e^{-C_3K\ln K/K^2}\right)\right)^{K^2/2}\underbrace{\frac{1}{1-4\left(1-e^{-C_3K\ln K/K^2}\right)^{1/2}}}_{\leq C_4<\infty\text{ for $K$ large enough}}\nonumber\\
&\leq K^2C_1e^{-C_2\eps^2K}+C_4\bigg(\underbrace{16C_3\frac{\ln K}{K}}_{<1\text{ for $K$ large enough}}\bigg)^{K^2/2}\nonumber\\
&\leq K^2C_1e^{-C_2\eps^2K}+C_416C_3\frac{\ln K}{K}.
\end{align}

Now, for fixed $\eps^2$,
\begin{align}
	e^{-C_2\eps^2K}\ll\frac{\lambda_K}{K^2\ln K}\ \Leftrightarrow\ K^2e^{-C_2\eps^2K}\ll\frac{\lambda_K}{\ln K}
\end{align}
and $1\ll\lambda_K$ implies
\begin{align}
	\frac{(\ln K)^2}{K}\ll\lambda_K\ \Leftrightarrow\ \frac{\ln K}{K}\ll\frac{\lambda_K}{\ln K}.
\end{align}
Hence we obtain that, for fixed $\eps>0$,
\begin{align}
\P_{x^{K}}&(\exists\ t\in[0,\ln K]:|X^{K}(t)-\lceil \bar{n}K\rceil|>\eps K)=O(K^2e^{-C_2\eps^2K})+O(\ln K/K)=o(\lambda_K/\ln K),
\end{align}
which yields the desired result.
\end{proof}

\subsection{Convergence to the deterministic system}

We now provide a result on the convergence of stochastic birth death processes with competition to the corresponding deterministic system, for finite time horizons. The proof is similar to the one of \cite[Ch.\ 11, Thm.\ 2.1]{EtKu86}. Instead of almost sure convergence, we derive convergence in probability, but are able to quantify the convergence speed in return, which again allows us to concatenate the result for infinitely many phases in Section \ref{sec:4.1_proof_stability}.

\begin{theorem}\label{EthKu_variation}
Let $X^{K}$ be a birth death process with self-competition and parameters\linebreak $0<d<b$ and $0<c/K$. Assume that $X^{K}(0)/K\to x_0$ and $1\ll\lambda_K\ll\ln K$, as $K\to\infty$, and let $(x(t))_{t\geq0}$ be the solution of the ordinary differential equation
\begin{align}\label{ODEresident}
\dot{x}(t)=x(t)\left[b-d-c\cdot x(t)\right]
\end{align}
with initial value $x(0)=x_0$. Then, for every $0\leq T<\infty$ and $\eps>0$,
\begin{align}
	\lim_{K\to\infty}\frac{\ln K}{\lambda_K}\cdot\P\left(\sup_{t\leq T}\left|\frac{X^{K}(t)}{K}-x(t)\right|>\eps\right)=0.
\end{align}
\end{theorem}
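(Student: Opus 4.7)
The plan is to quantify the classical Ethier--Kurtz LLN proof (\cite[Ch.~11, Thm.~2.1]{EtKu86}) by replacing its almost-sure step with Doob's $L^2$ inequality, which extracts an explicit rate of convergence. Setting $Z^K(t):=X^K(t)/K$ and $F(z):=z(b-d-cz)$, Dynkin's formula produces the semimartingale decomposition
\begin{align}
Z^K(t) = Z^K(0) + \int_0^t F(Z^K(s))\,\dd s + M^K(t),
\end{align}
where $M^K$ is a mean-zero martingale with predictable quadratic variation $\langle M^K\rangle_t = K^{-2}\int_0^t X^K(s)\bigl(b+d+\tfrac{c}{K}X^K(s)\bigr)\,\dd s$.

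I would first localize by fixing $A$ large enough that $x_0<A$ and $(x(t))_{t\leq T}$ lies strictly inside $[0,A]$, and setting $\tau_K:=\inf\{t\geq 0:Z^K(t)>A\}$. Because $X^K$ is stochastically dominated by a pure-birth process of rate $b$ with the same initial value (whose variance at time $T$ is of order $K$), a Chebyshev estimate after rescaling by $K$ gives $\P(\tau_K\leq T)=O(1/K)$ as soon as $A\geq x_0 e^{bT}+2$. On the stopped interval $[0,T\wedge\tau_K]$, $F$ is Lipschitz with constant $L_A$ depending only on $A,b,c$, so subtracting the ODE and applying Gronwall's inequality yields
\begin{align}
\sup_{t\leq T\wedge\tau_K}\bigl|Z^K(t)-x(t)\bigr|\leq \Bigl(\bigl|Z^K(0)-x_0\bigr|+\sup_{t\leq T\wedge\tau_K}|M^K(t)|\Bigr) e^{L_A T}.
\end{align}

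On the localized event, $\langle M^K\rangle_T\leq C(A,b,c,d)T/K$, so Doob's $L^2$ maximal inequality gives $\P(\sup_{t\leq T\wedge\tau_K}|M^K(t)|>\delta)\leq 4\E[\langle M^K\rangle_{T\wedge\tau_K}]/\delta^2=O(1/K)$ for any fixed $\delta>0$. Combining this with $Z^K(0)\to x_0$ (which allows one to absorb $|Z^K(0)-x_0|$ into $\eps/(2e^{L_A T})$ for $K$ large) and the localization bound then produces $\P(\sup_{t\leq T}|Z^K(t)-x(t)|>\eps)=O(1/K)$. Since $K\gg\ln K$ and $\lambda_K\geq 1$ eventually, we have $\ln K/(K\lambda_K)\to 0$, and the $O(1/K)$ estimate is comfortably $o(\lambda_K/\ln K)$, which is the required rate.

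The main obstacle is the localization step: whereas the classical Ethier--Kurtz proof only needs $\P(\tau_K\leq T)\to 0$, here one needs the quantitative bound $O(1/K)$, and the pure-birth coupling provides this cleanly thanks to its linear-in-$K$ variance. Once localization is in place, the Gronwall--Doob machinery produces the $1/K$ rate essentially for free, and the hypothesis $\lambda_K\gg 1$ enters only via the trivial comparison $1/K\ll\lambda_K/\ln K$.
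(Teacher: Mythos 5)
Your proposal is correct and follows essentially the same strategy as the paper: a martingale decomposition of $X^K/K$, localization by a stopping time confining the process to $[0,A]$, Gronwall's inequality to compare with the ODE, and Doob's $L^2$ maximal inequality to bound the martingale deviation, yielding an $O(1/K)$ failure probability which is then trivially $o(\lambda_K/\ln K)$. The only minor difference is in the localization step: you bound $\P(\tau_K\leq T)=O(1/K)$ separately via a Yule-process coupling and Chebyshev, whereas the paper chooses $M\geq 2(\bar{n}\lor x_0)$ and $\eps<M/2$ so that the event $\{\sup_{t\leq T}|X^K(t)/K-x(t)|\leq\eps\}$ automatically implies $\tau^K>T$, eliminating the need for a separate bound; both resolutions are valid and give the same rate.
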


\begin{proof}
We start be writing $X^{K}$ in terms of independent standard Poisson processes $Y_b$ and $Y_d$,
\begin{align}
X^{K}(t)\overset{(d)}{=}X^{K}(0)+Y_b\left(K\int_0^t b\frac{X^{K}(s)}{K}ds\right)-Y_d\left(K\int_0^t d\frac{X^{K}(s)}{K}+c\left(\frac{X^{K}(s)}{K}\right)^2ds\right)
\end{align}
Note that we only have equality in distribution here, since we choose $Y_b$ and $Y_d$ uniformly across different values of $K$ and the $X^{K}$ stand in no specific relation to each other. We will omit this from the notation for the remainder of the proof, as we are only proving convergence in probability and equality in distribution is therefore sufficient.

Setting $\tilde{Y}_b(u):=Y_b(u)-u$ and $\tilde{Y}_d:=Y_d(u)-u$ (i.e.\ centering the Poisson processes at their expectations), this representation yields
\begin{align}
\frac{X^{K}(t)}{K}=&\ \frac{X^{K}(0)}{K}+\int_0^t\left(b-d-c\frac{X^{K}(s)}{K}\right)\frac{X^{K}(s)}{K}ds\nonumber\\
&\ +\frac{1}{K}\tilde{Y}_b\left(K\int_0^t b\frac{X^{K}(s)}{K}ds\right)-\frac{1}{K}\tilde{Y}_d\left(K\int_0^t d\frac{X^{K}(s)}{K}+c\left(\frac{X^{K}(s)}{K}\right)^2ds\right).
\end{align}

Now we introduce the stopping time
\begin{align}
	\tau^K:=\inf\left\{t\geq0:\frac{X^{K}(t)}{K}\neq[0,M]\right\},
\end{align}
for some large $M$ (e.g. $M\geq 2(\bar{n}\lor x_0)$, where $\bar{n}=(b-d)/c$). Up to time $\tau^K$, the population size of our process is bounded by $KM$ and, using the integral form of \eqref{ODEresident}, we deduce
\begin{align}
\left|\frac{X^{K}(t\land\tau^K)}{K}-x(t\land\tau^K)\right|\leq\ &\left|\frac{X^{K}(0)}{K}-x(0)\right|\nonumber\\
&\ +\int_0^{t\land\tau^K}(b+d)\left|\frac{X^{K}(s)}{K}-x(s)\right|+c\underbrace{\left|\left(\frac{X^{K}(s)}{K}\right)^2-x(s)^2\right|}_{\leq 2M\left|\frac{X^{K}(s)}{K}-x(s)\right|}ds\nonumber\\
&\ +\frac{1}{K}\left|\tilde{Y}_b\left(K\int_0^{t\land\tau^K} b\frac{X^{K}(s)}{K}ds\right)\right|\nonumber\\
&\ +\frac{1}{K}\left|\tilde{Y}_d\left(K\int_0^{t\land\tau^K} d\frac{X^{K}(s)}{K}+c\left(\frac{X^{K}(s)}{K}\right)^2ds\right)\right|\nonumber\\
\leq&\ \left|\frac{X^{K}(0)}{K}-x(0)\right|+\int_0^{t}C\left|\frac{X^{K}(s\land\tau^K)}{K}-x(s\land\tau^K)\right|ds\nonumber\\
&\ +\frac{1}{K}\sup_{u\in[0,t\land\tau^K]}\left|\tilde{Y}_b\left(K bMu\right)\right|+\frac{1}{K}\sup_{u\in[0,t\land\tau^K]}\left|\tilde{Y}_d\left(K(dMu+cM^2u)\right)\right|.
\end{align}
Here, $C:=b+d+2Mc$ and in the last line we used that, even though the centred Poisson processes can take positive and negative values, we can bound their absolute value by considering the supremum over all possible rates.

Next, Gronwall's inequality implies that
\begin{align}
&\left|\frac{X^{K}(t\land\tau^K)}{K}-x(t\land\tau^K)\right|\nonumber\\
&\leq \left[ \left|\frac{X^{K}(0)}{K}-x(0)\right|+\frac{1}{K}\left(\sup_{u\in[0,t]}\left|\tilde{Y}_b\left(K bMu\right)\right|+\sup_{u\in[0,t]}\left|\tilde{Y}_d\left(K(dMu+cM^2u)\right)\right|\right)\right]e^{Ct}.
\end{align}

Now fix a $T\geq 0$. With probability 1, for some large enough $K_0$ and all $K\geq K_0$,\linebreak$|X^{K}(0)-x(0)|\leq e^{-CT}\eps/2$. Hence, for $\eps<M/2$ and $K\geq K_0$,
\begin{align}
&\P\left(\sup_{t\leq T}\left|\frac{X^{K}(t)}{K}-x(t)\right|\leq\eps\right)\nonumber\\
&=\P\left(\sup_{t\leq T}\left|\frac{X^{K}(t)}{K}-x(t)\right|\leq\eps,\tau^K>T\right)\nonumber\\
&= \P\left(\sup_{t\leq T}\left|\frac{X^{K}(t\land\tau^K)}{K}-x(t\land\tau^K)\right|\leq\eps\right)\nonumber\\
&\geq \P\left( \left[ \left|\frac{X^{K}(0)}{K}-x(0)\right|+\frac{1}{K}\left(\sup_{u\in[0,T]}\left|\tilde{Y}_b\left(K bMu\right)\right|+\sup_{u\in[0,T]}\left|\tilde{Y}_d\left(K(dMu+cM^2u)\right)\right|\right)\right]e^{CT}\leq\eps\right)\nonumber\\
&\geq \P\left(\frac{1}{K}\left(\sup_{u\in[0,T]}\left|\tilde{Y}_b\left(K bMu\right)\right|+\sup_{u\in[0,T]}\left|\tilde{Y}_d\left(K(dMu+cM^2u)\right)\right|\right)e^{CT}\leq\frac{\eps}{2}\right)\nonumber\\
&\geq 1-\P\left(\sup_{u\in[0,T]}\left|\tilde{Y}_b\left(K bMu\right)\right|+\sup_{u\in[0,T]}\left|\tilde{Y}_d\left(K(dMu+cM^2u)\right)\right|>\frac{Ke^{-CT}\eps}{2}\right)\nonumber\\
&\geq 1-\P\left(\sup_{u\in[0,T]}\left|\tilde{Y}_b\left(K bMu\right)\right|>\frac{Ke^{-CT}\eps}{4}\right)-\P\left(\sup_{u\in[0,T]}\left|\tilde{Y}_d\left(K(dMu+cM^2u)\right)\right|>\frac{Ke^{-CT}\eps}{4}\right).
\end{align}
To finish the proof, it is left to show that the two probabilities in the last line are of order $o(\lambda_K/\ln K)$. We run through the calculation for $\tilde{Y}_b$, the other summand works equivalently.

Since $\tilde{Y}_b$ is a martingale, $|\tilde{Y}_b|$ is a submartingale. We set $\tilde{T}=bMT$ and $\tilde{\eps}=e^{-CT}\eps/4$. Then, using Doob's maximum inequality \cite[Thm.\ 3.87]{BovHol15}, we obtain
\begin{align}
&\P\left(\sup_{u\in[0,T]}\left|\tilde{Y}_b\left(K bMu\right)\right|>\frac{Ke^{-CT}\eps}{4}\right)
=\P\left(\sup_{u\in[0,\tilde{T}]}\left|\tilde{Y}_b(Ku)\right|^2>(K\tilde{\eps})^2\right)\nonumber\\
&\leq\frac{1}{(K\tilde{\eps})^2}\E\left(\left|\tilde{Y}_b(K\tilde{T})\right|^2\right)
=\frac{1}{(K\tilde{\eps})^2}\E\left(\langle\tilde{Y}_b(K\tilde{T})\rangle\right)\nonumber\\
&=\frac{K\tilde{T}}{(K\tilde{\eps})^2}=\frac{\tilde{T}}{K\tilde{\eps}^2}=o\left(\frac{\lambda_K}{\ln K}\right).
\end{align}
This concludes the proof.
\end{proof}


\section{Branching processes at varying rates}
\label{app:B_branching_processes}

	In this chapter, we collect some technical results for birth death processes with time-dependent rates. These are used to approximate the micro- and mesoscopic populations in the proof of the main result of this paper in Section \ref{sec:4.2_convergence_beta}. In Section \ref{app:B.1_pureBDproc} we focus on pure birth death processes and then add on the effects of immigration in Section \ref{app:B.2_BDIproc}. We build on the results of \cite{ChMeTr19} and work out the averaging effects of growth in a periodically changing environment.
	
	The particular form of time-dependent parameters in this chapter depicts two different effects. Firstly, the system parameters jump periodically between finitely many values on the divergent time-scale $\lK$. Secondly, after each parameter change, the macroscopic subpopulation restabilizes at the corresponding equilibrium, which takes a finite time (independent of $K$). During this short re-equilibration time, we only have weaker estimates on the effective parameters for the growth of the micro- and mesoscopic subpopulations. However, part of the following results is that the general behaviour of the branching processes is not effected significantly by these short phases.

\subsection{Pure birth death processes}
\label{app:B.1_pureBDproc}

	Let us first consider pure birth death processes with time-dependent rates.
		As before, take $1\ll\lK\ll\ln K$ and $\ell\in\N$. Let $b_i,d_i,b_{i,*},d_{i,*},>0$, $T_i>0$ and $T_{*,i}\geq 0$, for $i\in\dset{1,\ldots,\ell}$. Writing $T^\Sigma_i:=\sum_{j=1}^iT_j$, we define the rate functions for birth and death to be the periodic extensions of
	\begin{align}
		\begin{aligned}
			b^K(t)&=\left\{\begin{array}{ll}
				b_{*,i}&:t\in[T^\S_{i-1}\lK,T^\S_{i-1}\lK+T_{*,i})\\
				b_i&:t\in[T^\S_{i-1}\lK+T_{*,i},T^\S_{i}\lK),\\
			\end{array}\right.\\
			d^K(t)&=\left\{\begin{array}{ll}
				d_{*,i}&:t\in[T^\S_{i-1}\lK,T^\S_{i-1}\lK+T_{*,i})\\
				d_i&:t\in[T^\S_{i-1}\lK+T_{*,i},T^\S_{i}\lK),\\
			\end{array}\right.
		\end{aligned}
	\end{align}

	Moreover, for $i\in\dset{1,\ldots,\ell}$, we write $r_i:=b_i-d_i$ and $r^K(t):=b^K(t)-d^K(t)$ to refer to the net growth rate. Finally, we define the average growth rate by $r_\av:=(\sum_{i=1}^{\ell}r_iT_i)/T^\S_\ell$.

	We analyse the processes $\left(Z^K_t\right)_{t\geq 0}$, which are Markov processes with $Z^K_0=\gauss{K^\b-1}$, for some $\b\geq 0$, and with generators
	\begin{align}
		\left(\mathcal{L}^K_tf\right)(n)=b^K(t)n\left(f(n+1)-f(n)\right)+d^K(t)n\left(f(n-1)-f(n)\right),
	\end{align}
	acting on all bounded functions $f:\N_0\to\R$. We refer to the law of these processes by $Z^K\sim\mathrm{BD}\left(b^K,d^K,\beta\right)$.
	
	Our aim is to show that, under logarithmic rescaling of time and size, such population processes grow (or shrink) according to their average net growth rate. Note that the process becomes trivial if $\b=0$. We therefore exclude this case in the entire section without further announcement.
	
	\begin{theorem}
		\label{thm:bd_main}
		Let $Z^K$ follow the law of $\mathrm{BD}\left(b^K,d^K,\beta\right)$, where $\b>0$. Then, for all fixed $T\in(0,\infty)$, the following convergence holds in probability, with respect to the $L^\infty([0,T])$ norm,
		\begin{align}
			\left(\frac{\ln\left(Z^K_{s\ln K}+1\right)}{\ln K}\right)_{s\in[0,T]}
			\overset{K\to\infty}{\longrightarrow}
			\left((\b+r_\av s)\vee 0\right)_{s\in[0,T]}.
		\end{align}
	\end{theorem}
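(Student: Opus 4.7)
The plan is to adapt the martingale framework underlying the time-homogeneous growth result in \cite{ChMeTr19}, replacing the constant net rate $b-d$ by the running integral of $r^K$ and showing that, on the slow $\ln K$ time-scale, the fast periodic fluctuations of $r^K$ average to $r_\av$. Define $\Phi^K(t):=\exp\!\bigl(\int_0^t r^K(u)\,\dd u\bigr)$, so that $M^K_t:=Z^K_t/\Phi^K(t)$ is a nonnegative martingale with $\E[M^K_t]=Z^K_0=\lfloor K^\b-1\rfloor$. A direct calculation, separating the $\lfloor s\ln K/(T^\S_\ell\lK)\rfloor$ completed periods from the final incomplete one and treating the re-equilibration intervals of $K$-independent length $T_{*,i}$ as bounded corrections, gives
\begin{align*}
\frac{1}{\ln K}\int_0^{s\ln K} r^K(u)\,\dd u \;=\; r_\av s \;+\; O\!\left(\frac{1}{\lK}+\frac{\lK}{\ln K}\right)
\end{align*}
uniformly in $s\in[0,T]$, so $\Phi^K(s\ln K)=K^{r_\av s+o(1)}$ uniformly. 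The theorem then reduces to (i) an upper bound $M^K_t \leq K^{\b+o(1)}$ on $[0,T\ln K]$, and (ii) a matching lower bound $\ln M^K_{s\ln K}/\ln K \to \b$ wherever $\b+r_\av s>0$.

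The upper bound follows from Doob's $L^1$ maximal inequality, $\P(\sup_{t\leq T\ln K} M^K_t \geq K^{\b+\ve}) \leq \E[M^K_0]/K^{\b+\ve} \leq K^{-\ve}\to 0$. Combined with $\ln(z+1)\leq\ln 2 + (\ln z)_+$, this yields $\ln(Z^K_{s\ln K}+1)/\ln K \leq (\b+r_\av s)_+ + 3\ve$ uniformly on $[0,T]$ with high probability. In particular, past the crossing time $s_0:=\inf\{s:\b+r_\av s\leq 0\}$, this alone forces $\ln(Z^K_{s\ln K}+1)/\ln K\to 0$, which together with the trivial lower bound $\ln(Z+1)/\ln K\geq 0$ already gives the claim on $[s_0,T]$.

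For the matching lower bound on $\{s\in[0,T]:\b+r_\av s>0\}$, the plan is a second moment argument. The variance $\mathrm{Var}(Z^K_t)$ obeys a linear ODE within each phase, and a phase-by-phase induction using only finitely many rate configurations and the Markov property at phase boundaries yields $\mathrm{Var}(Z^K_t)\leq C\,\E[Z^K_t]\,\Phi^K(t)$, hence $\mathrm{Var}(M^K_t)\leq C K^\b$ uniformly in $t$. Chebyshev then gives $|M^K_t-K^\b|\leq K^{\b-\ve/2}$ at any fixed $t$ with probability $1-O(K^{-\ve})$, which translates into $\ln M^K_t/\ln K \to \b$. Uniformity in $s$ is obtained by discretising $[0,T]$ on a polynomially fine mesh, taking a union bound, and controlling the oscillation of $\ln Z^K$ between mesh points via the upper bound of the previous paragraph on each short subinterval, on which $\Phi^K$ only varies by $K^{o(1)}$.

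The main technical obstacle is verifying that the constant $C$ in $\mathrm{Var}(Z^K_t)\leq C\,\E[Z^K_t]\,\Phi^K(t)$ remains bounded when one concatenates the $O(\ln K/\lK)$ phases needed to cover $[0,T\ln K]$: each phase adds a birth-death correction weighted against $\Phi^K$, and one must verify that these corrections telescope into a single bounded multiplicative constant rather than compounding exponentially. The re-equilibration intervals of length $T_{*,i}$ contribute an analogous bookkeeping issue over a total duration $O(\ln K/\lK)$, but because each is of constant length independent of $K$ their cumulative effect on $\ln \Phi^K$ is $o(\ln K)$ and is absorbed into the averaging error already accounted for. Once these estimates are in place, no further structural ideas beyond those of \cite{ChMeTr19} are needed.
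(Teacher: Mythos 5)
Your strategy shares the paper's backbone (martingale decomposition $Z^K_t=\Phi^K(t)M^K_t$, averaging of $\ln\Phi^K$, then deviation control for $M^K$), but it differs in both directions from what the paper does.

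A genuine improvement: applying Doob's $L^1$ maximal inequality to the nonnegative martingale $M^K$ gives $\P(\sup_{t\le T\ln K}M^K_t\ge K^{\b+\ve})\le K^{-\ve}$ for every $T$ regardless of the sign of $r_\av$, and as you observe this alone (together with $\ln(Z^K+1)/\ln K\ge 0$) closes the theorem past $s_0=\b/|r_\av|$. The paper's Lemma \ref{lem:bd_Doob} is an $L^2$ maximal inequality for $\hat M^K_t=M^K_t-Z^K_0$ whose bound $K^{-2\h}\E[\langle\hat M^K\rangle_{T\ln K}]\sim K^{\b+\d-2\h+|r_\av|T}$ only closes for $T<\b/|r_\av|$ when $r_\av<0$; to cover the rest of $[0,T]$ it goes through the considerably heavier machinery of Lemmas \ref{lem:bd_generating_function}, \ref{lem:bd_ext_time} and steps 2--3 of Lemma \ref{lem:bd_conv_beta_r_neg} (generating function of the Galton--Watson skeleton, two-sided extinction-time tail bounds, and a Yule-process comparison). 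Your $L^1$ bound bypasses all of that for the purposes of this theorem.

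However, the lower bound as written has two real gaps. First, the claimed variance estimate $\mathrm{Var}(M^K_t)\le CK^\b$ uniformly in $t\le T\ln K$ is false when $r_\av\le 0$. Since $\mathrm{Var}(M^K_t)=\E[\langle\hat M^K\rangle_t]=Z^K_0\int_0^t e^{-g^K(u)}(b^K(u)+d^K(u))\,\dd u$, at $t=s\ln K$ with $r_\av<0$ this is of order $K^{\b+|r_\av|s}$ (and of order $K^\b\ln K$ when $r_\av=0$). This is not the bookkeeping issue of concatenating $O(\ln K/\lK)$ phases that you flag as the ``main technical obstacle''; it is the intrinsic fact that a subcritical branching process has coefficient of variation growing like $e^{|r|t/2}$, so the ``corrections'' do compound. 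Chebyshev at a fixed $s$ still closes, but with failure probability $\asymp\ve^{-2}K^{|r_\av|s-\b}$, which only vanishes for $s<s_0$ and degrades as $s\to s_0$ (and the exponent you quote, $1-O(K^{-\ve})$, should read $1-O(K^{\ve-\b})$). Second, the step ``controlling the oscillation of $\ln Z^K$ between mesh points via the upper bound of the previous paragraph'' does not do what you need: the $L^1$ maximal inequality bounds $\sup_t M^K_t$ from above only, whereas the lower bound requires ruling out dips of $M^K_t$ between mesh points. The cleanest repair is to replace Chebyshev-plus-mesh by the paper's $L^2$ maximal inequality for $\hat M^K$, which controls $\sup_{t\le T\ln K}|\hat M^K_t|$ two-sidedly with no discretisation and automatically encodes the $K^{\b+|r_\av|s}$ growth of the bracket into the constraint $T<T_\ve=(\b-\ve)/|r_\av|$ that the paper uses in step 1 of Lemma \ref{lem:bd_conv_beta_r_neg}; keeping your $L^1$ argument for the tail $[T_\ve,T]$ then yields a complete and somewhat shorter proof. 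As written, though, the lower bound does not close.
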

	
	The rest of this section is dedicated to the proof of this theorem and we split up the claim into several lemmas.
	
	\begin{remark}
		To avoid complicated notation, we only conduct the proofs for the case of $\ell=2$ and $T_{*,1}=0$. The general case is proven analogously and there is no deeper insight or additional difficulty to it. This choice allows us to use the shorthand notation $b_*:=b_{*,2}$, $ d_*:=d_{*,2}$, and $T_*:=T_{*,2}$, which leads to the rate functions taking the form
		\begin{align}
			b^K(t)=\left\{\begin{array}{ll}
			b_1&t\in[0,\lK T_1)\\
			b_*&t\in[\lK T_1,\lK T_1+T_*)\\
			b_2&t\in[\lK T_1+T_*,\lK (T_1+T_2)),\\
			\end{array}\right.
			&d^K(t)=\left\{\begin{array}{ll}
			d_1&t\in[0,\lK T_1)\\
			d_*&t\in[\lK T_1,\lK T_1+T_*)\\
			d_2&t\in[\lK T_1+T_*,\lK (T_1+T_2)),\\
			\end{array}\right.
		\end{align}
		with periodic extension.
	\end{remark}
	
	We start by stating an explicit representation of the processes in terms of Poisson measures and derive the corresponding Doob's martingale decomposition.
	
	\textbf{Poisson representation:} Let $Q^{(b)}(\dd s,\dd\theta)$ and $Q^{(d)}(\dd s,\dd\theta)$ be independent homogenous Poisson random measures on $(\R^2_{\geq 0},\dd s ,\dd\theta)$ and denote by $\tilde{Q}^{(*)}=Q^{(*)}-\dd s\dd\theta$, for $*\in\{b,d\}$, their normalized versions. Then we can represent $Z^K$ as
	\begin{align}
		Z^K_t=Z^K_0
			+\int_0^t\int_{\R_{\geq 0}}\ifct{\theta\leq b^K(s^-)Z^K_{s^-}}Q^{(b)}(\dd s,\dd\theta)
			-\int_0^t\int_{\R_{\geq 0}}\ifct{\theta\leq d^K(s^-)Z^K_{s^-}}Q^{(d)}(\dd s,\dd\theta).
	\end{align}

	In particular we have the martingale decomposition $Z^K_t=Z^K_0+M^K_t+A^K_t$, where
	\begin{align}
		M^K_t=
			\int_0^t\int_{\R_{\geq 0}}\ifct{\theta\leq b^K(s^-)Z^K_{s^-}}\tilde{Q}^{(b)}(\dd s,\dd\theta)
			-\int_0^t\int_{\R_{\geq 0}}\ifct{\theta\leq d^K(s^-)Z^K_{s^-}}\tilde{Q}^{(d)}(\dd s,\dd\theta).
	\end{align}
	and
	\begin{align}
		\label{eq:bd_mrtgl_corr}
		A^K_t=\int_0^t(b^K(s)-d^K(s))Z^K_s\dd s.
	\end{align}
	In terms of Itô's calculus this impliess $\dd Z^K_t=\dd M^K_t+r^K(t)Z^K_t\dd t$. Therefore, we directly obtain the bracket process
	\begin{align}\label{eq:bd_bracket}
		\langle M^K\rangle_t=\int_0^t (b^K(s)+d^K(s))Z^K_s\dd s.
	\end{align}
	
	Towards proving Theorem \ref{thm:bd_main}, we first determine the expected value of the process and check that it satisfies the desired convergence. 
	
	\begin{lemma}
		\label{lem:bd_expectation_formula}
		Let $Z^K$ follow the law of $\mathrm{BD}\left(b^K,d^K,\b\right)$. Then
		\begin{align}
			\Exd{Z^K_t}=\gauss{K^\b-1}\ee^{g^K(t)},\qquad\text{where}\ g^K(t)=\int_0^t r^K(s)\dd s.
		\end{align}
	\end{lemma}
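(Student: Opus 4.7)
The plan is to use the martingale decomposition $Z^K_t = Z^K_0 + M^K_t + A^K_t$ already recorded just above the statement, together with the drift formula \eqref{eq:bd_mrtgl_corr}, to derive a linear ODE for $h(t) := \mathbb{E}[Z^K_t]$. Formally, if $M^K$ is a true martingale, then taking expectations gives
\begin{align}
h(t) = Z^K_0 + \int_0^t r^K(s)\, h(s)\, \dd s,
\end{align}
which, since $r^K$ is piecewise constant and bounded, has the unique solution $h(t) = Z^K_0\, \ee^{g^K(t)}$, yielding the claim.

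The main obstacle is justifying that $M^K$ is a genuine martingale and that the expectation of the integral equals the integral of the expectation. The cleanest route is a standard localization: define $\tau_n := \inf\{t \geq 0 : Z^K_t \geq n\}$. Then $M^K_{\cdot \wedge \tau_n}$ is a bounded martingale (since $Z^K_{\cdot \wedge \tau_n} \leq n$ and the rates are bounded, making $\langle M^K\rangle_{t\wedge\tau_n}$ finite by \eqref{eq:bd_bracket}). Taking expectations in the stopped decomposition and applying Fubini yields
\begin{align}
h_n(t) := \mathbb{E}[Z^K_{t\wedge\tau_n}] = Z^K_0 + \mathbb{E}\!\left[\int_0^{t\wedge\tau_n} r^K(s)\, Z^K_s\, \dd s\right] \leq Z^K_0 + \bar r \int_0^t h_n(s)\, \dd s,
\end{align}
with $\bar r := \sup_s |r^K(s)| < \infty$. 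Gronwall's inequality then gives $h_n(t) \leq Z^K_0\, \ee^{\bar r t}$ uniformly in $n$, which by monotone convergence ($\tau_n \nearrow \infty$ a.s., since a birth-death process with bounded per-capita rates does not explode) shows $h(t) \leq Z^K_0\, \ee^{\bar r t} < \infty$ and that $\mathbb{E}[\langle M^K\rangle_t] < \infty$, so $M^K$ is in fact a true (square-integrable) martingale.

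With genuine martingality in hand, taking expectations of the unstopped decomposition produces the Volterra integral equation $h(t) = Z^K_0 + \int_0^t r^K(s)\, h(s)\, \dd s$. Since $r^K$ is piecewise constant, $h$ is absolutely continuous with $h'(t) = r^K(t) h(t)$ for a.e.\ $t$, and the initial condition $h(0) = Z^K_0 = \lfloor K^\beta - 1\rfloor$ pins down the unique solution $h(t) = \lfloor K^\beta - 1 \rfloor \exp\!\left(\int_0^t r^K(s)\, \dd s\right)$, as claimed. No step requires anything beyond Gronwall, Fubini, and monotone convergence; the only non-routine point is the localization argument establishing integrability of $Z^K_t$ so that the martingale property can be used unstopped.
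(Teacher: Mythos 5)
Your proof is correct and follows essentially the same approach as the paper: take expectations in the martingale decomposition, obtain the linear Volterra equation $h(t)=Z^K_0+\int_0^t r^K(s)h(s)\,\dd s$, and invoke uniqueness. The only difference is that you carefully justify the true-martingale property of $M^K$ via localization and Gronwall, a technical step the paper simply leaves implicit.
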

	\begin{proof}
		Using the martingale decomposition and \eqref{eq:bd_mrtgl_corr} we obtain the integral equation
		\begin{align}
			\Exd{Z^K_t}
			&=\Exd{Z^K_0}+\Exd{A^K_t}=\gauss{K^\b-1}+\int_0^t r^K(s)\Exd{Z^K_s}\dd s.
		\end{align}
		Due to existence and uniqueness of the solution to this integral equation, this directly gives the claim.
	\end{proof}
	
	\begin{lemma}
		\label{lem:bd_conv_expectation}
		For all fixed $T<\infty$ we have the uniform convergence
		\begin{align}
			\sup_{s\leq T}\abs{\frac{\ln\left(K^\b\ee^{g^K(s\ln K)}\right)}{\ln K} -\left(\b+r_\av s\right)}
			\overset{K\to\infty}{\longrightarrow}0.
		\end{align}
		
	\end{lemma}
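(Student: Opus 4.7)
The first reduction is algebraic: since $\ln(K^\beta e^{g^K(s\ln K)})/\ln K = \beta + g^K(s\ln K)/\ln K$, the claim is equivalent to the uniform convergence
\[
\sup_{s\leq T}\Bigl|\frac{g^K(s\ln K)}{\ln K}-r_\av s\Bigr|\overset{K\to\infty}{\longrightarrow}0.
\]
So the entire task is to control the integral of the periodic step function $r^K$, showing that on the divergent horizon $s\ln K$ the effect of the short re-equilibration intervals of length $T_*$ averages out.

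The plan is to decompose the integral period by period. Set $P_K:=(T_1+T_2)\lambda_K$ (the length of one full period in the reduced $\ell=2$ setting) and compute the integral over one period:
\[
G_K:=\int_0^{P_K}r^K(u)\,\dd u= r_1T_1\lambda_K+r_*T_*+r_2(T_2\lambda_K-T_*)=r_\av P_K+(r_*-r_2)T_*,
\]
where the last term is an $O(1)$ correction arising from the re-equilibration phase and is independent of $K$. For any $t\geq 0$ I would then write $t=N_K(t)P_K+\rho_K(t)$ with $N_K(t)=\lfloor t/P_K\rfloor$ and $\rho_K(t)\in[0,P_K)$, so that
\[
g^K(t)=N_K(t)G_K+\int_0^{\rho_K(t)}r^K(u)\,\dd u = r_\av\bigl(t-\rho_K(t)\bigr)+N_K(t)(r_*-r_2)T_*+\int_0^{\rho_K(t)}r^K(u)\,\dd u.
\]

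Setting $R:=\max(|r_1|,|r_*|,|r_2|)$ and using $|\rho_K(t)|\leq P_K$ gives the uniform bound
\[
\sup_{s\leq T}\bigl|g^K(s\ln K)-r_\av s\ln K\bigr|\leq (|r_\av|+R)\,P_K+|r_*-r_2|\,T_*\cdot N_K(T\ln K).
\]
Here $P_K=O(\lambda_K)=o(\ln K)$ handles the first contribution, and $N_K(T\ln K)\leq T\ln K/P_K+1=O(\ln K/\lambda_K)$ handles the second, also giving $o(\ln K)$ since $\lambda_K\to\infty$. Dividing by $\ln K$ and taking the supremum yields the claimed convergence. The extension to general $\ell$ and nonzero $T_{*,i}$ is immediate by the same bookkeeping: each period still contributes $r_\av P_K$ plus an $O(1)$ correction from the finitely many re-equilibration intervals. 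There is no genuine obstacle here; the only point that needs care is to keep the two competing error scales $\lambda_K$ (partial-period remainder) and $\ln K/\lambda_K$ (number of periods times $O(1)$ correction) both strictly smaller than $\ln K$, which is exactly what the assumption $1\ll\lambda_K\ll\ln K$ guarantees.
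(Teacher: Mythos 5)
Your proof is correct and follows essentially the same period-by-period integration strategy as the paper: compute the integral over one full $\lambda_K$-period, identify the $O(1)$ re-equilibration correction, and bound the partial-period remainder by $C\lambda_K$. The only cosmetic difference is that you write the exact decomposition $g^K(t)=r_\av(t-\rho_K(t))+N_K(t)(r_*-r_2)T_*+\int_0^{\rho_K(t)}r^K$ and bound all terms uniformly, which lets you avoid the separate treatment of $r_\av>0$ and $r_\av<0$ that the paper needs when it controls the floor $\gauss{\cdot}$ multiplied by a sign-indefinite bracket.
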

	
	\begin{proof}
		We slice the whole time span $[0,T\ln K]$ into equal pieces of order $\lK$. Since $r^K$ has a period of length $\lK(T_1+T_2)$ and is piecewise constant, we obtain, for all $n\in\N$,
		\begin{align}
			g^K(n\lK(T_1+T_2))
			&=n[\lK T_1r_1 +T_*r_* +(\lK T_2-T_*)r_2]\nonumber\\
			&=n[\lK (T_1r_1 +T_2r_2) +T_*(r_*-r_2)].
		\end{align}
		Moreover, within such a period the growth of $g^K$ is linearly bounded, i.e.\ for all $0\leq s\leq t$ such that $t-s\leq\lK(T_1+T_2)$, one has
		\begin{align}
			\abs{g^K(t)-g^K(s)}\leq(t-s)\max\dset{r_1,r_*,r_2}\leq C\lK,
		\end{align}
		for some uniform finite constant $C$.
		Hence, we can estimate (in the case of $r_\av>0$),
		\begin{align}
			g^K(s\ln K)
			&\geq g^K\left(\gauss{\frac{s\ln K}{\lK(T_1+T_2)}}\lK(T_1+T_2)\right)-C\lK\nonumber\\
			&= \gauss{\frac{s\ln K}{\lK(T_1+T_2)}}
				[\lK (T_1r_1+T_2r_2)+T_*(r_*-r_2)]
				-C\lK\nonumber\\
			&\geq r_\av s\ln K 
				+s\frac{\ln K}{\lK}\frac{T_*(r_*-r_2)}{T_1+T_2}
				-2C\lK,
		\end{align}
		where we use in the last inequality that the term in the brackets is positive for $K$ large enough. Similarly, we obtain
		\begin{align}
			g^K(s\ln K)\leq r_\av s\ln K 
			+s\frac{\ln K}{\lK}\frac{T_*(r_*-r_2)}{T_1+T_2}
			+2C\lK.
		\end{align}
		Both estimates can be achieved in the same way for $r_\av<0$. With this at hand, the claim can be shown directly.
	\end{proof}
	
	Our next aim is to study the deviation of the original process $Z^K$ from its expected value.
	
	\begin{lemma}
		\label{lem:bd_Doob}
		For all $T<\infty$ and all $\d>0$, there exists a $K_0\in\N$ such that, for all $K\geq K_0$ and all $\h\in(0,\b)$, it holds that
		\begin{align}
			\Prob{\sup_{t\leq T\ln K} \abs{\ee^{-g^K(t)}Z^K_t-K^\b}\geq K^\h} \leq (\bar{b}+\bar{d})K^{\b+\d-2\h}\frac{1}{r_\av}\left(1-K^{-r_\av T}\right),
		\end{align}
		where $\bar{b}:=\max_{t\in[0,T^\S_\ell\lambda_K]}b^K(t)$ and $\bar{d}:=\max_{t\in[0,T^\S_\ell\lambda_K]}d^K(t)$.
	\end{lemma}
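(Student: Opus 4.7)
The plan is to apply Doob's $L^2$ maximal inequality to the normalized process $N^K_t := \ee^{-g^K(t)} Z^K_t$. By the product formula together with the martingale decomposition $\dd Z^K_t = \dd M^K_t + r^K(t) Z^K_t \dd t$, the drift of $\ee^{-g^K(t)} Z^K_t$ cancels (since $\partial_t \ee^{-g^K(t)} = -r^K(t) \ee^{-g^K(t)}$), so that $N^K$ is a square-integrable martingale with $N^K_0 = \gauss{K^\b - 1}$ and
\begin{align*}
\dd N^K_t = \ee^{-g^K(t)} \dd M^K_t, \qquad \langle N^K \rangle_t = \int_0^t \ee^{-2 g^K(s)} (b^K(s) + d^K(s)) Z^K_s \dd s,
\end{align*}
where the bracket follows from \eqref{eq:bd_bracket}.

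Applying Doob's $L^2$ maximal inequality to the non-negative submartingale $(N^K - N^K_0)^2$ then yields
\begin{align*}
\Prob{\sup_{t \leq T \ln K} |N^K_t - N^K_0| \geq a} \leq a^{-2} \, \Exd{\langle N^K \rangle_{T \ln K}}
\end{align*}
for any $a > 0$. Using Fubini, the uniform bound $b^K(s) + d^K(s) \leq \bar{b} + \bar{d}$, and substituting $\Exd{Z^K_s} = \gauss{K^\b - 1} \ee^{g^K(s)}$ from Lemma \ref{lem:bd_expectation_formula}, this reduces to
\begin{align*}
\Exd{\langle N^K \rangle_{T \ln K}} \leq (\bar{b} + \bar{d}) \, K^\b \int_0^{T \ln K} \ee^{-g^K(s)} \dd s.
\end{align*}

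The core estimate is now the time integral. Lemma \ref{lem:bd_conv_expectation} provides, for any $\delta' > 0$ and $K$ large enough, the uniform lower bound $g^K(s) \geq r_\av s - \delta' \ln K$ for $s \in [0, T \ln K]$: indeed, the error from the finitely many re-equilibration phases of length $T_{*,i}$ accumulates to $O(\ln K / \lK) \cdot O(1) = o(\ln K)$ thanks to the scale separation $\lK \ll \ln K$, plus an additional $O(\lK)$ boundary contribution. Hence $\ee^{-g^K(s)} \leq K^{\delta'} \ee^{-r_\av s}$ and
\begin{align*}
\int_0^{T \ln K} \ee^{-g^K(s)} \dd s \leq K^{\delta'} \int_0^{T \ln K} \ee^{-r_\av s} \dd s = K^{\delta'} \, \frac{1 - K^{-r_\av T}}{r_\av}.
\end{align*}
Setting $a := K^\eta - 1 \geq K^\eta / 2$ for $K$ large, and noting the inclusion $\{\sup_t |N^K_t - K^\b| \geq K^\eta\} \subseteq \{\sup_t |N^K_t - N^K_0| \geq K^\eta - 1\}$ (since $|N^K_0 - K^\b| \leq 1$), and absorbing the constant factor $4$ coming from $a^{-2} \leq 4 K^{-2\eta}$ into $\delta$ by choosing $\delta' < \delta$, then yields the claimed bound.

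The only genuine difficulty is controlling the averaging error $g^K(s) = r_\av s + o(\ln K)$, but this has already been carried out in Lemma \ref{lem:bd_conv_expectation}. All other steps are routine second-moment martingale arguments, so no additional ideas are required beyond translating that additive $o(\ln K)$ error into a multiplicative $K^\delta$ factor inside the integral.
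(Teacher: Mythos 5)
Your proof follows exactly the same route as the paper's: define the rescaled martingale (the paper calls it $\hat{M}^K_t = \int_0^t \ee^{-g^K(s)}\dd M^K_s$, which equals your $N^K_t - N^K_0$), compute its bracket via It\^o's isometry and \eqref{eq:bd_bracket}, apply Doob's $L^2$ maximal inequality, invoke Lemma \ref{lem:bd_expectation_formula} for $\Exd{Z^K_s}$, and use the uniform bound $g^K(t) \geq r_\av t - \d' \ln K$ from Lemma \ref{lem:bd_conv_expectation} to reduce $\int_0^{T\ln K}\ee^{-g^K(s)}\dd s$ to $K^{\d'}\frac{1}{r_\av}(1-K^{-r_\av T})$. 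One small slip: $N^K_0 = \gauss{K^\b-1}$ gives $|N^K_0 - K^\b| \in [1,2)$ rather than $\leq 1$, so the inclusion should use $K^\h - 2$; this only changes the requirement to $K^\h \geq 4$ and is absorbed the same way into the $K^\d$ slack, so the argument is unaffected.
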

	\begin{proof}
		The main idea is to use Doob's maximum inequality for a rescaled martingale. We introduce the process
		\begin{align}
			\hat{M}^K_t:=\int_0^t\ee^{-g^K(s)}\dd M^K_s,
		\end{align}
		which is a martingale since $M^K$ is a martingale.
		Following the techniques of proof step 1 of \cite[Lem.\ A.1]{ChMeTr19} using It\^o's isometry, It\^o's formula and Doob's maximum inequality, we get
		\begin{align}
			\Prob{\sup_{t\leq T\ln K} \abs{\ee^{-g^K(t)}Z^K_t-K^\b}\geq K^\h}
			&\leq \left(\bar{b}+\bar{d}\right) K^{\b-2\h} \int_0^{T\ln K}\ee^{-g^K(t)}\dd t\nonumber\\
			&\leq \left(\bar{b}+\bar{d}\right)K^{\b-2\h} \int_0^{T\ln K}\ee^{-r_\av t} K^\d\dd t\nonumber\\
			&=\left(\bar{b}+\bar{d}\right)K^{\b+\d-2\h}\frac{1}{r_\av}\left(1-K^{-r_\av T}\right).
		\end{align}
		Here, for the last inequality, $K$ has to be choosen large enough, such that
		\begin{align}
			\frac{g^K(t)}{\ln K}\geq r_\av\frac{t}{\ln K}-\d,\qquad\forall t\in[0,T\ln K],
		\end{align}
		which is possible by Lemma \ref{lem:bd_conv_expectation}.
	\end{proof}

	With the above results, we are now able to prove the desired convergence for populations that tend to grow (i.e.\ for $r_\av>0$).

	\begin{lemma}
		\label{lem:bd_conv_beta_r_pos}
		Let $Z^K$ follow the law of $\mathrm{BD}\left(b^K,d^K,\beta\right)$ and assume that $r_\av>0$, then the convergence of Theorem \ref{thm:bd_main} holds true.
	\end{lemma}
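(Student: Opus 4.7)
The plan is to combine Lemma \ref{lem:bd_Doob}, which controls the stochastic fluctuations of $Z^K$ around its exponential envelope, with the deterministic asymptotics of $g^K$ from Lemma \ref{lem:bd_conv_expectation}. Since $r_\av>0$, the envelope $K^\b\ee^{g^K(s\ln K)}$ diverges to infinity as $K\to\infty$ for every $s>0$, so the process cannot get stuck near zero and the $\vee\,0$ in the target limit is inactive on all of $[0,T]$.

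First I would fix an auxiliary exponent $\eta\in(\b/2,\b)$ together with a sufficiently small $\d>0$ so that $\b+\d-2\eta<0$. Lemma \ref{lem:bd_Doob} then yields
\[
\mathbb{P}\left(\sup_{t\leq T\ln K}\left|\ee^{-g^K(t)}Z^K_t-K^\b\right|\geq K^\eta\right)\leq C\,K^{\b+\d-2\eta}\overset{K\to\infty}{\longrightarrow}0,
\]
so the complementary event $A_K$ has probability tending to one. On $A_K$ the deterministic sandwich $(K^\b-K^\eta)\ee^{g^K(t)}\leq Z^K_t\leq (K^\b+K^\eta)\ee^{g^K(t)}$ holds uniformly for $t\in[0,T\ln K]$.

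Evaluating this sandwich at $t=s\ln K$, adding $1$, taking logarithms and dividing by $\ln K$, I would use $\eta<\b$ to argue that $\ln(1\pm K^{\eta-\b})/\ln K$ vanishes uniformly (and the harmless $+1$ inside the logarithm contributes only $o(1)$ because the lower bound of the sandwich already tends to infinity), and then invoke Lemma \ref{lem:bd_conv_expectation} to replace $(\b\ln K+g^K(s\ln K))/\ln K$ by $\b+r_\av s$ up to an error that is $o(1)$ uniformly in $s\in[0,T]$. Because $r_\av>0$ forces $\b+r_\av s>0$ on $[0,T]$, the truncation by $\vee\,0$ is inactive and the claimed $L^\infty$-convergence in probability follows.

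The main obstacle is the delicate choice of $\eta$: it has to exceed $\b/2$ so that the Doob bound in Lemma \ref{lem:bd_Doob} tends to zero, but stay strictly below $\b$ so that the logarithmic sandwich actually pins down the leading order. The window $(\b/2,\b)$ is nonempty precisely because $\b>0$ is assumed. I would also point out that this argument is specific to $r_\av>0$: in the decaying regime the envelope $\ee^{g^K(t)}$ shrinks and the multiplicative error $K^\eta \ee^{g^K(t)}$ need no longer be negligible compared to the envelope itself, so extinction fluctuations take over and a separate treatment will be required for that case.
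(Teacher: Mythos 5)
Your proposal is correct and takes essentially the same approach as the paper: it invokes Lemma~\ref{lem:bd_Doob} to control the fluctuations $\ee^{-g^K(t)}Z^K_t - K^\b$, then combines the resulting sandwich with the uniform asymptotics of $g^K$ from Lemma~\ref{lem:bd_conv_expectation} and a triangle inequality after taking logarithms. The only cosmetic difference is that the paper commits to the concrete choice $\d=\b/4$, $\h=3\b/4$, whereas you keep $\eta\in(\b/2,\b)$ generic and shrink $\d$ accordingly; both live in the same admissible window.
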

	\begin{proof}
		Fix $T<\infty$ and choose $\d=\b/4$ and $\h=3\b/4$. Moreover define the set
		\begin{align}
			\O^K_1:=\dset{\sup_{t\leq T\ln K} \abs{\ee^{-g^K(t)}Z^K_t-K^\b}\leq K^\h}.
		\end{align}
		Then $\lim_{K\to\infty}\Prob{\O^K_1}=1$ by Lemma \ref{lem:bd_Doob} since $\b+\d-2\h<0$ and an analogous computation to proof step 2 of \cite[Lem.\ A.1]{ChMeTr19} together with Lemma \ref{lem:bd_conv_expectation} and the triangle inequality yield the claim.
	\end{proof}
	
	In order to study birth death processes with tendency to shrink (i.e.\ with $r_\av<0$), we have to take care of the extinction event. Let us point out that in our situation of the changing environment, this is a little more involved and we cannot use the results for time-homogenoeus branching processes. To this end, we first determine the probability generating function of general birth death processes with piecewise constant rates, which we then use to establish bounds on the distribution function of the extinction time. 

	\begin{lemma}[generating function]
		\label{lem:bd_generating_function}
		For $\ell\in\N$, let $b_i,d_i,T_i>0$, $1\leq i\leq\ell$, and write \linebreak$T:=\sum_{i=1}^{\ell}T_i$ and $r_i:=b_i-d_i$. We consider the birth death process $\left(X_t\right)_{t\in[0,T]}$ with initial value $X_0=1$ that is driven by the birth and death rates $b_i,d_i$ on $\left[\sum_{j=1}^{i-1}T_j,\sum_{j=1}^iT_j\right)$. Then the (probability) generating function $g$ of $X_T$ is given by
		\begin{align}
			g(s):&=\Exd{s^{X_T}}\\
			&=1-\frac{\ee^{r_1T_1+\cdots+r_\ell T_\ell}}{\frac{b_1}{r_1}\left(\ee^{r_1T_1}-1\right)\ee^{r_2T_2+\cdots+r_\ell T_\ell}+\frac{b_2}{r_2}\left(\ee^{r_2T_2}-1\right)\ee^{r_3T_3+\cdots+r_\ell T_\ell}+\cdots+\frac{b_\ell}{r_\ell}\left(\ee^{r_\ell T_\ell}-1\right)-\frac{1}{s-1}}.\nonumber
		\end{align}
	\end{lemma}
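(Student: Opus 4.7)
\textbf{Proof plan for Lemma \ref{lem:bd_generating_function}.}
My plan is to combine the branching/Markov structure of the process with the classical closed-form for a single-phase linear birth--death generating function. Since within each interval $[\sum_{j<i}T_j,\sum_{j\leq i}T_j)$ the process is a linear birth--death (and hence a continuous-time Galton--Watson) process with constant rates $b_i,d_i$, conditioning on the population size at the endpoints $\tau_i:=\sum_{j=1}^i T_j$ and using the branching property yields the decomposition
\begin{align}
	g(s)=\Exd{s^{X_T}}= g^{(1)}_{T_1}\circ g^{(2)}_{T_2}\circ\cdots\circ g^{(\ell)}_{T_\ell}(s),
\end{align}
where $g^{(i)}_{T_i}(s):=\Exd{s^{X_{T_i}}\mid X_0=1}$ is the single-phase generating function with rates $b_i,d_i$ on a time interval of length $T_i$. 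The iteration works because, given $X_{\tau_{\ell-1}}=n$, the segment on $[\tau_{\ell-1},T]$ evolves as $n$ independent copies of a constant-rate process of length $T_\ell$, so $\Exd{s^{X_T}\mid X_{\tau_{\ell-1}}}=g^{(\ell)}_{T_\ell}(s)^{X_{\tau_{\ell-1}}}$.

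Next I would recall (or quickly derive via the Kolmogorov forward equation $\partial_t g=(bs-d)(s-1)\partial_s g$ solved by characteristics) the standard single-phase formula
\begin{align}
	g^{(i)}_{T_i}(s)=1-\frac{e^{r_i T_i}}{\tfrac{b_i}{r_i}\bigl(e^{r_i T_i}-1\bigr)-\tfrac{1}{s-1}}.
\end{align}
The key idea that makes the $\ell$-fold composition tractable is the M\"obius-type substitution $y=1/(s-1)$. Under this change of variable, writing $c_i:=e^{r_iT_i}$ and $a_i:=\tfrac{b_i}{r_i}(e^{r_iT_i}-1)$, the map $s\mapsto g^{(i)}_{T_i}(s)$ becomes the affine map $y\mapsto\Phi_i(y):=(y-a_i)/c_i$ in $y$-coordinates. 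That is, if $y=1/(s-1)$ and $y'=1/(g^{(i)}_{T_i}(s)-1)$, then $y'=\Phi_i(y)$.

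The composition of affine maps is affine, and iterating $\Phi_\ell,\Phi_{\ell-1},\ldots,\Phi_1$ starting from $y=1/(s-1)$ gives, by a straightforward induction on $\ell$,
\begin{align}
	\Phi_1\circ\Phi_2\circ\cdots\circ\Phi_\ell(y)
	=\frac{1}{\prod_{i=1}^{\ell}c_i}\left(y-\sum_{i=1}^{\ell}a_i\prod_{j=i+1}^{\ell}c_j\right).
\end{align}
Since this quantity equals $1/(g(s)-1)$ with $y=1/(s-1)$, inverting back and substituting the definitions of $a_i$ and $c_i$ yields exactly the claimed formula.

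The only real obstacle is the bookkeeping in the $\ell$-fold composition; the linearizing substitution $y=1/(s-1)$ resolves it cleanly, so the remaining work is an elementary induction and an algebraic rearrangement. Everything else is standard: the branching decomposition comes for free from the Galton--Watson structure of the linear birth--death process, and the single-phase formula is a classical computation.
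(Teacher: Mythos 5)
Your proof is correct and rests on the same core decomposition as the paper: using the branching property to write $g = g^{(1)}_{T_1}\circ\cdots\circ g^{(\ell)}_{T_\ell}$ and the classical single-phase formula $g^{(i)}_{T_i}(s)=1-\ee^{r_iT_i}\big/\big(\tfrac{b_i}{r_i}(\ee^{r_iT_i}-1)-\tfrac{1}{s-1}\big)$. Where you diverge is in the algebra of composing these maps: the paper verifies the $\ell=2$ case by direct substitution and then asserts that the general case follows by induction, leaving that induction implicit and rather opaque. You instead observe that under the M\"obius change of variable $y=1/(s-1)$ each $g^{(i)}_{T_i}$ becomes the affine map $\Phi_i(y)=(y-a_i)/c_i$ with $c_i=\ee^{r_iT_i}$ and $a_i=\tfrac{b_i}{r_i}(\ee^{r_iT_i}-1)$; composing affine maps is trivial, and the closed form $\Phi_1\circ\cdots\circ\Phi_\ell(y)=\big(y-\sum_i a_i\prod_{j>i}c_j\big)/\prod_i c_i$ hands you the statement directly. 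This buys a genuinely transparent induction (indeed it is almost inductive-free) at the price of introducing the substitution; the paper's direct approach avoids the extra notation but hides the pattern. Both are correct; yours is the cleaner explanation of why the formula has the shape it does.
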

	
	\begin{proof}
		For homogeneous birth death processes $\left(Y_t\right)_{t\geq 0}$ with constant rates $b$ and $d$ and initial value $Y_0=1$, the probability generating function at time $t>0$ is given by
		\begin{align}
			F(s,t):=\Exd{s^{Y_t}\big|Y_0=1}
			&=\frac{d(s-1)-\ee^{-rt}(bs-d)}{b(s-1)-\ee^{-rt}(bs-d)}\nonumber\\
			&=1-\frac{\ee^{rt}}{\frac{b}{r}\left(\ee^{rt}-1\right)-\frac{1}{s-1}}.
		\end{align}
		Due to independence, for initial values $Y_0=k\in\N$ we obtain $\Exd{s^{Y_t}\big|Y_0=k}=F(s,t)^k$.
		
		Let us first consider the case $\ell=2$. Studying $X$ at time $T_1+T_2$, we can interpret it as a birth death process on $[T_1,T_1+T_2]$ with rates $b_2$ and $d_2$, initialized with $X_{T_1}$ individuals. Letting $F_1$ and $F_2$ be the generating functions corresponding to the parameters of the two phases (again assuming initial values of 1), this leads to
		\begin{align}
			\Exd{s^{X_{T_1+T_2}}}
			&=\Exd{\Exd{s^{X_{T_1+T_2}}\big|X_{T_1}}}
			=\Exd{F_2(s,T_2)^{X_{T_1}}}
			=F_1(F_2(s,T_2),T_1)\nonumber\\
			&=1-\frac{\ee^{r_1T_1}}{\frac{b_1}{r_1}\left(\ee^{r_1T_1}-1\right)+\frac{(b_2/r_2)(\ee^{r_2T_2}-1)-1/(s-1)}{\ee^{r_2T_2}}}\nonumber\\
			&=1-\frac{\ee^{r_1T_1+r_2T_2}}{\frac{b_1}{r_1}\left(\ee^{r_1T_1}-1\right)\ee^{r_2T_2}+\frac{b_2}{r_2}\left(\ee^{r_2T_2}-1\right)-\frac{1}{s-1}}
		\end{align}
		The claim for larger $\ell\in\N$ follows by induction.
	\end{proof}
	
	With this preparation, we can now prove the following helpful bounds for the extinction time.
	
	\begin{lemma}[extinction time]
		\label{lem:bd_ext_time}
		Let $Z^K$ be the birth death process with varying rates defined above. Denote by $T^{Z^K}_\ext:=\inf\dset{t\geq 0:Z^K_t=0}$ the extinction time of $Z^K$. Then, for all\linebreak $\d,\d_1,\d_2>0$ and all $M>(T_1\abs{r_1}+T_2\abs{r_2})/(T_1+T_2)$ there exists a $K_0<\infty$ such that, for all $K\geq K_0$ and all $t\geq 0$,
		\begin{align}
			\label{eq:bd_ext_upper}
			&\Prob{T^{Z^K}_\ext>t\ln K\big|Z^K_0=1} \leq\exp\left[\left(\left(r_\av+\d_1\right)t+\d_2\right)\ln K\right],\\
			\label{eq:bd_ext_lower}
			&\Prob{T^{Z^K}_\ext>t\ln K\big|Z^K_0=1} \geq\exp\left[-\left(Mt+\d\right)\ln K\right].
		\end{align}
	\end{lemma}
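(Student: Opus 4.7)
The plan is to handle the two bounds by quite different means. For the upper bound \eqref{eq:bd_ext_upper}, I would apply Markov's inequality directly: since $Z^K_{t\ln K}\geq 1$ on $\{T^{Z^K}_\ext>t\ln K\}$, Lemma \ref{lem:bd_expectation_formula} gives
$$\Prob{T^{Z^K}_\ext>t\ln K\big|Z^K_0=1}\leq\Exd{Z^K_{t\ln K}\big|Z^K_0=1}=\ee^{g^K(t\ln K)}.$$
A direct calculation with the piecewise-constant $r^K$ (essentially the estimate in the proof of Lemma \ref{lem:bd_conv_expectation}) shows $g^K(t\ln K)=r_\av t\ln K+O(t\ln K/\lK)+O(\lK)$ uniformly in $t\geq 0$: the first error term comes from the transition-phase correction $T_*(r_*-r_2)$ accumulated over $\Theta(t\ln K/\lK)$ periods, and the second captures the contribution of the current partial period. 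The intermediate-scale assumption $1\ll\lK\ll\ln K$ then lets me absorb both contributions into $(\delta_1 t+\delta_2)\ln K$ for $K$ sufficiently large.

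For the lower bound \eqref{eq:bd_ext_lower}, I would exploit the explicit generating function from Lemma \ref{lem:bd_generating_function}, applied to one full period $\tau_K:=\lK(T_1+T_2)$ comprising the three sub-phases with rates $(b_1,d_1),(b_*,d_*),(b_2,d_2)$ of durations $\lK T_1, T_*, \lK T_2-T_*$. Writing $G:=r_1\lK T_1+r_*T_*+r_2(\lK T_2-T_*)=r_\av\tau_K+T_*(r_*-r_2)$ and letting $A\geq 0$ be the corresponding coefficient supplied by Lemma \ref{lem:bd_generating_function}, the branching property together with the periodicity of $b^K,d^K$ yields the composition identity $g_{n\tau_K}(s)=g_{\tau_K}(g_{(n-1)\tau_K}(s))$. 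With $q_n:=\Prob{T^{Z^K}_\ext>n\tau_K\big|Z^K_0=1}$ and $s=0$, this becomes the rational recursion $q_n=\ee^G q_{n-1}/(A q_{n-1}+1)$, $q_0=1$, which the substitution $u_n:=1/q_n$ linearises to $u_n=\ee^{-G}u_{n-1}+A\ee^{-G}$. Solving explicitly and taking reciprocals yields
$$q_n=\frac{(1-\ee^G)\ee^{nG}}{(1-\ee^G)+A(1-\ee^{nG})}\geq\frac{(1-\ee^G)\ee^{nG}}{1+A},$$
the last inequality being the relevant one in the subcritical regime $G<0$. Using that $t\mapsto\Prob{T^{Z^K}_\ext>t}$ is non-increasing, I would then sandwich $\Prob{T^{Z^K}_\ext>t\ln K}\geq q_{n_t}$ with $n_t:=\lceil t\ln K/\tau_K\rceil$, so that $\ln q_{n_t}\geq n_t G-\ln(1+A)+\ln(1-\ee^G)$. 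The asymptotics $n_t G=r_\av t\ln K+o(\ln K)$, $\ln(1+A)=O(\lK)=o(\ln K)$, and $\ln(1-\ee^G)=O(1)$ reduce this to $\ln q_{n_t}\geq r_\av t\ln K-o(\ln K)$, and the triangle inequality $|r_\av|\leq(T_1|r_1|+T_2|r_2|)/(T_1+T_2)<M$ guarantees $M+r_\av>0$, so $r_\av t\ln K>-Mt\ln K$ and the residual $o(\ln K)$ can be absorbed into $\delta\ln K$ for $K$ large. For $r_\av\geq 0$ the process is non-subcritical and $q_n$ is bounded below by a positive constant uniformly in $K$ large, so \eqref{eq:bd_ext_lower} is then immediate.

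The main obstacle is keeping careful track of the two sub-logarithmic error sources in the lower bound: the $O(\lK)$ contribution from $\ln(1+A)$, arising because each term of $A$ contains an exponential of order $\ee^{|r_i|\lK T_i}$, and the $O(\ln K/\lK)$ contribution from $n_t G-r_\av t\ln K$, stemming from the transition-phase correction $T_*(r_*-r_2)$ repeated over $\Theta(\ln K/\lK)$ periods together with the $O(1)$ rounding in $n_t$. Both are negligible relative to $\ln K$ precisely thanks to the assumption $1\ll\lK\ll\ln K$, which is what permits the uniform-in-$t$ absorption into the prescribed error budgets $(\delta_1 t+\delta_2)\ln K$ and $\delta\ln K$.
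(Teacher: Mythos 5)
Your overall strategy is genuinely different from the paper's and is in essence sound: for the upper bound you use Markov's inequality on $\Exd{Z^K_{t\ln K}}$ directly, whereas the paper discretises to the Galton--Watson chain $Y^K_n:=Z^K_{n\tau_K}$, bounds $1-g(s)\leq m^K(1-s)$ by convexity of the generating function, and iterates to obtain $\Prob{T^{Y^K}_\ext>n}\leq (m^K)^n$ with $m^K=\ee^G$; since $(m^K)^{n_t}=\ee^{n_tG}\approx\ee^{g^K(t\ln K)}$, the two bounds are equivalent up to $O(\lK)$ corrections. For the lower bound you solve the rational recursion $q_n=\ee^Gq_{n-1}/(Aq_{n-1}+1)$ exactly, while the paper applies the elementary bound $q_n\geq(1-p^K_0)^n$ with $1-p^K_0\geq\ee^{-\lK\tilde M}$, also obtained from Lemma \ref{lem:bd_generating_function} but via a one-step estimate rather than an explicit $n$-step formula. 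Your derivation of the recursion, its linearisation via $u_n:=1/q_n$, the closed form $q_n=(1-\ee^G)\ee^{nG}/\bigl((1-\ee^G)+A(1-\ee^{nG})\bigr)$, and the estimate $q_n\geq(1-\ee^G)\ee^{nG}/(1+A)$ for $G<0$ are all correct.

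However, there is a genuine error in the case $r_\av\geq 0$. You claim that for $r_\av\geq0$ the quantity $q_n$ ``is bounded below by a positive constant uniformly in $K$ large.'' This is false in general. Taking $n\to\infty$ in your closed-form expression gives the survival probability $q_\infty=(\ee^G-1)/A$. Now suppose $r_1<0$ but $r_2>0$ large enough that $r_\av>0$: the first summand of $A$ from Lemma \ref{lem:bd_generating_function} is $\tfrac{b_1}{r_1}(\ee^{r_1\lK T_1}-1)\ee^{r_*T_*+r_2(\lK T_2-T_*)}$, which for $r_1<0$ and $\lK\to\infty$ behaves like $\tfrac{b_1}{|r_1|}\ee^{r_2\lK T_2+T_*(r_*-r_2)}=\tfrac{b_1}{|r_1|}\ee^{G}\ee^{|r_1|\lK T_1}\gg\ee^G$. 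Hence $q_\infty\approx\tfrac{|r_1|}{b_1}\ee^{-|r_1|\lK T_1}\to0$ as $K\to\infty$: the process is not uniformly bounded away from extinction, because it must first survive the subcritical $\lK T_1$-phase. The conclusion of the lemma still holds, but the correct justification must be that $q_n\geq(\ee^G-1)/(1+A)$ (which holds for $G>0$ and $n\geq1$ by the same bound on the denominator) together with $\ln(1+A)=O(\lK)=o(\ln K)$, so that $q_n\geq K^{-\d}$ for $K$ large — not a positive constant. The paper sidesteps this entirely: its lower bound $q_n\geq(1-p^K_0)^n$ with $(1-p^K_0)^{-1}\leq\ee^{\lK\tilde M}$ holds for any sign of $r_\av$, without case distinction. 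A secondary, more minor imprecision: the error term $n_tG-r_\av t\ln K$ is $O(t\ln K/\lK)+O(\lK)$, i.e.\ the first piece grows linearly in $t$ and is not uniformly $o(\ln K)$; for the lemma's claim ``for all $t\geq0$'' you must absorb that piece into the $Mt\ln K$ margin using $M+r_\av>0$, as your final sentence in effect does, rather than into the $\d\ln K$ budget.
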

	
	\begin{proof}
		Let us first consider a time discretisation $\left(Y^K_n\right)_{n\in\N_0}$ of $Z^K$, namely $Y^K_n:=Z^K_{n\lK(T_1+T_2)}$. Then, by periodicity of the rate functions, $Y^K$ is a Galton-Watson process. An application of Lemma \ref{lem:bd_generating_function} (with $\ell=3$) yields that the one-step offspring distribution is determined by the generating function
		\begin{align}
			g(s)&=\Exd{s^{Z^K_{\lK(T_1+T_2)}}|Z^K_0=1}\\
			&=1-\frac{\ee^{\lK T_1r_1+T_*r_*+(\lK T_2-T_*)r_2}}{\frac{b_1}{r_1}\left(\ee^{\lK r_1T_1}-1\right)\ee^{r_*T_*+(\lK T_2-T_*)r_2}+\frac{b_*}{r_*}\left(\ee^{r_*T_*}-1\right)\ee^{(\lK T_2-T_*)r_2}+\frac{b_2}{r_2}\left(\ee^{(\lK T_2-T_*)r_2}-1\right)-\frac{1}{s-1}}.\nonumber
		\end{align}
		Therefore, the probability for $Z^K$ to go extinct in the first step is
		\begin{align}
			p^K_0&=g(0)\\
			&=1-\frac{\ee^{\lK T_1r_1+T_*r_*+(\lK T_2-T_*)r_2}}{\frac{b_1}{r_1}\left(\ee^{\lK r_1T_1}-1\right)\ee^{r_*T_*+(\lK T_2-T_*)r_2}+\frac{b_*}{r_*}\left(\ee^{r_*T_*}-1\right)\ee^{(\lK T_2-T_*)r_2}+\frac{b_2}{r_2}\left(\ee^{(\lK T_2-T_*)r_2}-1\right)+1}.\nonumber
		\end{align}
		Moreover, for the mean offspring we obtain
		\begin{align}
			m^K=g'(1)
			=\ee^{\lK T_1r_1+T_*r_*+(\lK T_2-T_*)r_2}
			=\ee^{\lK(r_1T_1+r_2T_2)+T_*(r_*-r_2)}.
		\end{align}
		Denote now by $g_n(s)=\Exd{s^{Y^K_n}}$ the generating function of the $n$-th generation of $Y^K$. Then it is well known \cite{AthNey72} that $g_n=g_{n-1}\circ g=g\circ g_{n-1}$. Since $g:[0,1]\to[0,1]$ is convex and strictly increasing, we can deduce that (see Figure \ref{fig:lemB7})
		\begin{align}
		\label{eq:bd_GenFct}
			(1-p^K_0)(1-s)\leq 1-g(s)\leq m^K (1-s)&&\forall s\in[0,1].
		\end{align}
		This can be iterated (cf.\ \cite[eq.\ I.11.7]{AthNey72}) to obtain
		\begin{align}
			(1-p^K_0)^n(1-s)\leq 1-g_n(s)\leq (m^K)^n (1-s)&&\forall s\in[0,1],\forall n\in\N
		\end{align}
		Together with $g_n(0)=\Prob{Y^K_n=0}=\Prob{T^{Y^K}_\ext\leq n}$, where $T^{Y^K}_\ext:=\inf\{n\geq0:Y^K_n=0\}$ is the extinction time of $Y^K$, this leads to
		\begin{align}
			(1-p^K_0)^n\leq \Prob{T^{Y^K}_\ext> n}\leq (m^K)^n.
		\end{align}
		\begin{figure}[h]
			\centering
			\includegraphics[scale=0.7,page=1]{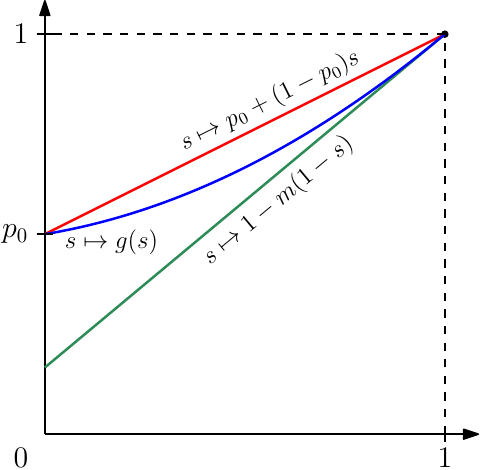}
			\caption{Generating function $g(s)$ and the corresponding affine upper and lower bounds from \eqref{eq:bd_GenFct}.}
			\label{fig:lemB7}
		\end{figure}
		
		Let us now check the upper bound. For $r_\av\geq0$, the claim is trivially satisfied since the right hand side is larger than 1.  In the case of $r_\av<0$, which, for $K$ large enough,  implies $m^K<1$, we can estimate
		\begin{align}
		\label{eq:bd_ext_pf_upper}
			\Prob{T^{Z^K}_\ext>t\ln K\big|Z^K_0=1}
			&\leq \Prob{T^{Z^K}_\ext>\gauss{\frac{t\ln K}{\lK (T_1+T_2)}}\lK (T_1+T_2)|Z^K_0=1}\nonumber\\
			&=\Prob{T^{Y^K}_\ext>\gauss{\frac{t\ln K}{\lK (T_1+T_2)}}|Y^K_0=1}\nonumber\\
			&\leq \left(\ee^{\lK(r_1T_1+r_2T_2)+T_*(r_*-r_2)}\right)^\gauss{\frac{t\ln K}{\lK (T_1+T_2)}}\nonumber\\
			&\leq \left(\ee^{\lK(r_1T_1+r_2T_2)+T_*(r_*-r_2)}\right)^{\left(\frac{t\ln K}{\lK (T_1+T_2)}-1\right)}\nonumber\\
			&\leq \exp\left[r_\av t\ln K + \frac{T_*(r_*-r_2)}{\lK(T_1+T_2)}t\ln K +C\lK \right]\nonumber\\
			&\leq \exp\left[r_\av t\ln K + \d_1 t\ln K +\d_2 \ln K \right]\nonumber\\
			&\leq \exp\left[\left(\left(r_\av+\d_1\right)t+\d_2\right)\ln K\right],
		\end{align}
		again for $K$ large enough.
		
		To obtain the lower bound, for $r_\av\in\R$, we we first calculate that
		\begin{align}
		\left(1-p^K_0\right)^{-1}
		&=\frac{b_1}{r_1}(1-\ee^{-\lK T_1r_1})
			+\frac{b_*}{r_*}(1-\ee^{- T_*r_*})\ee^{-\lK T_1r_1} \nonumber\\
			&\quad +\frac{b_2}{r_2}(1-\ee^{-(\lK T_2-T_*)r_2})\ee^{-\lK T_1r_1-T_*r_*}
			+\ee^{-\lK T_1r_1-T_*r_*-(\lK T_2-T_*)r_2}\nonumber\\
		&\leq\left(\frac{b_1}{\abs{r_1}}+\frac{b_*}{\abs{r_*}}+\frac{b_2}{\abs{r_2}}+1\right) \exp\left[\lK T_1\abs{r_1}+T_*\abs{r_*}+(\lK T_2-T*)\abs{r_2}\right]\nonumber\\
		&=C\exp\left[\lK (T_1\abs{r_1}+T_2\abs{r_2})+T_*(\abs{r_*}-\abs{r_2})\right]\nonumber\\
		&\leq\exp[\lK\tilde{M}],
		\end{align}
		for all $\tilde{M}>T_1\abs{r_1}+T_2\abs{r_2}$ and $K$ large enough.
		This gives $1-p^K_0\geq\exp[-\lK\tilde{M}]$. Using again the connection between $Z^K$ and $Y^K$, we can finally estimate similarly to \eqref{eq:bd_ext_pf_upper}
		\begin{align}
			\Prob{T^{Z^K}_\ext>t\ln K\big|Z^K_0=1}
			&\geq\exp\left[-t\frac{\tilde{M}}{T_1+T_2}\ln K -\lK\tilde{M}\right]\nonumber\\
			&\geq\exp\left[-(tM+\d)\ln K\right],
		\end{align}
		for all $M>(T_1\abs{r_1}+T_2\abs{r_2})/(T_1+T_2)$ and $K$ large enough.
	\end{proof}
	
	Now we have collected all the tools to derive the convergence for shrinking populations and thus conclude the proof of Theorem \ref{thm:bd_main}.
	
	\begin{lemma}
		\label{lem:bd_conv_beta_r_neg}
		Let $Z^K$ follow the law of $\mathrm{BD}\left(b^K,d^K,\beta\right)$ and assume that $r_\av<0$, then the convergence of Theorem \ref{thm:bd_main} holds true.
	\end{lemma}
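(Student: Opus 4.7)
The target trajectory $(\beta + r_\av s) \vee 0$ decays linearly to zero at the critical time $s^* := \beta/|r_\av|$ and remains at zero on $[s^*, T]$. The plan is to split $[0,T]$ into three subintervals controlled by an auxiliary parameter $\rho > 0$: the ``live'' regime $[0, s^* - \rho]$, a narrow transition window $[s^* - \rho, s^* + \rho]$, and the ``extinct'' regime $[s^* + \rho, T]$; show uniform convergence on each with probability tending to $1$; and then let $\rho$ tend to zero.

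On $[0, s^* - \rho]$, I repeat the argument of Lemma \ref{lem:bd_conv_beta_r_pos}. Although $r_\av < 0$ makes the factor $(1 - K^{-r_\av T})/r_\av$ in Lemma \ref{lem:bd_Doob} grow like $K^{\beta - |r_\av|\rho}/|r_\av|$, this only inflates the bound by a fixed polynomial factor, and choosing a small $\d > 0$ together with an exponent $\h \in \bigl((\beta + \d + |r_\av|(s^* - \rho))/2,\ \beta\bigr)$ still yields a deviation probability that tends to zero. On the resulting event $\ee^{-g^K(t)} Z^K_t$ stays within $K^\h \ll K^\beta$ of $K^\beta$, so together with Lemma \ref{lem:bd_conv_expectation} and the triangle inequality one obtains uniform convergence of $\ln(Z^K_{s\ln K}+1)/\ln K$ to $\beta + r_\av s$ on $[0, s^* - \rho]$.

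On $[s^* + \rho, T]$ I argue extinction. Since $Z^K$ is a linear birth death process, it decomposes as a sum of $\lfloor K^\beta - 1\rfloor$ independent one-particle processes $Z^{K,i}$, so its extinction time is the maximum of the individual ones. A union bound combined with Lemma \ref{lem:bd_ext_time} gives
\begin{align*}
\P\bigl(Z^K_{(s^* + \rho)\ln K} > 0\bigr) \leq K^\beta \cdot \exp\bigl[\bigl((r_\av + \d_1)(s^* + \rho) + \d_2\bigr)\ln K\bigr] = K^{-(|r_\av|-\d_1)\rho + \d_1 s^* + \d_2},
\end{align*}
which tends to zero for $\d_1, \d_2$ chosen small enough compared to $|r_\av|\rho$. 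Since zero is absorbing, $Z^K_{s\ln K}=0$ uniformly on $[s^* + \rho, T]$ with high probability, matching the limit.

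On the transition window $[s^* - \rho, s^* + \rho]$, the limit takes values in $[0, |r_\av|\rho]$, so the lower bound $\ln(Z^K_{s\ln K}+1)/\ln K \geq 0$ is automatic. For the upper bound, I apply Doob's maximal inequality to the nonnegative martingale $\ee^{-g^K(t)} Z^K_t$ (whose expectation at time $T\ln K$ equals $\lfloor K^\beta - 1\rfloor$), yielding $\P(\sup_{t \leq T\ln K} \ee^{-g^K(t)} Z^K_t \geq K^{\beta + \d}) \leq K^{-\d}$; combined with Lemma \ref{lem:bd_conv_expectation} this forces $Z^K_{s\ln K} \leq K^{|r_\av|\rho + \d + o(1)}$ uniformly on the window. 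Fixing $\rho$ small first and then sending $K \to \infty$ with $\d \to 0$ yields convergence in probability in $L^\infty([0,T])$. The main technical obstacle is the compatibility of constants in the live regime: the constraints $\h > (\beta + \d + |r_\av|(s^* - \rho))/2$ and $\h < \beta$ together force $\rho > \d/|r_\av|$, which is precisely why the live regime cannot be extended up to $s^*$ and a separate transition treatment is needed.
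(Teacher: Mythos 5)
Your proposal is correct, and the overall structure is the same as the paper's: decompose $[0,T]$ into a ``live'' interval up to roughly $s^* = \b/\abs{r_\av}$ minus a buffer, a narrow transition window around $s^*$, and an ``extinct'' interval; handle the first via Lemma \ref{lem:bd_Doob} as in Lemma \ref{lem:bd_conv_beta_r_pos}, prove extinction via Lemma \ref{lem:bd_ext_time}, and show the process stays mesoscopically small on the window. The two implementations differ in the last two steps, and your choices are arguably cleaner. For extinction, the paper first deduces from step~1 that $Z^K_{T_\ve\ln K}\leq 2K^\ve$ with high probability, then invokes the Markov property at $T_\ve\ln K$ and applies the extinction bound to a population of size $2K^\ve$ over the additional time $3\ve/\abs{r_\av}\ln K$; you instead decompose $Z^K$ at time $0$ into $\gauss{K^\b-1}$ independent one-particle lineages and use a direct union bound with the upper bound of Lemma \ref{lem:bd_ext_time} at the single time $(s^*+\rho)\ln K$, which avoids the conditioning entirely. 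For the transition window, the paper couples to a Yule process to obtain a crude upper bound of order $K^{3(\bar b+\abs{r_\av})\ve/\abs{r_\av}}$; you apply Doob's supermartingale inequality to the nonnegative martingale $\ee^{-g^K(t)}Z^K_t$ (whose constant mean is $\gauss{K^\b-1}$), yielding $Z^K_{s\ln K}\leq K^{\abs{r_\av}\rho+\d+o(1)}$ on the whole window with probability $\geq 1-K^{-\d}$ --- a tighter and more elementary bound than the Yule coupling. Both approaches give the claim; the one point to phrase carefully in a final write-up is the order of quantifiers at the end: first fix $\rho$ (so that $\abs{r_\av}\rho < \eta/3$, say), then fix $\d\in(0,\abs{r_\av}\rho)$ with $\d<\eta/3$, and only then send $K\to\infty$, since $\h$ can be chosen inside $(\b+(\d-\abs{r_\av}\rho)/2,\b)$ precisely when $\d<\abs{r_\av}\rho$.
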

	\begin{figure}[h]
		\centering
		\includegraphics[scale=0.7,page=1]{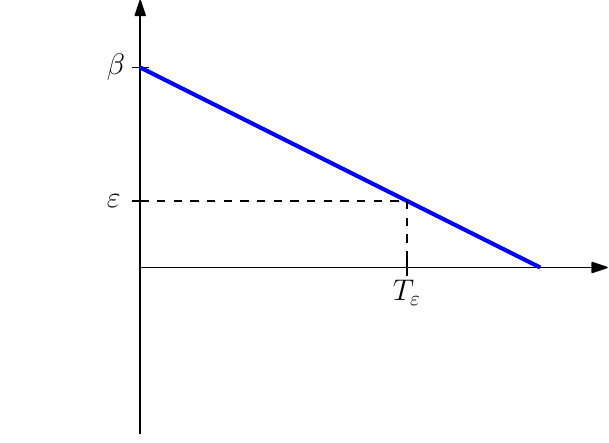}
		\caption{Graph of the limiting exponent $\b+r_\av s$ and the time $T_\ve$ that separates steps 1 and 2 of the proof of Lemma \ref{lem:bd_conv_beta_r_neg}.}
		\label{fig:lemB8}
	\end{figure}
	\begin{proof}
		Fix $\ve\in(0,\b)$ and set $\d=\ve/4,\h=\b-\ve/3$ and $T_\ve=(\b-\ve)/\abs{r_\av}$ (see Figure \ref{fig:lemB8}).
		
		\textbf{Step 1}
		We first show convergence on $[0,T_\ve]$. The computations are very similar to the proof of Lemma \ref{lem:bd_conv_beta_r_pos} and proof step 3(i) of \cite[Lem.\ A.1]{ChMeTr19}, respectively. Here one uses the event
		\begin{align}
			\O^K_2:=\dset{\sup_{t\leq T_\ve\ln K} \abs{\ee^{-g^K(t)}Z^K_t-K^\b}\leq K^\h},
		\end{align}
		which by Lemma \ref{lem:bd_Doob} has probability converging to 1.
		
		\textbf{Step 2}
		Next up, we show that extinction occurs before $T_\ve+3\ve/\abs{r_\av}$. From the previous step, we know that, with high probability, the population at time $t=T_\ve\ln K$ can be bounded from above by $2K^\ve$, for large $K$. Moreover, the first part of Lemma \ref{lem:bd_ext_time} provides an estimate on the extinction time of a subcritical birth death process initialized with one individual. Thus, we obtain the upper bound
		\begin{align}
			\Prob{T^{Z^K}_\ext>s\ln K\big|Z^K_0=2K^\ve}
			&=1-\left(1-\Prob{T^{Z^K}_\ext>s\ln K\big|Z^K_0=1}\right)^{2K^\ve}\nonumber\\
			&\leq1-\left(1-\exp\left[\left(\left(r_\av+\d_1\right)s+\d_2\right)\ln K\right]\right)^{2K^\ve}\nonumber\\
			&=1-\left(1-K^{(r_\av+\d_1)s+\d_2}\right)^{2K^\ve}\nonumber\\
			&=1-\left(1-K^{-2\ve}\right)^{2K^\ve}\nonumber\\
			&\sim 2K^{-\ve}
			\overset{K\to\infty}{\longrightarrow}0.
		\end{align}
		Here, the inequality is a consequence of \eqref{eq:bd_ext_upper}, while for the last equality we set
		\begin{align}
			s=\frac{2\ve+\d_2}{\abs{r_\av+\d_1}}<\frac{3\ve}{\abs{r_\av}},
		\end{align}
		for proper choice of $\d_1,\d_2>0$. The final asymptotic behaviour can be deduced using the limit representation of the exponential function and a first-order approximation.\\
		This now allows to deduce
		\begin{align}
			\Prob{T^{Z^K}_\ext>\frac{3\ve}{\abs{r_\av}}\ln K\big|Z^K_0=2K^\ve}
			\overset{K\to\infty}{\longrightarrow}0.
		\end{align}
		
		\textbf{Step 3} To finally conclude the convergence on the whole interval $[0,T]$, we have to ensure that the process stays bounded in the time interval $[T_\ve,T_\ve+3\ve/\abs{r_\av}]$. To this end, it is sufficient to consider the very rough upper bound obtained from the coupling with a Yule process (a pure birth process) with constant birth rate $\bar{b}=\max\dset{b_1,b_*,b_2}$. Since those processes are non-decreasing, we just need to control the endpoint. The size of a family at time\linebreak$(T_\ve+3\ve/\abs{r_\av})\ln K$, stemming from a single ($i^\text{th}$) individual at time $T_\ve\ln K$, is given by a geometric random variable $G^K_i$ with expectation $K^{3\bar{b}\ve/\abs{r_\av}}$. Since families evolve independently of each other, we obtain by Chebyshev's inequality, for $d:=\bar{b}+\abs{r_\av}$,	
		\begin{align}
			&\Prob{\sup_{s\in[T_\ve,T_\ve+3\ve/\abs{r_\av}]}Z^K_{s\ln K}\geq K^{3d\ve/\abs{r_\av}}}
			\leq\Prob{\sum_{i=1}^{2K^\ve}G^K_i\geq K^{3d\ve/\abs{r_\av}}}
			\overset{K\to\infty}{\longrightarrow}0.
		\end{align}
		
		Overall, this means that, with probability converging to $1$, we have
		\begin{align}
			\sup_{s\in[T_\ve,T_\ve+3\ve/\abs{r_\av}]}\frac{\ln(1+Z^K_{s\ln K})}{\ln K} \leq \frac{3d}{\abs{r_\av}}\ve.
		\end{align}
	\end{proof}

\subsection{Branching processes with immigration}
	\label{app:B.2_BDIproc}
	
	We now turn to the study of birth death processes with immigration. In addition to the birth and death rates defined in the beginning of Section \ref{app:B.1_pureBDproc}, we introduce the following parameters connected to the effects of immigration. Let $c\in\R$ describe the initial order of incoming migration, i.e.\ initially immigrants arrive at overall rate $K^c$. When applied in Section \ref{sec:4.2_convergence_beta},  this is representing the initial size of the neighbouring population multiplied by the mutation rate. Let $a_i,a_{*,i}\in\R$, for $i\in\dset{1,\ldots,\ell}$, be the immigrants net growth rates in the respective (sub-)phases. Thus, we define the time-dependent net growth rate of the immigrants as the periodic extension of
	\begin{align}
		\tilde{a}^K(s):=\left\{\begin{array}{ll}
		a_{*,i}&:s\in[T^\S_{i-1}\lK,T^\S_{i-1}\lK+T_{*,i}),\\
		a_i&:s\in[T^\S_{i-1}\lK+T_{*,i},T^\S_{i}\lK).\\
		\end{array}\right. 
	\end{align}
	Moreover, (corresponding to $g_K(t)$ and $r_\av$) we define the time integral
	\begin{align}
		a^K(t):=\int_0^t \tilde{a}^K(s)\dd s.
	\end{align}
	and the average growth rate of the immigrants $a_\av:=\left(\sum_{i=1}^\ell a_iT_i\right)/T^\S_\ell$.
	Hence, the overall rate of immigration at time $t$ is given by $K^c\ee^{a^K(t)}$.  We can define the Markov processes $(Z^K_t)_{t\geq0}$ generated by
	\begin{align}
		\left(\mathcal{L}^K_tf\right)(n)=\left(b^K(t)n+K^c \ee^{a^K(t)}\right)\left(f(n+1)-f(n)\right)+d^K(t)n\left(f(n-1)-f(n)\right)
	\end{align}
	and with $Z^K_0=\gauss{K^\b-1}$. We refer to the law of such processes by $Z^K\sim\mathrm{BDI}\left(b^K,d^K,\b,a^K,c\right)$.
	
	As in the previous section, we derive a convergence result for the logarithmically rescaled process.

	\begin{theorem}
	\label{thm:bdi_main}
		Let $Z^K$ follow the law of $\mathrm{BDI}\left(b^K,d^K,\b,a^K,c\right)$. Then, for all fixed $T\in(0,\infty)$, the following convergence holds in probability, with respect to the $L^\infty([0,T])$ norm,
		\begin{align}
			\left(\frac{\ln\left(1+Z^K_{s\ln K}\right)}{\ln K}\right)_{s\in[0,T]}
			\overset{K\to\infty}{\longrightarrow}
			\left(\bar{\b}_s\right)_{s\in[0,T]},
		\end{align}
		where
		\begin{align}
			\bar{\b}_s=\left\{\begin{array}{ll}
				(c+(r_\av\vee a_\av)s)\vee 0&:c>\b.\\
				(\b+r_\av s)\vee(c+a_\av s)\vee(c+r_\av s)\vee 0&:\b>0,c\leq \b,\\
				(r_\av\vee a_\av)(s-\abs{c}/a_\av)\vee0&:\b=0,c\leq 0,a_\av>0,\\
				0&:\b=0,c< 0,a_\av\leq 0,\\
				0&:\b=c=0,a_\av\leq 0,r_\av\leq0.
				\end{array}\right.
		\end{align}
	\end{theorem}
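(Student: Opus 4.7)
The plan is to generalise the strategy of Theorem~\ref{thm:bd_main} by decomposing $Z^K$ as the sum of the descendants of the initial $\gauss{K^\b-1}$ individuals and the descendants of all immigrants that arrive along the way. Adjoining a Poisson measure for the immigration events to the representation of Section~\ref{app:B.1_pureBDproc} yields the semimartingale decomposition $\dd Z^K_t=\dd M^K_t+(r^K(t)Z^K_t+K^c\ee^{a^K(t)})\dd t$ with bracket $\langle M^K\rangle_t=\int_0^t((b^K(s)+d^K(s))Z^K_s+K^c\ee^{a^K(s)})\dd s$. Taking expectations produces a linear ODE whose unique solution is
\begin{align*}
\Exd{Z^K_t}=\gauss{K^\b-1}\ee^{g^K(t)}+K^c\int_0^t\ee^{a^K(\tau)+g^K(t)-g^K(\tau)}\dd\tau.
\end{align*}

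Using the estimates $g^K(t)=r_\av t+O(\lK)$ and $a^K(t)=a_\av t+O(\lK)$ from the proof of Lemma~\ref{lem:bd_conv_expectation}, the immigration integral on the scale $t=s\ln K$ becomes, up to a multiplicative $\ee^{O(\lK)}$-factor, $\ee^{r_\av s\ln K}\int_0^{s\ln K}\ee^{(a_\av-r_\av)\tau}\dd\tau$, whose dominant exponent (polylogarithmic corrections aside) is $c+(r_\av\vee a_\av)s$: the integrand is concentrated at $\tau=s\ln K$ when $a_\av>r_\av$ and at $\tau=0$ otherwise. Combining with the initial-descendant contribution gives $(\ln\Exd{Z^K_{s\ln K}})/\ln K\to(\b+r_\av s)\vee(c+(r_\av\vee a_\av)s)$, which matches the claimed limit in cases one and two.

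For the upper bound I would mimic Lemma~\ref{lem:bd_Doob}, applying Doob's inequality to the rescaled martingale $\hat M^K_t=\int_0^t\ee^{-g^K(s)}\dd M^K_s$; the extra immigration contribution to the bracket is of the form $K^c\int_0^t\ee^{a^K(s)-2g^K(s)}\dd s$ and is estimated by the same exponent computation. For the matching lower bound in regimes where the target exponent is positive, I would couple $Z^K$ from below by (i) the pure BD process of the initial descendants (recovering $\b+r_\av s$ via Theorem~\ref{thm:bd_main}), (ii) the family generated by a single immigrant arriving at a fixed small positive time (recovering $c+r_\av s$ by the same theorem applied with a time shift), and (iii) the cumulative immigration count, ignoring any subsequent reproduction (recovering $c+a_\av s$). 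Taking the maximum reconstructs the claimed limit.

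The principal obstacle is the $\b=0$ regime, where $Z^K$ can be identically zero over long stretches. The subcase $\b=0,\,c<0,\,a_\av\leq 0$ is handled by a first-moment bound: the expected total number of immigration events over $[0,T\ln K]$ equals $K^c\int_0^{T\ln K}\ee^{a^K(\tau)}\dd\tau=o(1)$, so $\Prob{Z^K\equiv 0 \text{ on } [0,T\ln K]}\to 1$. The subcase $\b=c=0,\,a_\av\leq 0,\,r_\av\leq 0$ is analogous, with Yule couplings ensuring each family stays bounded on the $\ln K$-scale. The genuinely delicate case is $\b=0,\,c\leq 0,\,a_\av>0$, where the threshold $t^\star_K:=\abs{c}\ln K/a_\av$ marks the onset of effective immigration. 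For $s<\abs{c}/a_\av-\eps$, a union bound combined with the extinction-time upper bound of Lemma~\ref{lem:bd_ext_time} shows that every family founded before $s\ln K$ has died out with probability tending to one. For $s>\abs{c}/a_\av+\eps$, the crucial point is that immigrant contributions to the \emph{actual} (as opposed to expected) population must be integrated only over $[t^\star_K,s\ln K]$, since earlier windows contribute $o(1)$ arrivals; this restriction substitutes the naive exponent $c+(r_\av\vee a_\av)s$ by $(r_\av\vee a_\av)(s-\abs{c}/a_\av)$. Restarting the BDI dynamics at $t^\star_K$ and applying the preceding framework with a shifted initial time then yields the claim, the uniform synchronisation of these estimates over $s\in[0,T]$ being the most technical step.
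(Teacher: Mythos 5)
Your overall architecture tracks the paper's quite closely: same Poisson representation and semimartingale decomposition, same computation of $\Exd{Z^K_t}$ by variation of constants, same use of the asymptotics of $g^K$ and $a^K$ to extract the limiting exponent, and a Doob-type maximum inequality on the rescaled martingale to control deviations. The paper packages the fluctuations as $\tilde M^K_t=\ee^{-g^K(t)}(Z^K_t-z^K_t)$ and also computes a variance bound so that the same concentration inequality yields both the upper \emph{and} lower bounds simultaneously (Lemmas~\ref{lem:bdi_expectation_formula}, \ref{lem:bdi_Doob}, \ref{lem:bdi_positive_beta}); you instead propose a separate lower bound via three monotone couplings. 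That part is a genuinely different (and arguably more elementary) route, but your coupling (iii) is not quite a lower bound as stated: ``the cumulative immigration count, ignoring any subsequent reproduction'' suppresses births but not deaths, so the resulting process is not dominated by $Z^K$. What does work is to lower-bound by the number of immigrants still alive at time $t$ as a pure-death process, or (closer to the paper) simply to invoke the two-sided concentration around $z^K_t$.

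The genuine gap is in the emergence case $\b=0,c<0,a_\av>0$. You correctly locate the threshold $t^\star_K=\abs c\ln K/a_\av$ and you correctly invoke the extinction-time bounds of Lemma~\ref{lem:bd_ext_time} to kill early families, but the step ``restarting the BDI dynamics at $t^\star_K$ and applying the preceding framework with a shifted initial time'' is circular: at $t^\star_K$ the population is $O(1)$ and the instantaneous immigration exponent is $0$, so the restarted process is again a $\mathrm{BDI}$ with $\b=c=0$, $a_\av>0$, which is exactly a case you have not yet established and which cannot be handled by the positive-$\b$ machinery. What is needed here is a bootstrap lemma that takes a handful of survivors at time $t^\star_K$ and, using a thinned Poisson count of surviving immigrant families together with the \emph{lower} bound of Lemma~\ref{lem:bd_ext_time}, shows that after an additional time $\eps\ln K$ the population has reached a power of $K$ (roughly $K^{\eps/2}$) while staying below some $K^\eta$. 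This is precisely the content of Lemma~\ref{lem:bdi_emerge} in the paper, and only after that step can one invoke the positive-$\b$ convergence lemma. Without making this bridge explicit your argument does not close.

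Two smaller remarks. First, your exponent computation silently assumes $r_\av\neq a_\av$ (the factor $1/(a_\av-r_\av)$); the paper imposes the same restriction in Lemma~\ref{lem:bdi_expectation_formula} and closes the gap later by perturbation, so you should at least note how to handle equality. Second, the theorem as stated explicitly excludes $\b=c=0$, $a_\av\leq0$, $r_\av>0$, and your proposal does not mention this exception nor why it is omitted.
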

Note that the one case that is not cover by this result is that of $\b=c=0$, $a_\av\leq0$, $r_\av>0$.	
	
	The remainder of this section is dedicated to the proof of this theorem. We first study a number of specific cases in a series of lemmas and then outline how these can be combined to prove the final general result.
	
	Since $a^K$ is of the same form as $g^K$, we directly obtain.
	
	\begin{lemma}
		\label{lem:bdi_conv_imm_rate}
		For all fixed $T<\infty$ we have the uniform convergence
		\begin{align}
		\sup_{s\leq T}\abs{\frac{\ln\left(K^c\ee^{a^K(s\ln K)}\right)}{\ln K} -\left(c+a_\av s\right)}
		\overset{K\to\infty}{\longrightarrow}0.
		\end{align}
	\end{lemma}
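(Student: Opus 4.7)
The plan is to mirror the proof of Lemma \ref{lem:bd_conv_expectation} essentially verbatim, exploiting that $a^K$ is constructed from the piecewise-constant periodic function $\tilde a^K$ in exactly the same way as $g^K$ was built from $r^K$. The only preliminary observation is that
\begin{align*}
\frac{\ln\!\left(K^c\ee^{a^K(s\ln K)}\right)}{\ln K}-(c+a_\av s)
=\frac{a^K(s\ln K)}{\ln K}-a_\av s,
\end{align*}
so it suffices to show $\sup_{s\le T}\bigl|a^K(s\ln K)/\ln K-a_\av s\bigr|\to 0$ as $K\to\infty$.

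First I would evaluate $a^K$ at integer multiples of the period $P:=\lK T^\Sigma_\ell$. Using piecewise constancy and periodicity of $\tilde a^K$, for every $n\in\N$,
\begin{align*}
a^K(nP)=n\sum_{i=1}^\ell\bigl[(T_i\lK-T_{*,i})a_i+T_{*,i}a_{*,i}\bigr]
=n\lK T^\Sigma_\ell\,a_\av+n\sum_{i=1}^\ell T_{*,i}(a_{*,i}-a_i).
\end{align*}
Next, since $\tilde a^K$ is uniformly bounded by $\bar a:=\max_i(|a_i|\vee|a_{*,i}|)$, within any one period the function $a^K$ varies by at most $\bar a\,P=\bar a\lK T^\Sigma_\ell=:C\lK$; hence for any $t\ge 0$, writing $n(t):=\lfloor t/P\rfloor$,
\begin{align*}
\bigl|a^K(t)-a^K(n(t)P)\bigr|\le C\lK.
\end{align*}

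Combining these two facts, for $t=s\ln K$ with $s\in[0,T]$,
\begin{align*}
\left|\frac{a^K(s\ln K)}{\ln K}-a_\av s\right|
&\le\frac{1}{\ln K}\Bigl|a^K(n(t)P)-n(t)P\,a_\av\Bigr|
+\frac{1}{\ln K}\Bigl|n(t)P\,a_\av-t\,a_\av\Bigr|
+\frac{C\lK}{\ln K}\\
&\le\frac{n(t)\,|\sum_i T_{*,i}(a_{*,i}-a_i)|}{\ln K}
+\frac{|a_\av|\,P}{\ln K}+\frac{C\lK}{\ln K}\\
&\le\frac{T}{\lK T^\Sigma_\ell}\Bigl|\sum_i T_{*,i}(a_{*,i}-a_i)\Bigr|
+\frac{(C+|a_\av|T^\Sigma_\ell)\lK}{\ln K},
\end{align*}
uniformly in $s\in[0,T]$. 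Both terms tend to $0$ since $1\ll\lK\ll\ln K$, which yields the claim. There is no real obstacle here; the only point that requires any attention is keeping track of the $T_{*,i}(a_{*,i}-a_i)$ residual created by the short re-equilibration sub-phases, which is exactly what the proof of Lemma \ref{lem:bd_conv_expectation} already treated in the analogous setting.
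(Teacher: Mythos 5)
Your proof is correct and mirrors the paper's approach exactly: the paper's proof of Lemma~\ref{lem:bdi_conv_imm_rate} simply says ``See the proof of Lemma~\ref{lem:bd_conv_expectation},'' and your argument is precisely that proof carried out (for general $\ell$ rather than the paper's simplified $\ell=2$, $T_{*,1}=0$ setting), via evaluation at integer multiples of the period plus a linear bound on the within-period variation.
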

	\begin{proof}
		See the proof of Lemma \ref{lem:bd_conv_expectation}.
	\end{proof}

	As before, we can construct the processes $Z^K$ in terms of the Poisson random measures $Q^{(b)}(\dd s,\dd\theta)$ and $Q^{(d)}(\dd s,\dd\theta)$ and derive the martingale decomposition $Z^K_t=Z^K_0+M^K_t+A^K_t$, where
	
	\begin{align}
		Z^K_t&=Z^K_0
		+\int_0^t\int_{\R_{\geq 0}}\ifct{\theta\leq b^K(s^-)Z^K_{s^-}+K^c\ee^{a^K(s)}}Q^{(b)}(\dd s,\dd\theta)
		-\int_0^t\int_{\R_{\geq 0}}\ifct{\theta\leq d^K(s^-)Z^K_{s^-}}Q^{(d)}(\dd s,\dd\theta),\\
		M^K_t&=\int_0^t\int_{\R_{\geq 0}}\ifct{\theta\leq b^K(s^-)Z^K_{s^-}+K^c\ee^{a^K(s)}}\tilde{Q}^{(b)}(\dd s,\dd\theta)
		-\int_0^t\int_{\R_{\geq 0}}\ifct{\theta\leq d^K(s^-)Z^K_{s^-}}\tilde{Q}^{(d)}(\dd s,\dd\theta),\\
		A^K_t&=\int_0^t r^K(s)Z^K_s+K^c\ee^{a^K(s)}\dd s,\\
		\langle M^K\rangle_t &= \int_0^t \left(b^K(s)+d^K(s)\right)Z^K_s+K^c\ee^{a^K(s)}\dd s,\\
		\label{eq:bdi_mrtgl_decomp}
		\dd Z^K_t&=\dd M^K_t+\left(r^K(t)Z^K_t+K^c\ee^{a^K(t)}\right)\dd t.
	\end{align}
	Note that, as in Section \ref{app:B.1_pureBDproc}, $\tilde{Q}^{(*)}:=Q^{(*)}-\dd s\dd \theta$ and equalities only hold in distribution.
	
	As in the case without immigration, we first take a look at the expected value. Moreover, we derive a bound on the variance of the process.
	
\begin{lemma}
	\label{lem:bdi_expectation_formula}
	Let $Z^K$ follow the law of $\mathrm{BDI}\left(b^K,d^K,\b,a^K,c\right)$ and assume that $r_\av\neq a_\av$. Then, for fixed $T<\infty$ and all $t\in[0,T\ln K]$,
	\begin{align}
	\label{eq:bdi_expectation formula}
	z^K_t:=\Exd{Z^K_t}\approx\left(\ee^{r_\av t}\left(K^\b-1\right)+\frac{K^c}{a_\av-r_\av}\left[\ee^{a_\av t}-\ee^{r_\av t}\right]\right)K^{\pm 3\d}.
	\end{align}
	
	Moreover, under the additional assumption of $r_\av\neq 0$ and $2r_\av\neq a_\av$, one obtains
	\begin{align}
	\mathrm{Var}\left(Z^K_t\right)\leq
	\left[(\bar{b}+\bar{d})\left(K^\b-1+\frac{K^c}{r_\av-a_\av}\right)\frac{\ee^{2r_\av t}-\ee^{r_\av t}}{r_\av} + K^c\left(1-\frac{\bar{b}+\bar{d}}{r_\av-a_\av}\right)\frac{\ee^{a_\av t}-\ee^{2r_\av t}}{a_\av-2 r_\av}\right]K^{5\d}
	\end{align}
\end{lemma}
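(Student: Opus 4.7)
The plan is to reduce both claims to explicit solutions of scalar linear ODEs with time-dependent coefficients, and then substitute the two-sided $K^{\pm\d}$-sharp asymptotics for $\ee^{g^K(t)}$ and $\ee^{a^K(t)}$ supplied by Lemmas \ref{lem:bd_conv_expectation} and \ref{lem:bdi_conv_imm_rate}.

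For the expectation, taking $\Exd{\cdot}$ in \eqref{eq:bdi_mrtgl_decomp} annihilates the martingale part and leaves the inhomogeneous linear ODE
\begin{align}
	\dot z^K_t = r^K(t) z^K_t + K^c \ee^{a^K(t)}, \qquad z^K_0 = \gauss{K^\b - 1}.
\end{align}
Variation of constants with integrating factor $\ee^{-g^K(t)}$ yields
\begin{align}
	z^K_t = \ee^{g^K(t)}\left(\gauss{K^\b - 1} + K^c\int_0^t \ee^{a^K(s) - g^K(s)}\dd s\right).
\end{align}
I would then insert $\ee^{g^K(t)} = \ee^{r_\av t} K^{\pm\d}$, $\ee^{a^K(t)} = \ee^{a_\av t} K^{\pm\d}$ and evaluate the elementary integral $\int_0^t \ee^{(a_\av - r_\av)s}\dd s = (\ee^{(a_\av - r_\av)t} - 1)/(a_\av - r_\av)$, which requires $a_\av\neq r_\av$. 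Both summands $\ee^{r_\av t}(K^\b - 1)$ and $\frac{K^c}{a_\av - r_\av}(\ee^{a_\av t} - \ee^{r_\av t})$ are non-negative -- the fraction has numerator and denominator of matching signs -- so the three compounding $K^{\pm\d}$ factors admit the common $K^{\pm 3\d}$ envelope claimed.

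For the variance, reading off the drift of $(Z^K_t)^2$ from the generator via $(n\pm 1)^2 - n^2 = \pm 2n + 1$ gives
\begin{align}
	\dot v^K_t = 2 r^K(t) v^K_t + 2 K^c \ee^{a^K(t)} z^K_t + (b^K(t) + d^K(t)) z^K_t + K^c \ee^{a^K(t)},
\end{align}
where $v^K_t := \Exd{(Z^K_t)^2}$. Differentiating $(z^K_t)^2$ yields $\tfrac{\dd}{\dd t}(z^K_t)^2 = 2 r^K(t) (z^K_t)^2 + 2 K^c \ee^{a^K(t)} z^K_t$, and subtracting eliminates both quadratic and mixed contributions, leaving the linear ODE
\begin{align}
	\frac{\dd}{\dd t}\mathrm{Var}(Z^K_t) = 2 r^K(t)\,\mathrm{Var}(Z^K_t) + (b^K(t) + d^K(t)) z^K_t + K^c \ee^{a^K(t)}
\end{align}
with zero initial value, whose explicit solution is
\begin{align}
	\mathrm{Var}(Z^K_t) = \ee^{2 g^K(t)} \int_0^t \ee^{-2 g^K(s)}\big[(b^K(s) + d^K(s)) z^K_s + K^c \ee^{a^K(s)}\big]\dd s.
\end{align}
Bounding $b^K + d^K \leq \bar b + \bar d$, inserting the upper estimate for $z^K_s$ from the first part, and replacing $\ee^{\pm 2 g^K},\ \ee^{a^K}$ by their averages up to $K^{\pm\d}$ factors splits the integral into elementary pieces $\int_0^t \ee^{\pm r_\av s}\dd s$ and $\int_0^t \ee^{(a_\av - 2 r_\av)s}\dd s$, each legitimate thanks to $r_\av \neq 0$ and $2 r_\av \neq a_\av$. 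Regrouping the four resulting summands -- collecting the $(\ee^{2 r_\av t} - \ee^{r_\av t})/r_\av$ contributions into the coefficient $(\bar b + \bar d)(K^\b - 1 + K^c/(r_\av - a_\av))$ and the $(\ee^{a_\av t} - \ee^{2 r_\av t})/(a_\av - 2 r_\av)$ contributions into $K^c(1 - (\bar b + \bar d)/(r_\av - a_\av))$ -- reproduces the stated expression, with the $K^{5\d}$ prefactor absorbing the accumulated error.

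The main difficulty is purely bookkeeping: tracking the $K^{\pm\d}$ factors carefully so that upper bounds remain upper bounds regardless of the signs of $a_\av - r_\av$, $r_\av$, and $a_\av - 2 r_\av$, and checking that the various non-negative summands in each estimate really fit under a common multiplicative envelope. The algebraic cancellations in the variance step, which eliminate the $(z^K_t)^2$ terms and yield a clean linear ODE with zero initial data, are the only genuinely delicate structural ingredient.
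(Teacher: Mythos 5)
Your proposal is correct and follows essentially the same path as the paper: martingale decomposition to get the linear ODE for $z^K_t$, variation of constants, and substitution of the $K^{\pm\d}$-controlled approximations for $\ee^{g^K}$ and $\ee^{a^K}$; for the variance the paper solves for $u^K_t=\Exd{(Z^K_t)^2}$ and then verifies that the pieces involving $2K^c\ee^{a^K(s)}$ together with $\gauss{K^\b-1}^2$ reconstitute $(z^K_t)^2$ and cancel, whereas you subtract the ODEs for $u^K_t$ and $(z^K_t)^2$ up front to obtain a clean linear ODE for the variance with zero initial data. The two computations are algebraically identical; your reorganization just makes the cancellation structural rather than observed after the fact.
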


Here and in the remainder of the appendix, these kind of approximation results are meant in the following way: For any $\d>0$, there exists $K_0\in\N$ such that, for all $K\geq K_0$, plugging in $K^{+3\d}$ yields an upper bound and $K^{-3\d}$ a lower bound for the left hand side.

\begin{proof}
	As in Lemma \ref{lem:bd_expectation_formula}, we use the martingale decomposition to derive the integral equation
	\begin{align}
	z^K_t=\gauss{K^\b-1}+\int_0^t r^K(s)z^K_s+K^c\ee^{a^K(s)}\dd s.
	\end{align}
	By variation of constants, this leads to
	\begin{align}
	\label{eq:bdi_zKt}
	z^K_t=\ee^{g^K(t)}\left(\gauss{K^\b-1}+K^c\int_0^t\ee^{-g^K(s)}\ee^{a^K(s)} \dd s\right).
	\end{align}
	
	Using the convergence of $g^K$ and $a^K$ (Lemmas \ref{lem:bd_conv_expectation} and \ref{lem:bdi_conv_imm_rate}), we can estimate, for any $t\in[0,T\ln K]$, fixed $\d>0$, and $K$ large enough,
	\begin{align}
	z^K_t\approx\ee^{r_\av t}\left(K^\b-1+K^c\int_0^t\ee^{(a_\av-r_\av)s}\dd s\right)K^{\pm 3\d}.
	\end{align}
	This directly gives the first claim.
	
	For the estimate on the variance, we see that
	\begin{align}
	\dd\left(Z^K_t\right)^2
	=2Z^K_t\dd Z^K_t+\dd\langle Z^K\rangle_t
	=2Z^K_t\dd M^K_t +2Z^K_t\left(r^K(t)Z^K_t+K^c\ee^{a^K(t)}\right)\dd t +\dd\langle Z^K\rangle_t.
	\end{align}
	Define $u^K_t:=\Exd{\left(Z^K_t\right)^2}$, then $u^K_0=\gauss{K^\b-1}^2$ and
	\begin{align}
	\dot{u}^K_t&=0+2 r^K(t)u^K_t+2 K^c\ee^{a^K(t)}z^K_t+\left(\left(b^K(t)+d^K(t)\right)z^K_t+K^c\ee^{a^K(t)}\right)\nonumber\\
	&=2 r^K(t)u^K_t
	+\left(2 K^c\ee^{a^K(t)}+b^K(t)+d^K(t)\right)z^K_t+K^c\ee^{a^K(t)}.
	\end{align}
	Using variation of constants, we deduce
	\begin{align}\label{eq:bdi_uKt}
	u^K_t=&\ee^{2g^K(t)}\left(\gauss{K^\b-1}^2 +\int_0^t\ee^{-2g^K(s)}\left[\left(2K^c\ee^{a^K(s)}+b^K(s)+d^K(s)\right)z^K_s+K^c\ee^{a^K(s)}\right]\dd s\right).
	\end{align}
	
	We now focus on the integral term, where we plug in \ref{eq:bdi_zKt} and treat each summand separately. For the first summand, involving $2K^c\ee^{a^K(s)}$, we obtain
	\begin{align}
	&2K^c\int_0^t\ee^{-g^K(s)}\ee^{a^K(s)}\left(\gauss{K^\b-1}+K^c\int_0^s\ee^{-g^K(w)}\ee^{a^K(w)} \dd w\right)\dd s\nonumber\\
	=&2\gauss{K^\b-1}K^c\int_0^t\ee^{-g^K(s)}\ee^{a^K(s)}\dd s
	+2K^{2c}\int_0^t\ee^{-g^K(s)}\ee^{a^K(s)}\int_0^s\ee^{-g^K(w)}\ee^{a^K(w)} \dd w\dd s\nonumber\\
	=&2\gauss{K^\b-1}K^c\int_0^t\ee^{-g^K(s)}\ee^{a^K(s)}\dd s +\left(K^{c}\int_0^t\ee^{-g^K(s)}\ee^{a^K(s)}\dd s\right)^2.
	\end{align}
	Together with the term $\gauss{K^\b-1}^2$ and the prefactor $\ee^{2g^K(t)}$ from \ref{eq:bdi_uKt}, this is equals to the square of $z^K_t$. Thus, the other two summands of \ref{eq:bdi_uKt} give us the desired variance. Overall, using the convergence of $g^K$ and $a^K$ as above and plugging in \ref{eq:bdi_zKt}, this yields 
	\begin{align}
	&\mathrm{Var}\left(Z^K_t\right)=u^K_t-(z^K_t)^2\nonumber\\
	&=\ee^{2g^K(t)}\left(\int_0^t\ee^{-2g^K(s)}\left(b^K(s)+d^K(s)\right)z^K_s \dd s
	+K^c \int_0^t\ee^{-2g^K(s)}\ee^{a^K(s)} \dd s \right)\nonumber\\
	&\leq\ee^{2g^K(t)}\left(\left(\bar{b}+\bar{d}\right)\int_0^t\ee^{-g^K(s)}\left(K^\b-1+K^c\int_0^s\ee^{-g^K(w)}\ee^{a^K(w)} \dd w\right) \dd s
	+K^c \int_0^t\ee^{-2g^K(s)}\ee^{a^K(s)} \dd s \right)\nonumber\\
	&\leq K^{5\d}\ee^{2r_\av t}\left(\left(\bar{b}+\bar{d}\right)\int_0^t\ee^{-r_\av s}\left(K^\b-1+K^c\int_0^s\ee^{-r_\av w}\ee^{a_\av w} \dd w\right) \dd s
	+K^c \int_0^t\ee^{-2r_\av s}\ee^{a_\av s} \dd s \right)\nonumber\\
	&=K^{5\d}\left(\left(\bar{b}+\bar{d}\right)\left(K^\b-1+\frac{K^c}{r_\av-a_\av}\right) \frac{\ee^{2r_\av t}-\ee^{r_\av t}}{r_\av}
	+K^c \left(1-\frac{\bar{b}+\bar{d}}{r_\av-a_\av}\right) \frac{\ee^{a_\av t}-\ee^{2r_\av t}}{a_\av-2r_\av} \right).
	\end{align}
\end{proof}
	
	Similar to the model without immigration (c.f.\ Lemma \ref{lem:bd_Doob}), the starting point for proving the different parts of Theorem \ref{thm:bdi_main} is an estimate on the bracket of the rescaled martingale, which will be used in combination with Doob's maximum inequality.
	
	\begin{lemma}
		\label{lem:bdi_Doob}
		Let $Z^K\sim\mathrm{BDI}\left(b^K,d^K,\b,a^K,c\right)$. Then the process
		\begin{align}
			\tilde{M}^K_t:=\ee^{-g^K(t)}\left(Z^K_t-z^K_t\right),
		\end{align}
		with $z^K_t$ being defined in \eqref{eq:bdi_expectation formula}, is a martingale.  Under the assumption that $r_\av\notin \{a_\av,2a_\av\}$, for all $T<\infty$ and all $\d>0$, there is a $K_0\in\N$ such that, for all $K\geq K_0$, it holds that
		\begin{align}
			&\Exd{\langle\tilde{M}^K\rangle_{T\ln K}}K^{-5\d}
			\leq K^c\frac{K^{(a_\av-2r_\av)T}-1}{a_\av-2r_\av}\nonumber\\
			&+\left(\bar{b}+\bar{d}\right)\left(\left(K^\b-1\right)\frac{1-K^{-r_\av T}}{r_\av} +\frac{K^c}{a_\av-r_\av}\left[\frac{K^{(a_\av-2r_\av)T}-1}{a_\av-2r_\av}-\frac{1-K^{-r_\av T}}{r_\av}\right]\right).
		\end{align}
	\end{lemma}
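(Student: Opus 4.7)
\medskip

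\noindent\textbf{Proof proposal.} The plan is to derive the martingale property by applying It\^o's product rule to the deterministic exponential $e^{-g^K(t)}$ and the semimartingale $Z^K_t-z^K_t$, and then to compute the bracket explicitly using the representation of $\langle M^K\rangle_t$ already established in \eqref{eq:bdi_mrtgl_decomp}.

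First, by \eqref{eq:bdi_mrtgl_decomp} and the ODE for the expectation used in Lemma \ref{lem:bdi_expectation_formula}, namely $\dot z^K_t=r^K(t)z^K_t+K^c e^{a^K(t)}$, the centred process satisfies
\begin{equation*}
d(Z^K_t-z^K_t)=dM^K_t+r^K(t)(Z^K_t-z^K_t)\,dt.
\end{equation*}
Since $t\mapsto e^{-g^K(t)}$ is deterministic with derivative $-r^K(t)e^{-g^K(t)}$, the product rule gives
\begin{equation*}
d\tilde M^K_t=-r^K(t)e^{-g^K(t)}(Z^K_t-z^K_t)\,dt+e^{-g^K(t)}dM^K_t+r^K(t)e^{-g^K(t)}(Z^K_t-z^K_t)\,dt=e^{-g^K(t)}dM^K_t,
\end{equation*}
so $\tilde M^K_t=\int_0^t e^{-g^K(s)}dM^K_s$ is a stochastic integral against a martingale, hence a martingale.

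Second, by It\^o's isometry (equivalently, by the structure of the compensator of the two Poisson measures),
\begin{equation*}
\langle\tilde M^K\rangle_t=\int_0^t e^{-2g^K(s)}d\langle M^K\rangle_s=\int_0^t e^{-2g^K(s)}\bigl[(b^K(s)+d^K(s))Z^K_s+K^c e^{a^K(s)}\bigr]ds.
\end{equation*}
Taking expectations, bounding $b^K(s)+d^K(s)\le \bar b+\bar d$, and splitting yields
\begin{equation*}
\Exd{\langle\tilde M^K\rangle_{T\ln K}}\le (\bar b+\bar d)\int_0^{T\ln K}\!e^{-2g^K(s)}z^K_s\,ds+K^c\int_0^{T\ln K}\!e^{-2g^K(s)+a^K(s)}\,ds.
\end{equation*}

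Third, for the time-averaging: Lemmas \ref{lem:bd_conv_expectation} and \ref{lem:bdi_conv_imm_rate} give, for $K$ large, $|g^K(s)-r_\av s|\le\delta\ln K$ and $|a^K(s)-a_\av s|\le\delta\ln K$ uniformly on $[0,T\ln K]$, and Lemma \ref{lem:bdi_expectation_formula} controls $z^K_s$ by a factor $K^{3\delta}$ around the explicit expression $e^{r_\av s}(K^\b-1)+\tfrac{K^c}{a_\av-r_\av}(e^{a_\av s}-e^{r_\av s})$. Substituting and collecting the multiplicative errors (two from $e^{-2g^K}$, one from $e^{a^K}$, three from $z^K_s$) into an overall factor $K^{5\delta}$, the problem reduces to the elementary integrals
\begin{equation*}
\int_0^{T\ln K}e^{-r_\av s}ds=\frac{1-K^{-r_\av T}}{r_\av},\qquad \int_0^{T\ln K}e^{(a_\av-2r_\av)s}ds=\frac{K^{(a_\av-2r_\av)T}-1}{a_\av-2r_\av},
\end{equation*}
whose finiteness is exactly what the hypotheses $r_\av\neq 0$ and $a_\av\neq 2r_\av$ guarantee. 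Combining these with the split above reproduces the stated upper bound term by term.

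The only delicate point is the bookkeeping of the $K^\delta$-factors: since $z^K_s$ itself carries a $K^{\pm 3\delta}$ error, multiplying against $e^{-2g^K}$ (another $K^{2\delta}$) produces the $K^{5\delta}$ prefactor claimed, and one must check that the averaging of the piecewise constant rates $r^K,\tilde a^K$ does not introduce additional oscillating contributions beyond this polynomial envelope. This is handled exactly as in the proof of Lemma \ref{lem:bd_conv_expectation}, where one absorbs the transient $T_*$-contributions into $o(\ln K)=O(\lambda_K)$ corrections that are swallowed by any fixed $\delta>0$ once $K$ is sufficiently large.
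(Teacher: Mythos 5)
Your proposal is correct and follows the paper's proof essentially step for step: product rule/It\^o to get $d\tilde M^K_t=e^{-g^K(t)}dM^K_t$, It\^o's isometry for the bracket, bounding $b^K+d^K\le\bar b+\bar d$, substituting the time-averaged asymptotics of $g^K$, $a^K$ and the formula for $z^K_s$ from Lemma~\ref{lem:bdi_expectation_formula}, and evaluating the two elementary integrals. One small prose wobble: your parenthetical tally of the $K^{\pm\delta}$ errors (two from $e^{-2g^K}$, one from $e^{a^K}$, three from $z^K_s$) sums to six, but since $e^{a^K}$ and $z^K_s$ appear in separate summands of the integrand (rather than multiplying each other), the dominant term only carries $2+3=5$ of them, which is why $K^{5\delta}$ suffices, exactly as you conclude.
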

	\begin{proof}
		Using Itô's formula and the martingale decomposition \eqref{eq:bdi_mrtgl_decomp}, we get
		\begin{align}
		\dd \tilde{M}^K_t=\ee^{-g^K(t)}\dd M^K_t,
		\end{align}
		which yields that $\tilde{M}^K$ is a martingale.
		Moreover, an application of Itô's isometry gives
		\begin{align}
			\dd\langle\tilde{M}^K\rangle_t
			=\ee^{-2g^K(t)}\dd\langle M^K\rangle_t
			=\ee^{-2g^K(t)}\left(\left(b^K(t)+d^K(t)\right)Z^K_t+K^c\ee^{a^K(t)}\right)\dd t.
		\end{align}
		Thus,
		\begin{align}
			\Exd{\langle\tilde{M}^K\rangle_{T\ln K}}
			=\int_0^{T\ln K}\ee^{-2g^K(t)}\left(\left(b^K(t)+d^K(t)\right)z^K_t+K^c\ee^{a^K(t)}\right)\dd t,
		\end{align}
		and finally, using Lemma \ref{lem:bdi_expectation_formula} and the asymptotics of $g^K$, we obtain, for $K$ large enough,
		\begin{align}
			&\int_0^{T\ln K}\ee^{-2g^K(t)}z^K_t\dd t
			\leq K^{5\d}\int_0^{T\ln K}\ee^{-r_\av t}\left(K^c\frac{\ee^{(a_\av-r_\av)t}-1}{a_\av-r_\av}+(K^\b-1)\right)\dd t,
		\end{align}
	which yields the claimed estimate.
	\end{proof}
	
	Now we can start checking the convergence of Theorem \ref{thm:bdi_main} in the cases without extinction or newly emerging populations.
	
	\begin{lemma}
		\label{lem:bdi_positive_beta}
		The claim of Theorem \ref{thm:bdi_main} holds true for $c\leq\b,\b>0$ and all $T<\infty$ such that
		\begin{align}
			\inf_{t\in[0,T]}(\b+r_\av t)\vee(c+a_\av t)>0.
		\end{align}
	\end{lemma}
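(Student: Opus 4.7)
The plan is to combine Doob's $L^2$ maximum inequality for the rescaled martingale $\tilde M^K$ of Lemma~\ref{lem:bdi_Doob} with the expectation asymptotics of Lemma~\ref{lem:bdi_expectation_formula}, adapting the argument of Lemma~\ref{lem:bd_conv_beta_r_pos} to account for immigration. Combining Lemma~\ref{lem:bdi_expectation_formula} with Lemmas~\ref{lem:bd_conv_expectation} and~\ref{lem:bdi_conv_imm_rate} first yields
\begin{align}
\frac{\ln z^K_{s\ln K}}{\ln K}\xrightarrow[K\to\infty]{}(\b+r_\av s)\vee(c+a_\av s)=\bar\b_s,
\end{align}
uniformly in $s\in[0,T]$. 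The $\vee(c+r_\av s)$ term in the definition of $\bar\b_s$ is absorbed since $c\le\b$, and the $\vee 0$ is trivial under the hypothesis $\inf_s\bar\b_s>0$. I treat the generic case $r_\av\notin\{0,a_\av,a_\av/2\}$ directly; the exceptional parameter values are handled by analogous explicit computations or a continuity argument.

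The new difficulty, absent in Lemma~\ref{lem:bd_conv_beta_r_pos}, is that the bracket bound of Lemma~\ref{lem:bdi_Doob} contains a term of log-exponent $c+(a_\av-2r_\av)T$, which may exceed $2\b$ when $a_\av>2r_\av$ and $T$ is large; a single global Doob inequality with threshold $K^\eta$, $\eta<\b$, therefore need not suffice. To remedy this, I subdivide $[0,T]$ into subintervals $[s_j,s_{j+1}]$ of mesh $\Delta>0$ (to be fixed small) and apply the strong Markov property at each $s_j\ln K$. The restarted process is a BDI process with initial log-order $\bar\b_{s_j}$ and immigration log-rate $c+a_\av s_j\le\bar\b_{s_j}$, so on the short window of length $\Delta\ln K$, Lemma~\ref{lem:bdi_Doob} bounds the bracket log-exponent by $\bar\b_{s_j}+(a_\av-2r_\av)_+\Delta+O(\delta)$. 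Setting the Doob threshold exponent $\eta_j:=\bar\b_{s_j}-\delta'-(r_\av-a_\av)_+\Delta$ for a small fixed $\delta'>0$, a direct case analysis on $\bar\b_s-\bar\b_{s_j}$ (depending on whether the native or the immigration regime dominates at $s_j$ and at $s$) shows that for $\Delta,\delta,\delta'$ small enough, $2\eta_j$ strictly exceeds the bracket exponent \emph{and} $r_\av(s-s_j)+\eta_j<\bar\b_s$ for all $s\in[s_j,s_{j+1}]$. Doob's inequality then gives a vanishing probability of $K^{\eta_j}$-excursions on each subinterval, and a union bound over the $\lceil T/\Delta\rceil$ subintervals controls the total fluctuations.

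On the resulting high-probability event, the fluctuation $\ee^{g^K(s\ln K)}|\tilde M^K_{s\ln K}|$ has log-exponent strictly below $\bar\b_s$, so combining with the mean convergence above gives the claimed uniform convergence of $\ln(1+Z^K_{s\ln K})/\ln K$ to $\bar\b_s$. The main technical obstacle is the careful propagation of the $O(\delta)$ and $o(1)$ error terms across the $\lceil T/\Delta\rceil$ subintervals, while simultaneously maintaining both the Doob feasibility $2\eta_j>\bar\b_{s_j}+(a_\av-2r_\av)_+\Delta$ and the fluctuation gap $\eta_j<\bar\b_s-r_\av(s-s_j)$; this is precisely where the positivity hypothesis $\inf_s\bar\b_s>0$ enters essentially, keeping $\bar\b_{s_j}$ uniformly bounded away from~$0$.
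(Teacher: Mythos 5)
Your proposal is correct in spirit but takes a genuinely different route from the paper's proof. The paper proceeds by a case analysis (its Step~3): in most parameter regimes a single global application of Doob's $L^2$ maximum inequality to $\tilde M^K$ over all of $[0,T\ln K]$ suffices, provided one can choose $\eta$ with
\begin{align}
\tfrac12\bigl[\b\vee(\b-r_\av T)\vee(c+(a_\av-2r_\av)T)\bigr]<\eta<\b ;
\end{align}
for the case $\b=c$, $a_\av>r_\av$ this window can close, and the paper instead applies a Dellacherie--Meyer-type $L^1$ maximum inequality to the exponentially weighted supermartingale $\ee^{-(a_\av-r_\av)t}\tilde M^K_t$ (its Step~2); the two remaining regimes are then handled by a \emph{single} restart at a carefully chosen $T_1$ between the crossing time $t^*$ and the critical time $T^*$, gluing Step~1 on $[0,T_1]$ with Step~2 on $[T_1,T]$ via the Markov property. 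Your proposal replaces all of this by a uniform subdivision of $[0,T]$ into $\lceil T/\Delta\rceil$ pieces of fixed mesh $\Delta$, restarting at every node $s_j$ and applying the $L^2$ Doob inequality of Lemma~\ref{lem:bdi_Doob} on each short piece. Because the bracket log-exponent only grows by $O(\Delta)$ within a piece and $\inf_s\bar\b_s>0$ keeps the threshold $\eta_j$ bounded away from zero, the Doob feasibility condition $2\eta_j>\text{bracket exponent}$ and the fluctuation gap $r_\av(s-s_j)+\eta_j<\bar\b_s$ can both be met on every piece; a union bound over the finitely many (independent of $K$) pieces closes the argument. What this buys is a single uniform tool (Doob $L^2$) in place of the paper's case distinction and auxiliary supermartingale inequality; what it costs is more bookkeeping --- in particular, the restarted initial condition $Z^K_{s_j\ln K}$ is random and must be sandwiched between bounding BDI processes with deterministic initial exponents $\bar\b_{s_j}\pm\ve_j$ exactly as the paper does at its single restart time $T_1$, and the errors $\ve_j$ must be propagated across subintervals, which you correctly flag as the main technical point. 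One small quantitative slip: the bracket log-exponent on a piece starting at $s_j$ is $\bar\b_{s_j}+\bigl[(-r_\av)_+\vee(a_\av-2r_\av)_+\bigr]\Delta+O(\d)$, so for $r_\av<0$ there is a $\lvert r_\av\rvert\Delta$ contribution that your estimate $\bar\b_{s_j}+(a_\av-2r_\av)_+\Delta+O(\d)$ omits; this does not affect the viability of the argument since it is still $O(\Delta)$.
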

	\begin{proof}
		We split the proof into several steps. First, in step 1, we apply Doob's maximum inequality for the rescaled martingales $\tilde{M}^K$ to prove the convergence in most of the possible cases. Next, in step 2, we make use of another maximum inequality to check the convergence for some other cases. Finally, in step 3, we go through a case distinction of all the possible scenarios of the lemma and explain the strategy of glueing together parts 1 and 2 with the help of the Markov property to cover the remaining cases.
		
		\textbf{Step 1:}
			Let us consider the case where we can find an $\h$ such that
			\begin{align}
			\label{eq:bdi_first_constraint}
				\frac{1}{2}\left[\b\vee (\b-r_\av T)\vee (c+(a_\av-2r_\av)T)\right]
					<\h<\b.
			\end{align}
			Then, applying Doob's maximum inequality to $\tilde{M}^K$ and adapting the computations of case 1(b) of the proof of \cite[Lem.\ B.1]{ChMeTr19} one obtains the desired convergence.
			
		\textbf{Step 2:}
			Let us now consider the specific case of $\b=c$ and $a_\av>r_\av$. Then we can apply the maximum inequality of \cite[Ch. ~VI.1.2. ~p. ~66]{DelMey82} to the supermartingales $\left(\ee^{-\left(a_\av t-r_\av t\right)}\tilde{M}^K_t\right)_{t\geq 0}$ to obtain, for $K$ large enough,
			\begin{align}
			\label{eq:bdi_max_ineq}
				\Prob{\sup_{t\leq T\ln K}\ee^{-a_\av t}\abs{Z^K_t-z^K_{t}}>K^\h}
				&=\Prob{\forall t\leq T\ln K: \ee^{-\left(a_\av t-r_\av t\right)}\abs{\tilde{M}^K_t}>\ee^{r_\av t-g^K(t)}K^\h} \nonumber\\
				&\leq\Prob{\sup_{t\leq T\ln K} \ee^{-\left(a_\av t-r_\av t\right)}\abs{\tilde{M}^K_t} >K^{\h-\d}} \nonumber\\
				&\leq 3 K^{-\h+\d} \sup_{t\leq T\ln K} \ee^{-\left(a_\av t-r_\av t\right)} \Exd{\langle\tilde{M}^K\rangle_t}^{\frac{1}{2}} \nonumber\\
				&\leq C K^{-\h+\d} \sup_{s\leq T} K^{-\left(a_\av -r_\av \right)s+\frac{1}{2}(\b\vee(\b-r_\av s)\vee(c+(a_\av-2r_\av)s))}.
			\end{align}
			To conclude, we again proceed as in proof step 2 of \cite[Lem.\ B.1]{ChMeTr19}.
		
		\textbf{Step 3:}
			Now we can check whether all cases are covered. Under the assumptions of the lemma, i.e.\ $\b>0,c\leq\b,T\in(0,\infty)$ and $\inf_{s\in[0,T]}(\b+r_\av s)\vee(c+a_\av s)>0$, we see that the constraint \eqref{eq:bdi_first_constraint} of step 1, which is equivalent to
			\begin{align}
				r_\av >-\frac{\b}{T}\quad\text{and}\quad c+(a_\av-2r_\av)T<2\b,
			\end{align}
			holds true, if
			\begin{enumerate}[(i)]
				\item $r_\av\geq 0, a_\av\leq 2r_\av$,
				\item $r_\av\geq 0, a_\av>2r_\av$ and $T<T^*:=(2\b-c)/(a_\av-2r_\av)$,
				\item $r_\av<0, a_\av\leq r_\av$,
				\item $r_\av<0, a_\av>r_\av, c+a_\av\b/\abs{r_\av}\leq 0$ (in this case $T< \b/|r_\av|$),
				\item $r_\av<0, a_\av>r_\av, c+a_\av\b/\abs{r_\av}>0$ and $T<T^*$ (in this case $T^*\leq \b/|r_\av|$).
			\end{enumerate}
			Hence the only remaining cases are (see Figure \ref{fig:lemB13})
			\begin{itemize}
				\item $r_\av\geq 0, a_\av>2r_\av$ and $T>T^*:=(2\b-c)/(a_\av-2r_\av)$,
				\item $r_\av<0, a_\av>r_\av, c+a_\av\b/\abs{r_\av}>0$ and $T>T^*$.
			\end{itemize}
			The strategy for these is the following: Take $t^*<T_1<T^*$,  where $t^*:=(\b-c)/(a_\av-r_\av)\geq0$ is the first time when $(s\mapsto\b+r_\av s)$ crosses $(s\mapsto c+a_\av s)$. Now apply step 1 on the interval $[0,T_1]$ to get the desired convergence up to $T_1$. In particular, $1+Z^K_{T_1\ln K}\approx K^{c+a_\av T_1\pm\eps}$ for $K$ large enough. On $[T_1,T]$ we apply step 2 to the approximating processes $Z^{(K,+)},Z^{(K,-)}$, with parameters $b^\pm_i:=b_i,d^\pm_i:=d_i,a^\pm_i:=a_i$, for $i\in\dset{*,1,2}$, and $\b^\pm=c^\pm:=c+a_\av T_1 \pm\ve$. Together with the Markov property these approximations give the claim.
			\begin{figure}[h]
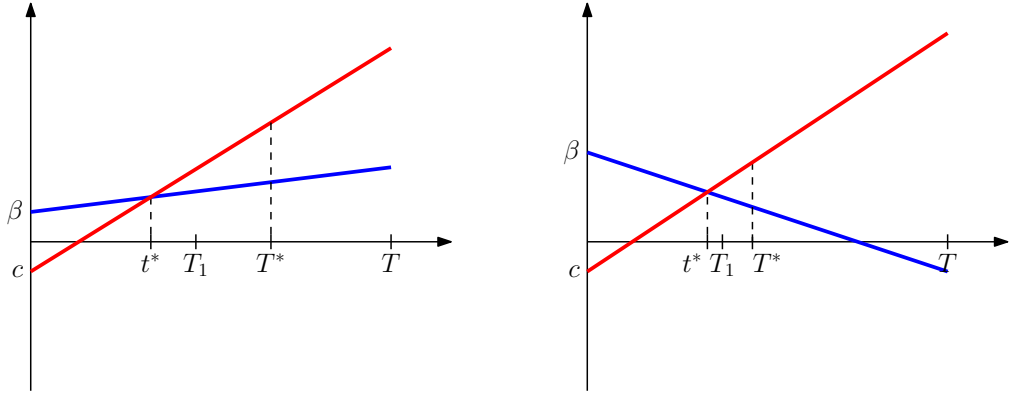

				\centering
				\includegraphics[scale=0.7,page=2]{Broken_Line_Pictures.pdf}
				\includegraphics[scale=0.7,page=3]{Broken_Line_Pictures.pdf}
				\caption{Illustration of the two remaining cases of step 3 of the proof of Lemma \ref{lem:bdi_positive_beta}.}
				\label{fig:lemB13}
			\end{figure}		
	\end{proof}

	Let us now collect some technical results on the (non-)emergence, extinction and instantaneous immigration of populations, as well as the continuity of the exponent, to complete the proof of Theorem \ref{thm:bdi_main}.

	\begin{lemma}[Non-emergence of any new population]
		\label{lem:bdi_non_emerge}
		Let $Z^K\sim\mathrm{BDI}\left(b^K,d^K,\b,a^K,c\right)$ such that $\b=0$ and $c<0$. Then, for all $T>0$ such that $c+a_\av T<0$ it holds
		\begin{align}
			\lim_{K\to\infty}\Prob{\forall s\in[0,T]: Z^K_{s\ln K}=0}=1.
		\end{align}
	\end{lemma}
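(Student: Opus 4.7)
The key observation is that $\beta = 0$ forces $Z^K_0 = \gauss{K^0 - 1} = 0$, so the process can only become positive via an immigration event. My plan is to reduce the claim to a deterministic estimate on a time-inhomogeneous Poisson intensity. While $Z^K \equiv 0$, the birth contribution $b^K(t) Z^K_{t^-}$ in the Poisson representation vanishes, so the total up-jump rate is the pure immigration rate $K^c e^{a^K(t)}$. Hence the first positive time $\tau^K := \inf\dset{t \geq 0 : Z^K_t \geq 1}$ is the first arrival of a time-inhomogeneous Poisson process with this intensity, and since $Z^K$ is cadlag, the event $\dset{Z^K_{s\ln K}=0,\ \forall s \in [0, T]}$ coincides with $\dset{\tau^K > T\ln K}$, which has probability exactly $\exp(-I_K)$ where $I_K := \int_0^{T\ln K} K^c e^{a^K(t)}\,\mathrm{d}t$.

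The remaining task is to show $I_K \to 0$ as $K \to \infty$. By Lemma \ref{lem:bdi_conv_imm_rate}, for every $\delta > 0$ and $K$ large enough, $K^c e^{a^K(s \ln K)} \leq K^{c + a_\av s + \delta}$ uniformly in $s \in [0, T]$. A change of variables $t = s \ln K$ then gives $I_K \leq (\ln K) \int_0^T K^{c + a_\av s + \delta}\, \mathrm{d}s$. The two hypotheses $c < 0$ and $c + a_\av T < 0$, combined with the fact that $s \mapsto c + a_\av s$ is affine and thus attains its maximum on $[0, T]$ at one of the endpoints, guarantee that $\rho := -\max_{s \in [0, T]}(c + a_\av s) > 0$. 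Choosing $\delta \in (0, \rho)$, the right-hand side is of order $(\ln K)\, K^{-(\rho - \delta)}$, which tends to $0$, so $\exp(-I_K) \to 1$.

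There is no real obstacle here; the only point requiring some care is the identification in the first paragraph of the up-jumps of $Z^K$ prior to $\tau^K$ with a time-inhomogeneous Poisson process of \emph{deterministic} intensity $K^c e^{a^K(t)}$, but this is immediate from the Poisson-measure construction since the birth term is proportional to $Z^K_{t^-}$ and therefore inactive as long as the process remains at $0$. In particular, the argument does not require any of the more delicate generating-function or martingale estimates used elsewhere in the appendix.
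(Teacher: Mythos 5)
Your proof is correct and is essentially the argument the paper has in mind when it refers to following the proof of Lemma~B.7 of Champagnat--M\'el\'eard--Tran: reduce the event to the first arrival of an inhomogeneous Poisson process with deterministic intensity $K^c e^{a^K(t)}$ (valid because the birth term is proportional to $Z^K_{t^-}$ and thus inactive while the process sits at $0$), and show the cumulative intensity $I_K$ vanishes. The one adaptation needed for the time-varying setting is exactly what you do, namely invoking Lemma~\ref{lem:bdi_conv_imm_rate} to replace $K^c e^{a^K(s\ln K)}$ by $K^{c+a_\av s\pm\delta}$ uniformly in $s\in[0,T]$, after which the hypotheses $c<0$ and $c+a_\av T<0$ give a negative exponent at both endpoints of the affine function and hence on all of $[0,T]$. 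No gaps.
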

	\begin{proof}
		This can be shown by following the proof of \cite[Lem.\ B.7]{ChMeTr19}.
	\end{proof}
	\begin{figure}[h]
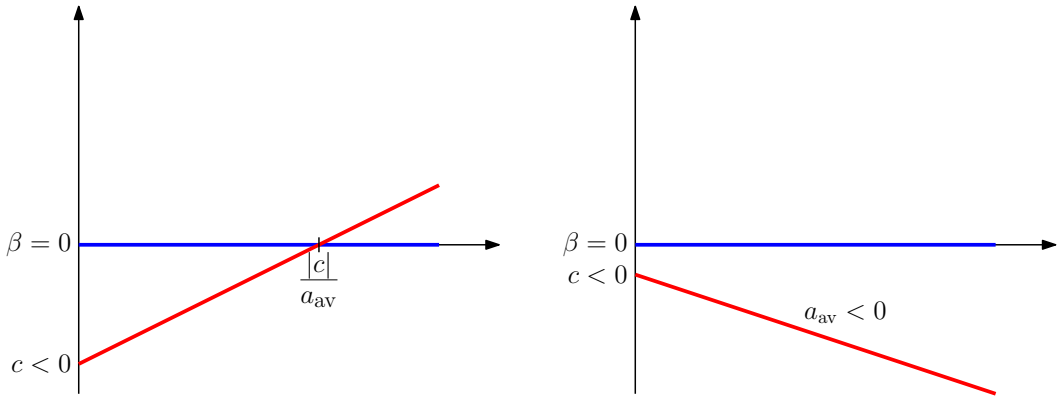

		\centering
		\includegraphics[scale=0.7,page=4]{Broken_Line_Pictures.pdf}
		\includegraphics[scale=0.7,page=5]{Broken_Line_Pictures.pdf}
		\caption{Illustrations of the two possible situations for non-emergence of any new population, as in Lemma \ref{lem:bdi_non_emerge}.}
		\label{fig:lemB14}
	\end{figure}

	\begin{lemma}[Emergence of a new population]
		\label{lem:bdi_emerge}
		Let $Z^K\sim\mathrm{BDI}\left(b^K,d^K,\b,a^K,c\right)$ such that $\b=0,c=-\ve<0$ and $a_\av>0$. Then, for all $\h>(1\vee2r_\av/a_\av)\ve$, it holds that
		\begin{align}
			\lim_{K\to\infty}\Prob{K^{\ve/2}-1\leq Z^K_{\frac{2\ve}{a_\av}\ln K}\leq K^\h-1}=1.
		\end{align}
	\end{lemma}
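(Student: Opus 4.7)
The plan is to split the argument into an upper bound and a lower bound on $Z^K_{t_K}$, where $t_K := (2\ve/a_\av)\ln K$. The upper bound will follow from Markov's inequality applied to $\Exd{Z^K_{t_K}}$ as determined in Lemma \ref{lem:bdi_expectation_formula}, while the lower bound requires an argument that uses only a short window of immigrations just before $t_K$.

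For the upper bound, plugging $\b=0$ and $c=-\ve$ into Lemma \ref{lem:bdi_expectation_formula} yields, for any $\d>0$ and $K$ large enough,
\begin{align*}
\Exd{Z^K_{t_K}} \approx \frac{K^{-\ve}}{a_\av - r_\av}\bigl(K^{2\ve} - K^{2r_\av\ve/a_\av}\bigr) K^{\pm 3\d},
\end{align*}
which is of order $K^{\max(\ve,\,(2r_\av/a_\av - 1)\ve) + O(\d)}$. A short case check (with an analogous ad hoc computation in the degenerate cases $r_\av \in \{0,a_\av\}$ where the formula of Lemma \ref{lem:bdi_expectation_formula} is not directly applicable) verifies that the hypothesis $\h > (1 \vee 2r_\av/a_\av)\ve$ implies $\h > \max(\ve,\,(2r_\av/a_\av - 1)\ve) + 3\d$ for $\d$ small enough. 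Markov's inequality then gives $\Prob{Z^K_{t_K} > K^\h - 1} \leq \Exd{Z^K_{t_K}}/(K^\h - 1) \to 0$.

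For the lower bound, I fix an auxiliary constant $T_0>0$ independent of $K$ and count only the \emph{original} immigrants (without descendants) that arrive in $[t_K-T_0,t_K]$ and that are themselves still alive at $t_K$. By the Poisson representation of $Z^K$ and standard thinning, the number $W^K$ of such survivors is a Poisson random variable with mean
\begin{align*}
\mu^K := \int_{t_K - T_0}^{t_K} K^c\,\ee^{a^K(s)} \exp\!\Bigl(-\!\int_s^{t_K}\! d^K(u)\,\dd u\Bigr) \dd s,
\end{align*}
where the inner exponential is the probability that an immigrant arriving at time $s$ has not died by $t_K$ (births do not kill the parent). Using $d^K(u)\leq\bar d$ together with the near-constancy of $K^c\ee^{a^K(s)}\approx K^\ve$ on the $O(1)$-window (which follows from Lemma \ref{lem:bdi_conv_imm_rate} and the boundedness of $|\tilde a^K|$), one obtains $\mu^K\geq C_{T_0} K^\ve$ for some $C_{T_0}>0$. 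Since $\mu^K\gg K^{\ve/2}$, Chebyshev's inequality applied to the Poisson variable $W^K$ gives $\Prob{W^K<K^{\ve/2}}\to 0$, and the bound $Z^K_{t_K}\geq W^K$ concludes the argument.

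The main obstacle is the lower bound in the supercritical regime $r_\av>a_\av$. In that case, the variance bound from Lemma \ref{lem:bdi_expectation_formula} is of larger order than $(\Exd{Z^K_{t_K}})^2$, since the mean is inflated by the exponentially amplified descendants of a small random number of early immigrants whose fluctuations dominate. A direct Chebyshev argument on the full process $Z^K_{t_K}$ therefore breaks down. The remedy is to discard all early immigrants and work only with the recent ones, which have not had time to multiply and whose count forms a well-concentrated thinned Poisson variable with mean still of the required order $K^\ve$.
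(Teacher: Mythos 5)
Your proof is correct, and it diverges from the paper's in both the upper and the lower bound, so a comparison is worth recording. For the lower bound, the paper applies Poisson thinning over the \emph{whole} interval $[0,t_K]$, weighting each arrival by the probability that its entire descendant family survives until $t_K$; the survival probability is controlled via the distribution function of the extinction time from Lemma~\ref{lem:bd_ext_time}, which in turn rests on the explicit generating function of Lemma~\ref{lem:bd_generating_function}. You instead restrict to the recent window $[t_K-T_0,t_K]$ and thin by the much simpler probability that the \emph{original immigrant} is still alive, which only needs the bound $d^K\leq\bar d$. Both routes yield a Poisson count with mean $\gtrsim K^{\ve-\d}$, so both conclude; yours is more elementary and avoids the generating-function machinery, while the paper's family-survival count is the version that generalizes to settings where one needs to track descendants (as in \cite{ChMeTr19}, which the paper is deliberately paralleling). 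For the upper bound the paper defers to \cite[Lem.\ B.8]{ChMeTr19}, whereas you give a self-contained Markov argument from the expectation formula of Lemma~\ref{lem:bdi_expectation_formula}: the dominant exponent $\max(\ve,(2r_\av/a_\av-1)\ve)$ of $\Exd{Z^K_{t_K}}$ is always $\leq(1\vee 2r_\av/a_\av)\ve<\h$, with the two expressions equal precisely when $r_\av\leq a_\av/2$, in which case the strict inequality $\h>\ve$ still leaves room to absorb the $K^{3\d}$ slack. Two small points worth noting in a final write-up: (i) the lower bound on $\mu^K$ should read $C_{T_0}K^{\ve-\d}$ for arbitrarily small $\d>0$ rather than $C_{T_0}K^{\ve}$, since Lemma~\ref{lem:bdi_conv_imm_rate} only gives the exponent up to $o(1)$ corrections, though this costs nothing against the target $K^{\ve/2}$; and (ii) the only genuinely degenerate case for your upper bound is $r_\av=a_\av$ (the extra restrictions $r_\av\neq 0$, $2r_\av\neq a_\av$ in Lemma~\ref{lem:bdi_expectation_formula} concern only the variance, which you do not use), and there a direct variation-of-constants computation gives $\Exd{Z^K_{t_K}}\asymp K^{\ve}\ln K$, which is still dominated by $K^\h$ since the hypothesis forces $\h>2\ve$ in that regime.
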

	\begin{proof}
		Let us first consider the lower bound. The number of immigrant families that arrive within the time interval $[0,(2\ve/a_\av)\ln K]$ and that survived up to time $(2\ve/a_\av)\ln K$, is, by thinning, a Poisson random variable with parameter
		\begin{align}
			\vartheta=\int_0^{\frac{2\ve}{a_\av}\ln K} K^{-\ve}\ee^{a^K(t)} \Prob{T^{\tilde{Z}^K}_\ext>\frac{2\ve}{a_\av}\ln K-t\big\vert \tilde{Z}^K_0=1}\dd t
		\end{align}
		where $\tilde{Z}^K$ is the corresponding birth death process without immigration.
		Hence we can apply the second part of Lemma \ref{lem:bd_ext_time} and bound this from below, for all $\d>0$ and $K$ large enough, 
		\begin{align}
			\vartheta
			&=\int_0^{\frac{2\ve}{a_\av}} K^{-\ve}\ee^{a^K(s\ln K)} \Prob{T^{\tilde{Z}^K}_\ext>\left(\frac{2\ve}{a_\av}-s\right)\ln K\big\vert \tilde{Z}^K_0=1} \ln K\dd s \nonumber\\
			&\geq \int_0^{\frac{2\ve}{a_\av}} K^{-\ve}K^{-\d}\ee^{a_\av s\ln K} \exp\left[-\left(M\left(\frac{2\ve}{a_\av}-s\right)+\d\right)\ln K\right] \ln K\dd s \nonumber\\
			&=K^{-\ve-2\d-M\frac{2\ve}{a_\av}} \int_0^{\frac{2\ve}{a_\av}}\ee^{\left(a_\av+M\right)s\ln K} \ln K \dd s \nonumber\\
			&=K^{-\ve-2\d-M\frac{2\ve}{a_\av}} \frac{K^{\frac{2\ve}{a_\av}\left(a_\av+M\right)}-1}{a_\av+M} \nonumber\\
			&\geq \frac{1}{2}\frac{1}{a_\av+M} K^{-\ve-2\d-M\frac{2\ve}{a_\av}+\frac{2\ve}{a_\av}\left(a_\av+M\right)} \nonumber\\
			&=C K^{\ve-2\d}.
		\end{align}
		Therefore, taking $\d>0$ small enough, we have
		\begin{align}
			\lim_{K\to\infty}\Prob{Z^K_{\frac{2\ve}{a_\av}\ln K} \geq K^{\frac{\ve}{2}}}=1
		\end{align}
		and can continue as in the proof of \cite[Lem.\ B.8]{ChMeTr19}.
	\end{proof}
	\begin{figure}[h]
		\centering
		\includegraphics[scale=0.7,page=6]{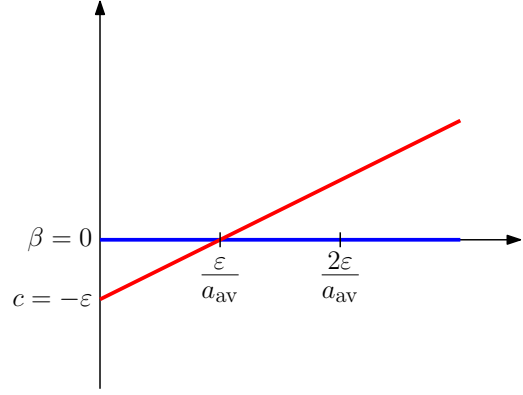}
		\caption{Illustration of the situation for emergence of a new population, as in Lemma \ref{lem:bdi_emerge}.}
		\label{fig:lemB15}
	\end{figure}	

	\begin{lemma}[Continuity of the exponent]
		\label{lem:bdi_continuity}
		Let $Z^K\sim\mathrm{BDI}\left(b^K,d^K,\b,a^K,c\right)$ such that $c\leq\b$. Then there exists a constant $\bar{c}=\bar{c}(\bar{b},\bar{d},a_\av)$ such that, for all $\ve>0$, it holds that
		\begin{align}
			\lim_{K\to\infty}\Prob{\forall s\in[0,\ve]: K^{\b-\bar{c}\ve}-1\leq Z^K_{s\ln K}\leq K^{\b+\bar{c}\ve}-1}=1.
		\end{align}
	\end{lemma}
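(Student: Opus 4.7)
The plan is to sandwich $Z^K$ between two monotone processes whose behaviour can be controlled by a single-time moment estimate, which then automatically gives uniformity over $s\in[0,\ve]$. Throughout, I will use that on the finite time window $[0,\ve\ln K]$ the rate functions $b^K, d^K$ are bounded by $\bar b,\bar d$ and, by an argument analogous to Lemma \ref{lem:bd_conv_expectation} applied to $a^K$, there is a constant $\bar a = \bar a(a_1,\dots,a_\ell,a_{*,1},\dots,a_{*,\ell},T_1,\dots,T_\ell)$ and a $K_0$ such that for all $K\ge K_0$,
\begin{equation*}
\sup_{t\in[0,\ve\ln K]} K^c\,\ee^{a^K(t)} \;\le\; K^{c+\bar a\ve}.
\end{equation*}

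\textbf{Upper bound.} I would dominate $Z^K$ by the pure-birth-with-constant-immigration process $Y^K$ with per-individual birth rate $\bar b$, immigration rate $K^{c+\bar a\ve}$, and $Y^K_0=Z^K_0$. Using a standard Poisson-measure coupling (as in Section \ref{app:B.1_pureBDproc}), $Z^K_t\le Y^K_t$ for all $t\in[0,\ve\ln K]$. Since $Y^K$ is monotone non-decreasing, $\sup_{s\in[0,\ve]} Y^K_{s\ln K}= Y^K_{\ve\ln K}$. An elementary computation (analogous to Lemma \ref{lem:bdi_expectation_formula} with constant rates) yields
\begin{equation*}
\Exd{Y^K_{\ve\ln K}}\;=\;\lfloor K^\b-1\rfloor\ee^{\bar b\ve\ln K}+\frac{K^{c+\bar a\ve}}{\bar b}\left(\ee^{\bar b\ve\ln K}-1\right)\;\le\;2\,K^{\b+(\bar b+\bar a)\ve},
\end{equation*}
where the last inequality uses $c\le\b$. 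Markov's inequality then gives, for any $\bar c_1>\bar b+\bar a$,
\begin{equation*}
\Prob{\sup_{s\in[0,\ve]} Z^K_{s\ln K}\;\ge\; K^{\b+\bar c_1\ve}} \;\le\; \Prob{Y^K_{\ve\ln K}\ge K^{\b+\bar c_1\ve}} \;\le\; 2K^{(\bar b+\bar a-\bar c_1)\ve}\;\xrightarrow{K\to\infty}\;0.
\end{equation*}

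\textbf{Lower bound.} If $\b=0$ the bound $K^{\b-\bar c\ve}-1\le 0$ is trivial since $Z^K\ge 0$, so assume $\b>0$. I would forget both births and immigration and keep only the initial $\lfloor K^\b-1\rfloor$ individuals, letting $D^K_t$ denote the number of those that are still alive at time $t$ in the coupled process $Z^K$. Each such individual has time-inhomogeneous death rate at most $\bar d$, so by thinning $D^K_t$ stochastically dominates $\mathrm{Bin}(\lfloor K^\b-1\rfloor,\ee^{-\bar d t})$ and $Z^K_t\ge D^K_t$ pointwise. Crucially $D^K$ is monotone non-increasing, so $\inf_{s\in[0,\ve]} Z^K_{s\ln K}\ge D^K_{\ve\ln K}$. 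From $\Exd{D^K_{\ve\ln K}}\ge (K^\b-2)K^{-\bar d\ve}$ and $\mathrm{Var}(D^K_{\ve\ln K})\le K^\b$, Chebyshev's inequality gives, for $\ve$ small enough that $\b-2\bar d\ve>\b/2$,
\begin{equation*}
\Prob{D^K_{\ve\ln K}\le \tfrac12 K^{\b-\bar d\ve}} \;\le\; \frac{4K^\b}{K^{2\b-2\bar d\ve}} \;=\; 4K^{2\bar d\ve-\b}\;\xrightarrow{K\to\infty}\;0,
\end{equation*}
which, after absorbing the factor $1/2$ into a slightly larger exponent, yields $\inf_{s\in[0,\ve]} Z^K_{s\ln K}\ge K^{\b-\bar c_2\ve}$ with probability tending to $1$, for some $\bar c_2>\bar d$.

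\textbf{Conclusion.} Setting $\bar c:=\max(\bar c_1,\bar c_2)$ and taking a union bound of the two events above proves the claim, after observing that we can freely decrease the right-hand/increase the left-hand side by $1$ at the cost of a harmless shift. The only case requiring a small extra argument is when $\bar d\ve$ is not strictly less than $\b/2$; in that regime one simply enlarges $\bar c$ so that the inequality $K^{\b-\bar c\ve}-1\le 0$ holds, and the lower bound becomes vacuous. The main technical point is the reduction to monotone bounding processes: once that is in place, the rest is standard first- and second-moment control and the subtleties of the inhomogeneous rates and the changing-environment averaging play no role on this short $\ve\ln K$ window.
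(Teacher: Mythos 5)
Your argument is correct, and it takes a genuinely different and somewhat more elementary route than the paper, which simply cites Lemma~B.9 of \cite{ChMeTr19} adapted via Lemma~\ref{lem:bdi_conv_imm_rate}; that reference proof proceeds via the martingale decomposition and a Doob-type maximal inequality, similar in spirit to the other lemmas in this appendix. Your key structural move is to sandwich $Z^K$ between two \emph{monotone} auxiliary processes (a pure-birth process with constant immigration from above, and the thinned set of initial individuals from below), so that the supremum and infimum over $s\in[0,\ve]$ collapse to a single endpoint $s=\ve$, after which plain first- and second-moment bounds (Markov for the upper bound, Chebyshev for the binomial lower bound) suffice. This avoids any maximal inequality entirely and makes the role of the constants $\bar b$, $\bar d$, $a_\av$ transparent. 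The edge cases are handled adequately: for $\b=0$ the lower bound is vacuous, and when $2\bar d\ve\ge\b$ the choice $\bar c\ge 2\bar d$ makes $\b-\bar c\ve\le0$ so the lower bound is again vacuous; the factor of $2$ and the $1/\bar b$ in the expectation bound are harmless since the exponent in $K^{(\bar b+\bar a-\bar c_1)\ve}$ is strictly negative. One small imprecision worth tightening is the definition of $\bar a$: Lemma~\ref{lem:bdi_conv_imm_rate} gives, for $K$ large, $\sup_{s\le\ve}K^c\ee^{a^K(s\ln K)}\le K^{c+a_\av^+\ve+\d}$ for any $\d>0$; to keep $\bar a$ a constant depending only on $a_\av$ as the lemma requires, one should pick $\d\le\ve$ (say $\d=\ve$) so that $\bar a:=a_\av^++1$ works. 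With that fixed, the proof is complete and arguably cleaner than the one sketched in the paper.
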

	\begin{proof}
		Using the estimate of Lemma \ref{lem:bdi_conv_imm_rate} one can adapt the proof of \cite[Lem.\ B.9]{ChMeTr19}.
	\end{proof}

	\begin{lemma}[Extinction]
		\label{lem:bdi_extinction}
		Let $Z^K\sim\mathrm{BDI}\left(b^K,d^K,\b,a^K,c\right)$ such that $r_\av<0$.
		\begin{enumerate}[(a)]
			\item If in addition $c<0$ and $c+a_\av\b/\abs{r_\av}<0$, then, for all $0<\h<T$ with the property $c+a_\av(\b/\abs{r_\av}+T)<0$, it holds that
			\begin{align}
				\lim_{K\to\infty}\Prob{\forall s\in \left[\frac{\b}{\abs{r_\av}}+\h,\frac{\b}{\abs{r_\av}}+T\right]:Z^K_{s\ln K}=0}=1.
			\end{align}
			\item If in addition $a_\av<0$ and $c+a_\av\b/\abs{r_\av}>0$, then, for all $0<\h<T$, we have
			\begin{align}
				\lim_{K\to\infty}\Prob{\forall s\in \left[\frac{c}{\abs{a_\av}}+\h,\frac{c}{\abs{a_\av}}+T\right]:Z^K_{s\ln K}=0}=1.
			\end{align}
			
		\end{enumerate}
	\end{lemma}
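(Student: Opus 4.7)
The plan is to adapt the strategy used for Lemma \ref{lem:bdi_non_emerge} (itself modelled on \cite[Lem.\ B.10]{ChMeTr19}) by decomposing the event $\{Z^K_{s\ln K}=0\ \forall s\in[T^\star+\h,T^\star+T]\}$ into three more tractable pieces, where $T^\star:=\b/|r_\av|$ in case (a) and $T^\star:=c/|a_\av|$ in case (b). I would work with the three sufficient events: (i) every descendant of the $\lfloor K^\b-1\rfloor$ initial individuals is extinct by time $(T^\star+\h)\ln K$; (ii) every family seeded by an immigrant arriving during $[0,(T^\star+\h)\ln K]$ is extinct at $(T^\star+\h)\ln K$; (iii) no immigration event occurs during $[(T^\star+\h)\ln K,(T^\star+T)\ln K]$. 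Under the conjunction of (i)--(iii) the process is identically $0$ throughout the target interval, so a union bound reduces matters to showing that each event has probability tending to $1$.

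For event (i), I would apply the upper bound \eqref{eq:bd_ext_upper} of Lemma \ref{lem:bd_ext_time} to each initial individual: for any $\d_1,\d_2>0$ and $K$ large enough, a single family survives past $(T^\star+\h)\ln K$ with probability at most $K^{(r_\av+\d_1)(T^\star+\h)+\d_2}$, so a union bound controls the failure probability of (i) by $K^{\b+(r_\av+\d_1)(T^\star+\h)+\d_2}$. In case (a), $\b+r_\av T^\star=0$ and the exponent reduces to $-|r_\av|\h+O(\d)<0$; in case (b), the hypothesis $c+a_\av\b/|r_\av|>0$ rewrites as $|r_\av|c/|a_\av|>\b$, giving $\b+r_\av T^\star=\b-|r_\av|c/|a_\av|<0$.

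For event (ii), I would thin the Poisson immigration measure by the conditional survival probability of each seeded family up to $(T^\star+\h)\ln K$. Using Lemma \ref{lem:bdi_conv_imm_rate} and \eqref{eq:bd_ext_upper} and substituting $t=s\ln K$, the expected number of still-living immigrant families is bounded by
\begin{align*}
\ln K\int_0^{T^\star+\h}K^{c+a_\av s+(r_\av+\d_1)(T^\star+\h-s)+\d_2+o(1)}\,\dd s.
\end{align*}
The integrand's exponent is affine in $s$ with slope $a_\av-r_\av-\d_1$, so its supremum on $[0,T^\star+\h]$ is attained at an endpoint; a case distinction on the sign of $a_\av-r_\av$, combined with the standing hypotheses, should certify strict negativity of this supremum, upon which Markov's inequality concludes. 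Finally, event (iii) holds because the expected number of immigrations in $[(T^\star+\h)\ln K,(T^\star+T)\ln K]$ is bounded, up to a factor $\ln K$, by $\int_{T^\star+\h}^{T^\star+T}K^{c+a_\av s+o(1)}\,\dd s$, and the standing hypotheses ($c+a_\av(\b/|r_\av|+T)<0$ combined with $c+a_\av\b/|r_\av|<0$ in case (a); $a_\av<0$ together with $c+a_\av(c/|a_\av|+\h)=-|a_\av|\h<0$ in case (b)) keep the integrand exponent uniformly on the negative side.

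The hardest step will be the case analysis in (ii): the critical instant for survival of an immigrant's lineage sits either at the start (when $a_\av<r_\av$) or at the end (when $a_\av>r_\av$) of the immigration window, so the hypotheses on $(c,a_\av,\b,r_\av)$ must be packaged differently in each sub-case to rule out a positive exponent. Here the case (b) hypothesis $c+a_\av\b/|r_\av|>0$ (opposite to that of case (a)) is precisely what rescues the sign at the endpoint that would otherwise dominate, and keeping track of the $o(1)$ corrections from Lemmas \ref{lem:bd_ext_time} and \ref{lem:bdi_conv_imm_rate} without eating into the strict inequalities is the step most prone to bookkeeping errors.
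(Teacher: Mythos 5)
Your decomposition into (i) extinction of the initial bulk, (ii) extinction of all immigrant families seeded before $(T^\star+\h)\ln K$, and (iii) absence of further immigration, controlled respectively by a union bound plus \eqref{eq:bd_ext_upper}, a Poisson-thinning plus \eqref{eq:bd_ext_upper}, and a direct first-moment bound on the immigration measure, is exactly the route the paper takes: it points to \cite[Lem.\ B.10]{ChMeTr19} and flags Lemma~\ref{lem:bd_ext_time} as the ingredient that must replace the time-homogeneous extinction estimate. So the strategy is the same, and you have correctly identified the new input.

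There is, however, a concrete problem in the way you close the case analysis for (ii) in part~(b). You assert that the hypothesis $c+a_\av\b/|r_\av|>0$ ``rescues the sign at the endpoint that would otherwise dominate.'' Unwinding this at $s=0$: with $T^\star=c/|a_\av|$ the exponent there is
\begin{align}
c+r_\av(T^\star+\h)+O(\d)=c\left(1-\frac{|r_\av|}{|a_\av|}\right)-|r_\av|\h+O(\d),
\end{align}
and the hypotheses of (b) alone do not make this negative when $|a_\av|>|r_\av|$ (take $\b=0$, $c=1$, $a_\av=-10$, $r_\av=-1$: all hypotheses of (b) hold, yet the exponent at $s=0$ is $0.9-\h>0$ for small $\h$, and indeed Theorem~\ref{thm:bdi_main} with $c>\b$ gives $\bar\b_s=(1-s)\vee0$, which is strictly positive at $s=T^\star+\h$, contradicting the claimed conclusion). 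What actually kills the $s=0$ endpoint is the additional constraint $c\le\b$, which in the proof of Theorem~\ref{thm:bdi_main} is always in force when Lemma~\ref{lem:bdi_extinction} is invoked: together with $c/|a_\av|>\b/|r_\av|$ it forces $|r_\av|>|a_\av|$, i.e.\ $a_\av>r_\av$, so the slope $a_\av-r_\av-\d_1$ is positive and the maximum sits at $s=T^\star+\h$ where the exponent is $-|a_\av|\h+O(\d)<0$. You should make that implicit hypothesis explicit (or phrase the conclusion as ``under the additional condition $c\le\b$, which holds in the application''); as written, the affine-exponent argument does not terminate in case~(b) without it.
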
	
	\begin{proof}
		The proof follows the lines of the one for \cite[Lem.\ B.10]{ChMeTr19}. We emphazise, that the results of Lemma \ref{lem:bd_ext_time} on the extinction time are crucial for the processes studied in the present paper.
	\end{proof}
	\begin{figure}[h]
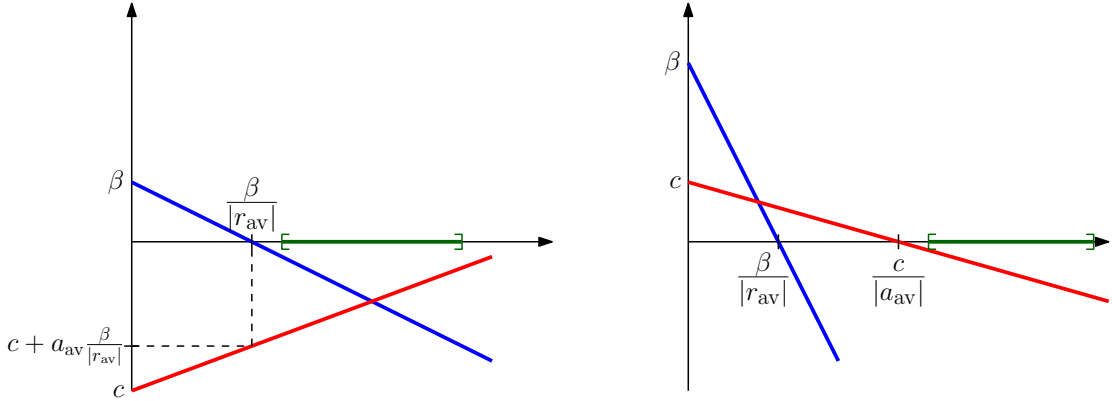

		\centering
		\includegraphics[scale=0.7,page=7]{Broken_Line_Pictures.pdf}
		\includegraphics[scale=0.7,page=8]{Broken_Line_Pictures.pdf}
		\caption{Illustration of cases (a) and (b) for extinction in Lemma \ref{lem:bdi_extinction}.}
		\label{fig:lemB17}
	\end{figure}

	\begin{lemma}[Instantaneous immigration]
		\label{lem:bdi_instantanous_immigration}
		Let $Z^K\sim\mathrm{BDI}\left(b^K,d^K,\b,a^K,c\right)$ such that \linebreak$0\leq\b<c$. Then, for all $\ve>0$ and all $\bar{a}>\abs{r_\av}\vee\abs{a_\av}$ it holds that
		\begin{align}
			\lim_{K\to\infty}\Prob{Z^K_{\ve\ln K}\in [K^{c-\bar{a}\ve},K^{c+\bar{a}\ve}]}=1.
		\end{align}
	\end{lemma}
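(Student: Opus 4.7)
The plan is to establish this via direct moment computations, using Lemma~\ref{lem:bdi_expectation_formula} to pin down the mean $z^K_{\ve\ln K}$ and combining it with Chebyshev's inequality applied to the variance bound from the same lemma. The intuition is that since the immigration rate $K^c \mathrm{e}^{a^K(t)}$ dominates over the initial mass $K^\beta$ (because $c>\beta$), the population is effectively a sum of a huge number of independent immigrant clans, whose total concentrates tightly around its expectation.

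First I would analyze $z^K_{\ve\ln K}=\E[Z^K_{\ve\ln K}]$ via the asymptotic formula of Lemma~\ref{lem:bdi_expectation_formula}. At $t=\ve\ln K$, the term $\mathrm{e}^{r_\av t}(K^\beta-1)=K^{\beta+r_\av\ve}$ is negligible compared to the immigration term $\frac{K^c}{a_\av-r_\av}[\mathrm{e}^{a_\av t}-\mathrm{e}^{r_\av t}]$, because $c>\beta$ and one can absorb any polylog prefactor into the $K^{\pm 3\delta}$ slack by choosing $\delta$ small. A short case analysis on the signs and relative order of $r_\av$ and $a_\av$ (both nonnegative, both nonpositive, or of mixed sign, with the sub-cases $a_\av\gtrless r_\av$) shows that in each situation $z^K_{\ve\ln K}$ is of the form $K^{c+\gamma\ve+o(1)}$ with exponent $\gamma\in[-(|r_\av|\vee|a_\av|),\,|r_\av|\vee|a_\av|]$. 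Since $\bar a>|r_\av|\vee|a_\av|$ strictly, for $K$ large enough $z^K_{\ve\ln K}$ lies comfortably in $[K^{c-(\bar a-\eta)\ve},K^{c+(\bar a-\eta)\ve}]$ for some small $\eta>0$.

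Next I would apply Chebyshev's inequality to control the deviation from the mean. The variance bound in Lemma~\ref{lem:bdi_expectation_formula} yields, after the same case analysis, $\mathrm{Var}(Z^K_{\ve\ln K})\leq K^{c+2(r_\av\vee a_\av\vee 0)\ve+o(1)}$, while $(z^K_{\ve\ln K})^2\geq K^{2c+2(r_\av\vee a_\av)\ve+o(1)}$. The crucial point is that in every sign configuration the ratio $\mathrm{Var}/(z^K)^2$ is of order $K^{-c+o(1)}$, which vanishes because $c>\beta\geq 0$. Thus
\begin{align}
\P\!\left(\left|\tfrac{Z^K_{\ve\ln K}}{z^K_{\ve\ln K}}-1\right|>\tfrac12\right)
\leq \frac{4\,\mathrm{Var}(Z^K_{\ve\ln K})}{(z^K_{\ve\ln K})^2}\xrightarrow{K\to\infty}0,
\end{align}
and the claimed two-sided bound follows by combining this concentration with the deterministic envelope on the mean obtained in the first step.

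The main technical obstacle is the set of degenerate parameter configurations excluded by Lemma~\ref{lem:bdi_expectation_formula}, namely $r_\av\in\{0,a_\av,a_\av/2\}$, where the denominators in the closed-form expressions blow up. I would handle these by a coupling/monotonicity argument: replace the problematic parameters by $r_\av\pm\kappa$ for an arbitrarily small $\kappa>0$, construct coupled BDI processes dominating/being dominated by $Z^K$ (using the standard Poisson-representation coupling of Section~\ref{sec:4.1_proof_stability}), apply the non-degenerate version of the result to these processes, and then let $\kappa\to 0$; since $\bar a>|r_\av|\vee|a_\av|$ strictly, a sufficiently small perturbation preserves the inequality and the two envelopes pinch down to the desired window. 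A secondary subtlety is that the $K^{\pm 3\delta}$ slack in Lemma~\ref{lem:bdi_expectation_formula} must be chosen after fixing $\ve$ and $\bar a$, so that $3\delta<(\bar a-|r_\av|\vee|a_\av|)\ve$; this has to be done carefully to keep the mean inside the target interval.
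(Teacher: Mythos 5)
Your overall strategy -- first-moment estimate to locate $z^K_{\ve\ln K}$ at order $K^{c+(r_\av\vee a_\av)\ve}$, then Chebyshev via the variance bound, then a perturbation/coupling argument to dispatch the degenerate parameter configurations $r_\av\in\{0,a_\av,a_\av/2\}$ -- is essentially the moment method the paper invokes by pointing to Lemma~B.4 of \cite{ChMeTr19}. The one structural difference worth noting is that the paper explicitly invokes Lemma~\ref{lem:bd_expectation_formula} (the mean of a \emph{pure} BD process), which signals a decomposition of $Z^K_t$ into independent immigrant clans (each a pure BD process) plus the initial clan, with concentration coming from the independence of the clans; you instead work directly with the BDI moment formulas of Lemma~\ref{lem:bdi_expectation_formula}. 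Both are legitimate and contain the same ingredients, so this is a packaging choice rather than a genuinely different argument.

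There is, however, an imprecision in the variance step that you should not leave in this form. You assert that in every sign configuration $\mathrm{Var}(Z^K_{\ve\ln K})/(z^K_{\ve\ln K})^2 = K^{-c+o(1)}$ and that this vanishes because $c>\b\geq 0$. Working through the variance bound of Lemma~\ref{lem:bdi_expectation_formula} at $t=\ve\ln K$ gives instead
\begin{align}
\frac{\mathrm{Var}\bigl(Z^K_{\ve\ln K}\bigr)}{\bigl(z^K_{\ve\ln K}\bigr)^2}
\lesssim K^{-c-(r_\av\vee a_\av)\,\ve+o(1)},
\end{align}
so when $r_\av\vee a_\av\ge 0$ the exponent is at most $-c$ and your conclusion holds, but when $r_\av\vee a_\av<0$ the exponent is $-c+|r_\av\vee a_\av|\ve$, and vanishing of the ratio requires $c+(r_\av\vee a_\av)\ve>0$. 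This is exactly the regime where $z^K_{\ve\ln K}$ is still diverging; outside it the mean drops below $1$, the population is zero with non-vanishing probability, and the lemma's target interval $[K^{c-\bar a\ve},K^{c+\bar a\ve}]$ cannot be attained. Since this lemma is invoked only on a short initial interval $[0,\ve]$ in the proof of Theorem~\ref{thm:bdi_main}, the argument is fine for the regime that actually matters, but the blanket claim ``of order $K^{-c+o(1)}$'' should be replaced by the $\ve$-dependent exponent above, and the constraint $c+(r_\av\vee a_\av)\ve>0$ should be stated explicitly as the condition under which the Chebyshev step closes. The perturbation/coupling treatment of the degenerate cases is sound: increasing $b_i$ (resp.\ $d_i$) gives monotone dominating (resp.\ dominated) BDI couplings, a small perturbation moves $r_\av$ off the bad set $\{0,a_\av,a_\av/2\}$, and the strict inequality $\bar a>|r_\av|\vee|a_\av|$ leaves room to absorb the perturbation.
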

	\begin{proof}
		With the help of Lemma \ref{lem:bd_expectation_formula}, this is verified as for \cite[Lem.\ B.4]{ChMeTr19}.
	\end{proof}
	
	Having all tools at hand, we can now prove the main convergence result step by step. 	
	
	\begin{proof}[Proof of Theorem \ref{thm:bdi_main}]
	The previous lemmas cover all the different scenarios of Theorem \ref{thm:bdi_main} either directly or in combination, under application of the Markov property and the continuity result in Lemma \ref{lem:bdi_continuity}. For convenience, we provide a summary of how to treat each case:\\
	First, we check all the different cases with $c\leq\b$.
	\begin{itemize}
		\item If $\b=0,c<0$ and $a_\av\leq0$, we apply Lemma \ref{lem:bdi_non_emerge}.
		\item If $\b=0,c<0$ and $a_\av>0$, we apply Lemma \ref{lem:bdi_non_emerge} up to time $(\abs{c}-\ve)/a_\av$. On $[(\abs{c}-\ve)/a_\av,(\abs{c}+\ve)/a_\av]$, we use Lemma \ref{lem:bdi_emerge}, and finally, Lemma \ref{lem:bdi_positive_beta} on \linebreak$[(\abs{c}+\ve)/a_\av,T]$. 
		\item If $\b>0$, the case $r_\av\geq 0$ is already checked in Lemma \ref{lem:bdi_positive_beta}.
		\item If $\b>0,r_\av<0$ and $a_\av\geq 0$ such that $c+a_\av\b/\abs{r_\av}>0$, we can also apply Lemma \ref{lem:bdi_positive_beta}.
		\item If $\b>0,r_\av<0$ and $a_\av> 0$ such that $c+a_\av\b/\abs{r_\av}<0$, we use Lemma \ref{lem:bdi_positive_beta} on $[0,\b/\abs{r_\av}-\ve]$. After that time, an application of Lemma \ref{lem:bdi_extinction} leads to extinction until $(\abs{c}-\ve)/a_\av$. On $[(\abs{c}-\ve)/a_\av,(\abs{c}+\ve)/a_\av]$, we use Lemma \ref{lem:bdi_emerge}, and finally, Lemma \ref{lem:bdi_positive_beta} on $[(\abs{c}+\ve)/a_\av,T]$. 
		\item If $\b>0$ and either $(r_\av<0,a_\av<0)$ or $(r_\av<0,a_\av=0,c<0)$, we use Lemma \ref{lem:bdi_positive_beta} on $[0,\b/\abs{r_\av}\lor c/|a_\av|-\ve]$ or $[0,\b/\abs{r_\av}-\ve]$, respectively. After that time, an application of Lemma \ref{lem:bdi_extinction} leads to extinction.
		\item Cases $(\b=c=0, a_\av>0)$, $(\b=c=0, a_\av\leq0, r_\av\leq0)$, $(\b>0,r_\av<0,c=a_\av=0)$, as well as $(\b>0,r_\av<0,a_\av>0,c<0,\b/\abs{r_\av}=\abs{c}/a_\av)$ can be treated using comparisons.
	\end{itemize}
	Finally, for $c>\b$, we apply Lemma \ref{lem:bdi_instantanous_immigration} on $[0,\ve]$ and then continue with the previous results.
	\end{proof}

\section{Phase of invasion}
\label{app:C}
	In contrast to the results of Appendix \ref{app:B_branching_processes}, in this chapter we focus on birth and death processes with many individuals on shorter but still divergent time horizons. These restrictions allow for relatively strong  bounds on the growth of these processes.
	The results are used in the proof of the main theorem of this paper to control the mutant population when it gets almost macroscopic, to ensure that this happens in the right (fit) phase.

	\begin{lemma}
		\label{lem:growth_in_lambda_time}
		Let $Z^K$ be birth death processes with constant parameters $b,d\geq 0$ and $r=b-d$. Moreover, for $q\in(0,\infty)$ and $\a\in\R$, assume initial values $Z^K_0=q\ee^{\a\lK}K$. Then, for all $T\geq 0$ and all $\g\in(0,1)$,
		\begin{enumerate}[(a)]
			\item $\Prob{Z^K_{T\lK}<p\ee^{rT\lK}Z^K_0}
			=o(K^{-\g})\overset{K\to\infty}{\longrightarrow}0$,\qquad for $p\in(0,1)$,
			\item $\Prob{Z^K_{T\lK}>p\ee^{rT\lK}Z^K_0}
			=o(K^{-\g})\overset{K\to\infty}{\longrightarrow}0$,\qquad for $p\in(1,\infty)$.
		\end{enumerate}
	\end{lemma}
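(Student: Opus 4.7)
The plan is to use Chebyshev's inequality, exploiting three simple facts about a linear birth-death process $Z^K$ with constant rates: (i) its mean satisfies $\E[Z^K_t] = Z^K_0\,\ee^{rt}$; (ii) its variance is explicitly computable; and (iii) the initial condition $Z^K_0 = qe^{\a\lK}K$ is of order $K^{1+o(1)}$, since $\lK \ll \ln K$ gives $\ee^{c\lK} = K^{o(1)}$ for every fixed constant $c$. Since the fluctuations of a branching process around its mean are of standard-deviation type, and the mean is of order $K^{1+o(1)}$, multiplicative deviations of order $1$ should have probability $K^{-1+o(1)} = o(K^{-\g})$ for every $\g\in(0,1)$.

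First, I would record the standard formula for the variance of a linear birth-death process (derived by computing $\E[(Z^K_t)^2]$ from the generator):
\begin{align}
\mathrm{Var}(Z^K_t) = Z^K_0\,\frac{b+d}{r}\,\ee^{rt}\bigl(\ee^{rt}-1\bigr) \quad (r\neq0), \qquad \mathrm{Var}(Z^K_t) = Z^K_0(b+d)t \quad (r=0),
\end{align}
and derive the uniform bound
\begin{align}
\frac{\mathrm{Var}(Z^K_{T\lK})}{(\E[Z^K_{T\lK}])^2} \;\leq\; \frac{C(1+T\lK)}{Z^K_0}\,\ee^{|r|T\lK},
\end{align}
valid for every sign of $r$, with $C=C(b,d,r)$ finite. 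For $r\geq 0$ the exponential factor is not even needed (the ratio is bounded by $(b+d)/(rZ^K_0)$), while for $r<0$ it arises from $1-\ee^{-rt}$. This is the only mild case analysis required.

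Then for part (a), the event $\{Z^K_{T\lK} < p\ee^{rT\lK}Z^K_0\}$ equals $\{\E[Z^K_{T\lK}] - Z^K_{T\lK} > (1-p)\E[Z^K_{T\lK}]\}$, and Chebyshev yields
\begin{align}
\P\bigl(Z^K_{T\lK} < p\ee^{rT\lK}Z^K_0\bigr) \;\leq\; \frac{\mathrm{Var}(Z^K_{T\lK})}{(1-p)^2(\E[Z^K_{T\lK}])^2} \;\leq\; \frac{C(1+T\lK)}{(1-p)^2\,qK}\,\ee^{(|r|T-\a)\lK}.
\end{align}
Part (b) is identical with $(p-1)$ replacing $(1-p)$. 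Since $\lK \ll \ln K$, the right-hand side is
\begin{align}
K^{-1}\cdot K^{(|r|T-\a)\lK/\ln K}\cdot(1+T\lK) \;=\; K^{-1+o(1)},
\end{align}
which is $o(K^{-\g})$ for every $\g\in(0,1)$, completing both claims.

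The only step that requires any care is verifying the variance bound uniformly in the three regimes $r>0$, $r<0$, $r=0$; everything else is a direct substitution. I expect no genuine obstacle, as the result is really a soft corollary of the fact that a branching process is well concentrated around its mean whenever the initial population is much larger than the time horizon in an appropriate sense, and here $Z^K_0 = K^{1+o(1)}$ while $T\lK = K^{o(1)}$.
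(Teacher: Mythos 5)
Your proposal is correct and is essentially the same argument as the paper's: the variance formula you invoke for a linear birth-death process is exactly what the paper computes as $\E[\langle\hat{M}^K\rangle_{T\lK}]$ for the rescaled martingale $\hat{M}^K_t=\ee^{-rt}Z^K_t-Z^K_0$, and your application of Chebyshev at the single time $T\lK$ gives the same $K^{-1+o(1)}$ bound that the paper obtains from Doob's $L^2$ maximal inequality on $\sup_{t\le T\lK}|\hat{M}^K_t|$ (which only differs by a constant factor, and is more than the endpoint statement requires). The case analysis for the sign of $r$, the absorption of $\ee^{(|r|T-\a)\lK}$ into $K^{o(1)}$ via $\lK\ll\ln K$, and the conclusion $o(K^{-\g})$ all match the paper's computation.
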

	
	\begin{proof}
		For this proof we make use of the rescaled martingale $\hat{M}^K_t=\ee^{-rt}Z^K_t-Z^K_0$. Then $\dd \hat{M}^K_t=\ee^{-rt}\dd M^K_t$ and thus $\dd\langle\hat{M}^K\rangle_t=\ee^{-2rt}\dd\langle M^K\rangle_t$ (cf.\ \eqref{eq:bd_bracket} and corresponding discussion). Therefore, in the case of $r\neq 0$, we can compute
		\begin{align}
			\Exd{\langle\hat{M}^K\rangle_t}
			&=\int_0^t\ee^{-2rs}\left(b+d\right)\Exd{Z^K_s}\dd s \nonumber\\
			&=\left(b+d\right)Z^K_0\int_0^t\ee^{-rs}\dd s \nonumber\\
			&=\left(b+d\right)Z^K_0\frac{1-\ee^{-rt}}{r}.
		\end{align}
		An application of Doob's maximum inequality \cite[Thm. ~3.87]{BovHol15} yields that, for $0<p<1$,
		\begin{align}
			\Prob{Z^K_{T\lK}<p\ee^{rT\lK}Z^K_0}
			&=\Prob{\ee^{-rT\lK}Z^K_{T\lK}-Z^K_0<-(1-p)Z^K_0} \nonumber\\
			&\leq\Prob{\sup_{t\leq T\lK}\abs{\ee^{-rt}Z^K_t-Z^K_0}>(1-p)Z^K_0} \nonumber\\
			&=\Prob{\sup_{t\leq T\lK}\abs{\hat{M}^K_t}>(1-p)Z^K_0} \nonumber\\
			&\leq (1-p)^{-2}\left(Z^K_0\right)^{-2}\Exd{\langle\hat{M}^K\rangle_{T\lK}} \nonumber\\
			&=C\ee^{-\a\lK}K^{-1}\frac{\abs{\ee^{-rT\lK}-1}}{\abs{r}}\nonumber\\
			&\leq\tilde{C}\ee^{-\a\lK}K^{-1}\ee^{\abs{r}T\lK} \nonumber\\
			&=\tilde{C}\exp\left(-\a\lK-\ln K+\abs{r}T\lK\right) \nonumber\\
			&\leq\tilde{C}\exp\left(-\g\ln K\right)
			\overset{K\to\infty}{\longrightarrow}0.
		\end{align}
		The last inequality is true for every $\g\in(0,1)$ and $K$ large enough since, in the limit of large $K$, one has $(\abs{r}T-\a)\lK/\ln K<1-\g$ in the limit of large $K$. 
		
		If we now consider $r=0$, we see that in this case
		\begin{align}
			\Exd{\langle\hat{M}^K\rangle_t}=\left(b+d\right)Z^K_0 t.
		\end{align}
		Plugging this into the above estimate of the probability leads to an even better bound.
		
		For $p>1$, we adapt the first calculations as
		\begin{align}
			\Prob{Z^K_{T\lK}>p\ee^{rT\lK}Z^K_0}
			&=\Prob{\ee^{-rT\lK}Z^K_{T\lK}-Z^K_0>-(1-p)Z^K_0} \nonumber\\
			&\leq\Prob{\sup_{t\leq T\lK}\abs{\ee^{-rt}Z^K_t-Z^K_0}>(p-1)Z^K_0},
		\end{align}
		from where we can continue as before.
	\end{proof}

	\begin{remark}
		Note that we do not only prove the convergence to zero here but also determine the speed of convergence, which is  $o(K^{-\g})$, for all $\g\in(0,1)$.
	\end{remark}

	\begin{corollary}
		\label{cor:growth_in_lambda_time}
		Let $Z^K$ be birth death process with time-dependent rates $b^K, d^K$ as introduced in Appendix \ref{app:B.1_pureBDproc} and recall that $g^K(t)=\int_0^tr^K(s)\dd s$, where $r^K(s)=b^K(s)-d^K(s)$ is the net growth rate. For initial values $Z_0^K=\ve^2K$, for all $0<p_1<1<p_2$, all $T\geq 0$, and all $\g\in(0,1)$, we obtain
		\begin{align}
			\Prob{p_1\ee^{g^K(t)}Z^K_0<Z^K_t<p_2\ee^{g^K(t)}Z^K_0\quad\forall t\in[0,T\lK]}=1-o(K^{-\g})\overset{K\to\infty}{\longrightarrow}1.
		\end{align}
	\end{corollary}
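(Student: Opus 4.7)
The plan is to iterate Lemma~\ref{lem:growth_in_lambda_time} over the finitely many sub-intervals on which the rates $b^K,d^K$ are constant, and then telescope the resulting bounds. Fix $T<\infty$ and $0<p_1<1<p_2$. Because $b^K,d^K$ are piecewise constant with at most $2\ell$ pieces per period of length $T^\S_\ell\lK$, the interval $[0,T\lK]$ decomposes into $N=O(1)$ sub-intervals $[t_{k-1},t_k]$, $k=1,\dots,N$, on each of which the rates are constant with net growth $\tilde r_k$, so that $g^K(t_k)-g^K(t_{k-1})=\tilde r_k(t_k-t_{k-1})$. Setting $q_i:=p_i^{1/N}$ for $i=1,2$, I would proceed by induction on $k$ to establish, with failure probability $o(K^{-\g})$, the uniform bound
\begin{align*}
q_1\,\ee^{\tilde r_k(t-t_{k-1})}Z^K_{t_{k-1}}<Z^K_t<q_2\,\ee^{\tilde r_k(t-t_{k-1})}Z^K_{t_{k-1}}\qquad\text{for all }t\in[t_{k-1},t_k].
\end{align*}

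The inductive step uses the strong Markov property at $t_{k-1}$: conditionally on $\cal F_{t_{k-1}}$ and on the good event up to time $t_{k-1}$, the shifted process is a birth death process with constant rates $(\tilde b_k,\tilde d_k)$ started from $Z^K_{t_{k-1}}$, which by the inductive hypothesis is of the form $\tilde q\,\ee^{\tilde\a\lK}K$ with $\tilde q\in[q_1^{k-1}\ve^2,q_2^{k-1}\ve^2]$ and $\tilde\a=g^K(t_{k-1})/\lK$ uniformly bounded in $K$. Although Lemma~\ref{lem:growth_in_lambda_time} states only endpoint control, its Doob-maximum-inequality proof in fact bounds $\sup_{t\in[t_{k-1},t_k]}|\ee^{-\tilde r_k(t-t_{k-1})}Z^K_t-Z^K_{t_{k-1}}|$; a union of the lower and upper deviations with constants $q_1,q_2$ yields conditional failure probability $o(K^{-\g})$, uniformly in the realisations of $Z^K_{t_{k-1}}$ inside the inductive range. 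A union bound over the $N=O(1)$ sub-intervals preserves the $o(K^{-\g})$ rate, and iterating the per-step inequality telescopes to
\begin{align*}
q_1^k\,\ee^{g^K(t)}Z^K_0<Z^K_t<q_2^k\,\ee^{g^K(t)}Z^K_0\qquad\text{for }t\in[t_{k-1},t_k],
\end{align*}
which, because $k\leq N$ and $q_1<1<q_2$, lies inside $(p_1\ee^{g^K(t)}Z^K_0,p_2\ee^{g^K(t)}Z^K_0)$, as required.

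The main subtlety I anticipate is the handling of the re-equilibration sub-phases of constant length $T_{*,i}$, where the effective time parameter in the lemma is $T_{*,i}/\lK\to 0$ rather than a fixed positive constant. A quick inspection of the Doob estimate in the proof of Lemma~\ref{lem:growth_in_lambda_time} confirms it remains $o(K^{-\g})$ here: the growth factor $\ee^{|\tilde r_k|T_{*,i}}$ is a fixed constant, while the starting size $Z^K_{t_{k-1}}\gtrsim K^{1-o(1)}$ (since $\lK\ll\ln K$ forces $|\tilde\a|\lK/\ln K\to 0$), so the lemma extends to these short sub-phases without modification. The only real bookkeeping care in the write-up is to track the constants $\tilde q,\tilde\a$ uniformly along the induction, so that the per-step failure bound is genuinely $o(K^{-\g})$ independently of $k$, and the final union bound across the $O(1)$ sub-intervals preserves the overall rate.
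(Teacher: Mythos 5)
Your proposal is correct and takes essentially the same approach as the paper, which simply says to iterate Lemma~\ref{lem:growth_in_lambda_time} over the piecewise-constant sub-intervals and apply the Markov property at the breakpoints. Your write-up supplies the details the paper omits — in particular, that the Doob maximum inequality inside Lemma~\ref{lem:growth_in_lambda_time} already gives supremum control (needed for the \emph{for-all-$t$} claim), the choice $q_i=p_i^{1/N}$ to telescope through the $O(1)$ sub-intervals, and the observation that the short $T_{*,i}$-length re-equilibration sub-phases cause no difficulty because the time factor $\ee^{|\tilde r_k|T_{*,i}}$ is $O(1)$ while the starting size stays of order $K^{1-o(1)}$.
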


	\begin{proof}
		The corollary can be deduced easily by iterative application of Lemma \ref{lem:growth_in_lambda_time}, combined with the Markov property at the times $T^\S_i$.
	\end{proof}

\bibliographystyle{abbrv}

\end{document}